\newtheorem{definition}{Definition}\newtheorem{theorem}{Theorem}\newtheorem{lemma}{Lemma}\newtheorem{proposition}{Proposition}
\newcommand{\email}[1]{\url{#1}}
\let\Paragraph=\paragraph \renewcommand{\paragraph}[1]{\Paragraph{{#1}.}}
\newcommand{\supp}{\operatorname{\mathsf{Supp}}}
\newcommand*{\Disc}{\mathsf{Disc}}
\newcommand*{\cross}{\times}
\newcommand*{\cons}{{:}}
\newcommand*{\restrict}[2]{\lfloor{#1}\rfloor_{#2}}
\newcommand*{\diff}[2]{\Delta({#1},{#2})}
\newcommand*{\last}{\mathsf{last}}
\newcommand*{\set}[2]{\{\,{#1}~|~{#2}\,\}}
\newcommand*{\st}{\,\mathrm{s.t.}\,}
\newcommand*{\range}[1]{\operatorname{\mathsf{range}({#1})}}
\newcommand*{\supseqeq}{\sqsupseteq}
\newcommand*{\supseq}{\sqsupset}
\newcommand*{\e}[1]{\exp({#1})}
\newcommand*{\lift}[1]{\mathcal{L}({#1})}
\newcommand*{\emptybag}{\mbox{$\{\!\!\{\}\!\!\}$}}
\newcommand*{\bag}[1]{\mathsf{bag}({#1})}
\newcommand*{\bagunion}{\uplus}
\newcommand*{\bigbagunion}{\biguplus}
\newcommand*{\argmax}{\mbox{$\mathrm{arg}\max$}}
\newcommand*{\model}[1]{\llbracket {#1} \rrbracket}
\newcommand*{\mathsc}[1]{\text{\textsc{{#1}}}}
\newcommand{\rel}{\mathtt{rel}}
\newcommand*{\rels}{\mathtt{Rels}}
\newcommand*{\nil}{\mathtt{nil}}
\newcommand*{\exparamM}{M_{\mathsf{ex1}}}
\newcommand*{\exparamL}{L_{\mathsf{ex1}}}
\newcommand*{\excsM}{M_{\mathsf{ex2}}}
\newcommand*{\m}{m}
\newcommand*{\M}{M}
\newcommand*{\K}{K}
\DeclareMathOperator*{\transop}{\rightarrow}
\newcommand*{\trans}{\transop\limits}
\DeclareMathOperator*{\transopone}{\rightarrow_1}
\newcommand*{\transone}{\transopone\limits}
\DeclareMathOperator*{\transoptwo}{\rightarrow_2}
\newcommand*{\transtwo}{\transoptwo\limits}
\DeclareMathOperator*{\transopthree}{\rightarrow_3}
\newcommand*{\transthree}{\transopthree\limits}
\DeclareMathOperator*{\wtransop}{\Rightarrow}
\newcommand*{\wtrans}{\wtransop\limits}
\title{Formal Verification of Differential Privacy for Interactive Systems\thanks{This work was partially supported by the U.S. Army Research Office contract on Perpetually Available and Secure Information Systems (DAAD19-02-1-0389) to Carnegie Mellon CyLab, the NSF Science and Technology Center TRUST, the NSF CyberTrust grant ``Privacy, Compliance and Information Risk in Complex Organizational Processes,'' and the AFOSR MURI ``Collaborative Policies and Assured Information Sharing.''}}
\author{Michael Carl Tschantz\\
Computer Science Department\\
Carnegie Mellon University\\
5000 Forbes Avenue\\
Pittsburgh, PA 15213\\
\email{mtschant@cs.cmu.edu}
\and Dilsun Kaynar\\
CyLab\\
Carnegie Mellon University\\
5000 Forbes Avenue\\
Pittsburgh, PA 15213\\
\email{dilsunk@cmu.edu}
\and Anupam Datta\\
CyLab\\
Carnegie Mellon University\\
5000 Forbes Avenue\\
Pittsburgh, PA 15213\\
\email{danupam@cmu.edu}}
\begin{document}

\maketitle

\begin{abstract}
Differential privacy is a promising approach to privacy preserving data analysis with a well-developed theory for functions.  Despite recent work on implementing systems that aim to provide differential privacy, the problem of formally verifying that these systems have differential privacy has not been adequately addressed.  This paper presents the first results towards automated verification of source code for differentially private interactive systems.  We develop a formal probabilistic automaton model of differential privacy for systems by adapting prior work on differential privacy for functions.  The main technical result of the paper is a sound proof technique based on a form of probabilistic bisimulation relation for proving that a system modeled as a probabilistic automaton satisfies differential privacy.  The novelty lies in the way we track quantitative privacy leakage bounds using a relation family instead of a single relation.  We illustrate the proof technique on a representative automaton motivated by PINQ, an implemented system that is intended to provide differential privacy.  To make our proof technique easier to apply to realistic systems, we prove a form of refinement theorem and apply it to show that a refinement of the abstract PINQ automaton also satisfies our differential privacy definition.  Finally, we begin the process of automating our proof technique by providing an algorithm for mechanically checking a restricted class of relations from the proof technique.
\end{abstract}

\section{Introduction}
\label{sec:intro}

\newcommand{\cut}[1]{}

\paragraph{Differential Privacy}
Differential privacy is a promising approach to privacy-preserving data analysis (see~\cite{d08survey,Dwork10} for surveys).  This work is motivated by statistical data sets that contain personal information about a large number of individuals (e.g., census or health data).  In such a scenario, a trusted party collects personal information from a representative sample with the goal of releasing statistics about the underlying population while simultaneously protecting the privacy of the individuals.  
In an interactive setting, an untrusted data examiner poses queries that the trusted party evaluates over the data set and appropriately modifies to protect privacy before sending the result to the examiner.  Differential privacy formalizes this operation in terms of a probabilistic \emph{sanitization function} that takes the data set as input.  
Differential privacy requires that the probability of producing an output should not change much irrespective of whether information about any particular individual is in the data set or not.  The amount of change is measured in terms of a \emph{privacy leakage bound}---a non-negative real number $\epsilon$, where a smaller $\epsilon$ indicates a higher level of privacy.
The insight here is that since only a limited amount of additional privacy risk is incurred by joining a data set, individuals may decide to join the data set if there are societal benefits from doing so (e.g., aiding cancer research).  A consequence and strength of the definition is that the privacy guarantee holds irrespective of the auxiliary information and computational power available to an adversary.  
Previous work on algorithms for sanitization functions and the analysis of these algorithms in light of the trade-offs between privacy and utility (answering useful queries accurately without compromising privacy) has provided firm foundations for differential privacy (e.g.~\cite{dmns06calibrating,d06differential,mt07mechanism,nrs07smooth,blr08,d08survey,d09differential,grs09universally,Dwork10,DNPR10}).  

In a different direction, these sanitization algorithms are being implemented for inclusion in data management systems.  For example, \textsc{pinq} resembles a \textsc{sql} database, but instead of providing the actual answer to \textsc{sql} queries, it provides the output of a differentially private sanitization function operating on the actual answer~\cite{m09privacy}.  Another such system, \textsc{airavat}, manages distributed data and performs MapReduce computations in a cloud computing environment while using differential privacy as a basis for declassifying data in a mandatory access control framework~\cite{rrsksw10airavat}.  Both of these are interactive systems that use sanitization functions as a component: they interact with both the providers of sensitive data and untrusted data examiners, store the data, and perform computations on the data some of which apply sanitization functions. Even if we assume that these systems correctly implement the sanitization functions to give differential privacy, this is not sufficient to conclude that the guarantees of differential privacy apply to the system as a whole. For the differential privacy guarantee of functions to scale to the whole of the implemented system, the system must properly handle the sensitive data and never provide channels through which untrusted examiners can infer information about it without first sanitizing it to the degree dictated by the privacy error bound.

\paragraph{Formal Methods for Differential Privacy}
We work toward reconciling formal analysis techniques with the growing body of work on abstract frameworks or implemented systems that use differential privacy as a building block.  While prior work in the area has provided a type system for proving that a non-interactive program is a differentially private sanitization function~\cite{rp10distance}, we know of no formal methods for proving that an interactive system using such functions has differential privacy.  Applying formal methods to interactive systems ensures that these systems properly manage their data bases and interactions with untrusted users.

Formal verification that an interactive system provides privacy requires that the system be modeled in such a way that the correspondence between the system and model is evident and the model includes all relevant behavior of the system. 
Once formal verification is done on the model, one can assert with the confidence afforded by formal proofs that the system as implemented and modeled preserves privacy in addition to knowing that the algorithms implemented by the system preserve privacy.
For formal verification to scale to large programs with complex models, the creation of the model and the verification of its privacy must be mechanized, preferably in a compositional manner.  

To this end, we present an automaton model for which the correspondence between the automaton and the implementation of a system is so plainly evident that the automaton could be automatically extracted from source code as is done with model checking~\cite{cgp00model}.  
For this model, we introduce a form of compositional reasoning that allows us to separate the proof that a function gives differential privacy from the proof that the system correctly uses that function.
Furthermore, we present a proof technique for such models that is amenable to mechanization and an algorithm that can be used to check that the proof technique is correctly applied to a model.  

Our effort can be likened to those efforts in the security community that involve the development of formal models for cryptographic protocols and the accompanying verification methods~\cite{st07,backes07reactive,CCKLLPS08}.  These works use stylized proofs with multiple levels of abstraction and compositionality to enable scaling mechanical checking of these proofs to the size of realistic systems.  Making these proofs shorter or more readable for humans than their informal counterparts is not a goal.

\paragraph{Contributions}
We work with a special class of probabilistic I/O automata that allow us to model interactive systems in terms of states and probabilistic transitions between states. These automata provide us with the needed expressive power for modeling how data is stored in an internal state of an implementation, and how it is updated through computations, some of which apply differentially private sanitization functions on data. In Section~\ref{sec:autom}, we present this probabilistic automaton model and our differential privacy definition for probabilistic automata, which we call \emph{differential noninterference} due to the similarities it has with the information flow property \emph{noninterference}~\cite{gm82security}.  Indeed, when applied to interactive systems, both differential privacy and noninterference privacy aim at restricting information leakage about sensitive data by requiring that the system produces similar outputs for inputs that differ only in sensitive data.  However, differential privacy allows for the degree of similarity to decrease as the inputs diverge, making it a more flexible requirement.

As formal methods can only scale to large systems with compositional reasoning, in Section~\ref{sec:compo}, we examine the ability to perform compositional reasoning with our formal model.  We show that correctness proof of sanitization functions may be separated from the correctness proof of the system that uses them.

Our main technical contribution, presented in Section~\ref{sec:unwinding}, is a \emph{proof technique} for establishing that a system has differential noninterference. Our technique allows the global property of differential noninterference to be proved from local information about transitions between states. This proof technique was inspired by the \emph{unwinding} proof technique originally developed for proving that a system has noninterference~\cite{gm84unwinding}.

Our unwinding technique is also similar to bisimulation-based proof techniques as both uses a notion of ``similarity'' of states with respect to their observable behavior. 
Unlike traditional bisimulation relations for probabilistic automata, the unwinding relation is defined over the states of a single automaton with the intention of establishing the similarity of two states where one is obtainable from the other by the input of an additional data point.  Moreover, the notion of similarity is approximate, which is in keeping with the definition of differential privacy.  An unwinding proof involves finding a relation family indexed by the set of possible values of the privacy leakage bound $\epsilon$, rather than a single relation.  This departure from traditional probabilistic bisimulations is needed to track the maximum privacy leakage tolerable from a given state in the execution. 
We prove the soundness of our proof technique in Theorem~\ref{thm:unwinding}, which roughly states that the existence of appropriate $\epsilon$-unwinding families for an automaton $\M$ implies that $\M$ has $\epsilon$-differential noninterference.  

As in other formal proof techniques of this nature, the real creativity in doing the proofs with our technique goes into defining the unwinding family.  Unsurprisingly, the rest consists of repeated, routine applications of basic arguments showing that the defined relation between states is preserved by transitions of the system.   In Section~\ref{sec:checking-algo}, this quality enables us to develop an algorithm to check whether a given relation family is an unwinding family, thereby automating proofs for differential noninterference modulo the definition of the relations.  
We prove that the algorithm soundly runs in polynomial time: it will only return true if the automaton has $\epsilon$-differential noninterference (Theorems~\ref{thm:checking-algo-cover-sound} and~\ref{thm:checking-algo-cover-time}).

To motivate our work, we start by presenting a system similar to \textsc{pinq}.  We refer to the example system throughout our paper as we model it in our formalism and use our unwinding technique and algorithm to verify that it has differential noninterference.  As \textsc{pinq} may be configured to use any set of sanitization functions, we present an automaton $\exparamM$ that is parametric in the sanitization functions that it uses.  We show two methods for proving differential noninterference for any correct instantiation of $\exparamM$ with differentially private sanitization functions: by using the composition method presented in Section~\ref{sec:compo}, and by using our unwinding verification algorithm.  This second method illustrates the applicability of our algorithm in proving differential noninterference for interesting automata. 

Along the way, we find interactions between a bounded memory model and differential privacy of interest beyond formal verification.  In particular, we find the inability to store an unbounded number of data points results in doubling the privacy leakage.  

We finish with Section~\ref{sec:related-work} covering related work and Section~\ref{sec:future-work} presenting future work and conclusions.


\section{Background and Motivation}
\label{sec:diffp-systems}

\subsection{Differential Privacy}

Differential privacy formalizes the idea that a private process should not reveal too much information about a single person.  A \emph{data point} represents all the information collected about an individual (or other entity that must be protected).  A multiset (bag) of data points forms a \emph{data set}.  A \emph{sanitization function} $\kappa$ processes the data set and returns a result to the untrusted data examiner that should probabilistically not change whether or not a single data point is in the data set.
Dwork~\cite{d06differential} states differential privacy as follows:
\begin{definition}[Differential Privacy]
\label{def:Dwork-diff-priv}
A randomized function $\kappa$ has $\epsilon$-differential privacy iff for all data sets $B_1$ and $B_2$ differing on at most one element, and for all $S \subseteq \range{\kappa}$,
\[ \Pr[\kappa(B_1) \in S] \leq \e{\epsilon} * \Pr[\kappa(B_2) \in S]\]
\end{definition}
Formally, multisets $B_1$ and $B_2$ differ on at most one element iff either $B_1 = B_2$ or there exists $d$ such that $B_1 \cup \{d\} = B_2$ or $B_2 \cup \{d\} = B_1$.  Note that the above definition is well-defined only if $\range{\kappa}$ is countable. 

Differential privacy has many pleasing properties.  For example, if $B_1$ and $B_2$ differ by $n$ datapoints instead of just one, then the probabilities of $\kappa(B_1)$ and $\kappa(B_2)$ being in a set $S$ will be within a factor of $\e{n*\epsilon}$ of one another~\cite[Corollary 4]{mt07mechanism}.  Furthermore, a function that sequentially applies $n$ functions each with $\epsilon$-differential privacy and provides all of their outputs is an $(n*\epsilon)$-differentially private function~\cite[Corollary 5]{mt07mechanism}.

\paragraph{Privacy Mechanisms}
As shown in the original work on differential privacy, given a statistic $f$ that can be computed of the data sets $B_i$, one can construct a sanitization function $\kappa_f$ from $f$ by having $\kappa_f$ add noise to the value of $f(B_i)$ where the noise is drawn from a Laplace distribution~\cite{dmns06calibrating}.  This is an example of a \emph{privacy mechanism}, a scheme for converting a statistic into a sanitization function with differential privacy.

Systems in practice would implement a sanitization function such as $\kappa_f$ as a program.
As actual computers have only a bounded amount of memory, the program computing $\kappa_f$ must only use a bounded amount of memory.
However, many sanitization functions proposed in the differential privacy literature, including all sanitization functions constructed using the Laplace privacy mechanism, use randomly drawn real numbers, which requires an uncountably infinite number of states.   While such functions can be approximated using a finite number of states (e.g., by using floating point numbers), it is unclear whether the proofs that these functions have differential privacy carry over to their approximations.

As we are interested in formally proving that finite systems provide differential privacy, we limit ourselves to privacy mechanisms that operate over only a finite number of values.  One such mechanism is the \emph{Truncated Geometric Mechanism} of Ghosh et al.~\cite{grs09universally}, which uses noise drawn from a bounded, discrete version of the Laplace distribution.  As we are interested in applying formal methods to systems using such mechanisms, we provide an implementation of this mechanism that runs in expected constant time and proofs about it in Appendix~\ref{app:bdm}.

\subsection{Motivating Example System}
\label{sec:mot-ex}

To further motivate and illustrate our work, we provide an example of an interactive system that uses sanitization functions.  Throughout the remainder of this paper, we apply the various formal methods we develop to prove that it preserves privacy.  The system manages data points entered by data providers and processes requests of data examiners for information by receiving queries and answering them after sanitizing the answer computed over the data set.  The system must apply the sanitization functions to the data set and interact with the data examiner in a manner that does not compromise privacy.

Possible source code for one such system is shown in Figure~\ref{fig:code-window}.
To be concrete, suppose that the data points are integers and the system handles only two queries.  The first produces the output of the sanitization function \mathsc{count}, which provides the number of data points currently in the data base.  The second produces the output of \mathsc{sum}, which provides their sum.  In both cases, the sanitization functions use the Truncated Geometric Mechanism to preserve privacy~\cite{grs09universally}.  (Appendix~\ref{app:bdm-count-sum} provides source code for \mathsc{count} and \mathsc{sum}.)

\begin{figure}
\begin{verbatim}
01 dPts:= emptyArray(t);
02 numPts := emptyArray(t);
03 for(j:=0; j<t; j++)
04   dPts[j]:= emptyArray(maxPts);
05   numPts[j] := 0;
06 curSlot:=0;
07 while(1)
08   y:=input();
09   if(datapoint(y))
10     if(numPts[curSlot]<maxPts)
11       dPts[curSlot][numPts[curSlot]]:=y;
12       numPts[curSlot]++;
13   else
14     k:=get_sanitization_funct(y);
15     res:=k.compute(dPts);
16     print(res);
17     curSlot:=(curSlot + 1) mod t;
18     delete dPts[curSlot];
19     dPts[curSlot] := emptyArray(maxPts);
20     numPts[curSlot] := 0
\end{verbatim}
\caption{Program that tracks data point usage to ensure differential noninterference}
\label{fig:code-window}
\end{figure}
Intuitively, the program in Figure~\ref{fig:code-window} keeps an array of $t$ arrays of data points and a variable \verb|curSlot|, whose value indicates a (current) slot in the array.  
If the input is a data point, that data point is added to the array indexed by  \verb|curSlot| unless that array is full, in which case the data point is ignored.

If the input is a query, then the query requested by the input is computed on the union of all the data points collected from all the arrays. 
Line~\verb|15| uses either the implementation of \mathsc{count} or \mathsc{sum} to compute the system's response to the query \verb|y| where Line~\verb|14| selects the correct function.  
Furthermore, the index \verb|curSlot| to one of these arrays is cyclically shifted and the array to which it now points is replaced with an empty array.  Since there are only $t$ slots, this means that each array will only last for $t$ queries before being deleted.  (If $t = 0$, we take the program to have an array \verb|dPts| of length $0$, in which case it never stores any data points.)  Since each query has $\epsilon$-differential privacy, this ensures that each data point will only be involved in $t*\epsilon$ worth of queries.

\paragraph{Verification}
The goal of our work is to formally verify that systems like this one preserve the privacy of their users.  In addition to showing that the sanitization functions \textsc{count} and \textsc{sum} have differential privacy (a subject of previous work~\cite{grs09universally}), we study how the system leaks information about the data points in ways other than through the outputs from these functions.  Indeed, one might expect from the sequential result for differential privacy discussed above~\cite[Corollary 5]{mt07mechanism}, that the system would provide $(t*\epsilon)$-differential privacy.  However, due to how the system manages data points, it actually only provides $(2t*\epsilon)$-differential privacy as we show later.

Had our goal only been to formally verify the implementations of the sanitization functions \textsc{count} and \textsc{sum}, it would suffice to use a simple formal model such as that of probabilistic finite-state automata with no interaction and use a suitable algorithmic technique to verify differential privacy, which research on Markov chains provides.   (We provide further details in Section~\ref{sec:compo-algo}.)

However, to verify differential privacy for interactive systems that use privacy mechanism as a building block as the above system does, we need a more expressive formal model that models the interaction of the data examiner with the system and the addition of data points to the system over time.  The next section provides such a model.

\section{Modeling Interaction for Formal Verification}

In this section, we present the basics of the formal framework we use in modeling interactive systems and show how we can model the example system of Section~\ref{sec:mot-ex} using this formalism.  Specifically, in Sections~\ref{sec:autom} and~\ref{sec:mut-diff-priv}, we introduce a special class of probabilistic I/O automata and present our definition of differential privacy for this class of probabilistic I/O automata.  In Section~\ref{sec:ex-autom} we model the program of Figure~\ref{fig:code-window} as a probabilistic I/O automaton. 

\subsection{Automata}\label{sec:autom}

We use a simplified version of probabilistic I/O automata (cf.~\cite{lsv07observing}).  We define an automaton in terms of a probabilistic labeled transition system (\textsc{plts}).
\begin{definition} 
A probabilistic labeled transition system (\textsc{plts}) is a tuple $L = \langle S,I,O,\trans\rangle$ where $S$ is a countable set of states; $I$ and $O$ are countable and pairwise disjoint sets of actions, referred to as \emph{input} and \emph{output} actions respectively; and $\trans \subseteq S \times (I \cup O) \times \Disc(S)$ represents the possible transitions where $\Disc(S)$ is the set of discrete probability measures over $S$.  
\end{definition}

We use $A$ for $I \cup O$. We partition the input set $I$ into $D$, the set of data points, and $Q$, the set of queries.
We also partition the output set $O$ into $R$, the set of responses to the data examiner's queries and $H$, the set of outputs that are hidden from (not observable to) the data examiner. Note that $H$ includes outputs to the data provider.  We let $E$ range over all actions to which the examiner has direct access: $E = Q \cup R$.
When only one automaton is under consideration, we denote a transition $\langle s,a,\mu\rangle \in \trans$ by $s \trans^{a} \mu$.  

Henceforth, we require that \textsc{plts}s satisfy the following conditions:
\begin{itemize}
\item \emph{Transition determinism:} For every state $s \in S$ and action $a \in A$, there is at most one $\mu \in \Disc(S)$ such that $s \trans^{a} \mu$. 

\item \emph{Output determinism:} For every state $s \in S$, output $o \in O$, action $a \in A$, and $\mu \in \Disc(S)$, if $s \trans^o \mu$ and $s \trans^a \mu'$, then $a = o$ and $\mu' = \mu$.

\item \emph{Quasi-input enabling:} For every state $s \in S$, inputs $i_1$ and $i_2$ in $I$, and $\mu_1 \in \Disc(S)$, if $s \trans^{i_1} \mu_1$, then there exists $\mu_2$ such that $s \trans^{i_2} \mu_2$.
\end{itemize}
Output determinism and quasi-input enabling means that the state space may be partitioned into two parts: states that accept all of the inputs and states that produce exactly one output.  We require that each output producing state produces only one output since the choice of output should be made by the \textsc{plts} to avoid nondeterminism that might be resolved in a way that leaks information about the data set.  
Owing to transition determinism, we will often write $s \trans^a \mu$ without explicitly quantifying $\mu$.

We define an extended transition relation $\wtrans$ that describes how a \textsc{plts} may perform a sequence of actions where some of the output actions are hidden from the data examiner.  In particular, the hidden outputs in $H$ 
model unobservable internal actions irrelevant to privacy.
To define $\wtrans$, let a state that produces an output from $H$ be called \emph{$H$-enabled} and one that does not be called \emph{$H$-disabled}.  By output determinism, $H$-enabled states may only transition under an action in $H$ and, thus, cannot have transitions on actions from $R \cup Q \cup D$.  To skip over such states and focus on $H$-disabled states, which are more interesting from a verification point of view, we define $\wtrans$ to show to which $H$-disabled states the system may transition while performing any finite number of hidden actions.  
We define $s \wtrans^a \nu$ so that $\nu(s')$ is the probability of reaching the $H$-disabled state $s'$ from the state $s$ where $a$ is the action performed from state $s$.
Note that $\nu$ is not a distribution over the set $S$ of states since the automaton might execute an infinite sequence of $H$-enabled states never reaching an $H$-disabled state.  We let $\nu$ be a distribution over $S_\bot = S \cup \{\bot\}$ where $\bot \notin S$ represents nontermination and $\nu(\bot) = 1 - \sum_{s \in S} \nu(s)$.  Note that for no $a$, $\mu$, or $\nu$ does $\bot \trans^a \mu$ or $\bot \wtrans^a \nu$.

A \textsc{plts} $L$ combined with a state $s$ defines a probabilistic I/O automaton $\langle L, s\rangle$.  This state is thought of as the initial state of the automaton or the current state of the \textsc{plts}.  
We define a \emph{trace} to be a sequence of actions from $A^* \cup A^\omega$.  
Given such an automaton $\M$, we define $\model{\M}$ to be a function from input sequences to the random variable over traces that describes how the automaton $\M$ behaves under the inputs $\vec{i}$.  We let $\restrict{\model{\M}(\vec{i})}{E}$ denote the random variable over sequences of actions observable to the data examiner obtained by projecting only the actions in $E$ from the trace returned by random variable $\model{\M}(\vec{i})$.

To deal with nontermination, we note that the examiner can only observe finite prefixes of any nonterminating trace. 
When the examiner sees the finite prefixes of a trace, he must consider all traces of the system with the observed prefix as possible.  
(The set of these traces has been called a \emph{cone} --- see e.g.~\cite{lsv07observing}.)  Since the examiner may only see actions in $E$, these sets are in one-to-one correspondence with $E^*$.  
Thus, the examiner observing some event is not modeled as the probability of the system producing a trace in some set, but rather with the probability of a system producing a prefix of trace in some set.  That is, rather than using $\Pr[\restrict{\model{\M}(\vec{i})}{E} \in S]$ for $S \subseteq E^* \cup E^\omega$, we need $\Pr[\restrict{\model{\M}(\vec{i})}{E} \supseqeq S]$ for $S \subseteq E^*$ where $\supseqeq$ is the super-sequence-equal operator raised to work over sets in the following manner: $\vec{e} \supseqeq S$ iff there exists $\vec{e}' \in S$ such that $\vec{e} \supseqeq \vec{e}'$ where $\vec{e} \in E^* \cup E^\omega$ and $S \subseteq E^*$.

In Appendix~\ref{app:autom-details}, we formalize these concepts and show how to calculate these probabilities from the transitions of the automaton.

\subsection{Differential Noninterference}
\label{sec:mut-diff-priv}

Often the data set of a differentially private system is loaded over time and may change between queries.  Such changes in the data set are not explicitly modeled by the definition of differential privacy, but one could conceive of modeling such changes by having data points be time-indexed sequences of data.  Nevertheless, for formal verification, we require an explicit model of data set mutation.  Thus, we present a version of differential privacy defined in terms of the behavior of an automaton that accepts both queries and data points over time.

\begin{definition}[Differential Noninterference]
\label{def:strong-diff-priv}
An automaton $\M$ has $\epsilon$-differential noninterference if for all input sequences $\vec{i}_1$ and $\vec{i}_2$ in $I^*$ differing on at most one data point, and for all $S \subseteq E^*$,
\[ \Pr[\restrict{\model{\M}(\vec{i}_1)}{E} \supseqeq S] \leq \e{\epsilon} * \Pr[\restrict{\model{\M}(\vec{i}_2)}{E} \supseqeq S]\]
where we say two input sequences differ by one data point if one of the sequences may be constructed from the other by inserting a single data point anywhere in it. 
\end{definition}

By restricting the traces of $\M$ to only those elements of $E = Q \cup R$, we limit traces to only those actions accessible to the untrusted data examiner.  The definition requires that any subset of such traces be almost equally probable under the input sequences $\vec{i}_1$ and $\vec{i}_2$, which differ by at most one data point. Note that like the original form of differential privacy, we do not model the adversary explicitly but rather consider the behavior of the automaton over all possible input sequences the adversary could supply.

In Appendix~\ref{app:mut-diff-proofs}, we give full definitions for sequence differencing and prove results showing that our adaptation of differential privacy preserves pleasing properties of the original.  One such property is a composition result (Proposition~\ref{prp:strong-n-composition}):
the privacy leakage bound for a system whose inputs differ on at most $n$ data points is $n * \epsilon$ where $\epsilon$ is the leakage bound for the system if its inputs differ on one data point.

\subsection{Example: Automaton Model for Program of Figure~\ref{fig:code-window}}
\label{sec:ex-autom}

To eventually prove that the program of Figure~\ref{fig:code-window} has $(2t*\epsilon)$-differential noninterference, we first give an automaton model of the program, called $\exparamM(\K)$. Note that the model we give here is parametric in the set of sanitization functions; 
it applies not only to the program of Figure~\ref{fig:code-window}, which assumes $\K = \{\textsc{count},\textsc{sum}\}$ but to any other instance of the same program that uses a possibly different set of sanitization functions (modeled by the parameter $\K$). 
We define below the state space $S$ and transition relation $\trans$, which determine $\exparamL(\K) = \langle S, I, O, \trans\rangle$ for every set $\K$ of sanitization functions.  Using an initial state $s_0$, we get the automaton $\exparamM(\K) = \langle \exparamL(\K), s_0\rangle$.

\paragraph{States} Each state of the automaton can be viewed as a particular valuation of the variables in the program allowed by its type.  We model the array \verb|dPts| as a $t$-tuple of multisets of data points.  We model \verb|numPts| as a $t$-tuple of integers ranging from $0$ to $v$ where $v$ is the value held by the constant \verb|maxPts|.  We model the index \verb|curSlot| as an integer $c$ ranging from $0$ to $t-1$, which selects one of the multisets of the $t$-tuple.  The variable \verb|y| stores the most recent input. The variable \verb|res| keeps track of which output from $O$ is about to be produced and the sanitization function is stored in \verb|k|.
The state must also keep track of a program counter $pc$, which ranges over the program line numbers from $01$ to $20$.
Thus, the set of states $S$ is $\{01,\ldots,20\} \cross (\bag{D})^t \cross \{0,\ldots,v\}^t \cross \{0,\ldots,t-1\} \cross I \cross O \cross K$ where $\bag{D}$ is the set of all multisets with elements from $D$ and $K$ is the set of sanitization functions.

\paragraph{Actions} We model the \verb|input| command in the source code with the input action set $I$ of our automaton: for each possible value that \verb|input| can return there is an input action in $I$ corresponding to that value.  Inputs in the code can be either queries or data points, which is modeled by the partition of the set $I$ into the sets $Q$ for queries and $D$ for data points.  We model the \verb|print| command in the source code with the observable outputs $R$ (responses) of our automaton.  For each possible value that can be printed we have an output action in $R$.  We model all other commands by internal (hidden) actions.

\paragraph{Transitions} We list below only those transitions that are interesting for our purposes.  That is, transitions on actions from the sets $I$ and $R$, and transitions on hidden actions that represent internal computation such as choosing of an appropriate sanitization function for a given query and computation of the result using that function.  We use the symbol $\tau$ for hidden actions.
We also use \emph{Dirac} distributions: let $\mathsf{Dirac}(s)$ be the distribution such that $\Pr[\mathsf{Dirac}(s){=}s] = 1$ and $\Pr[\mathsf{Dirac}(s){=}s'] = 0$ for all $s' \neq s$.  Given a query $q$ in $Q$, we let $\kappa_q$ be the sanitization function that answers that query.  Some key transitions are:
\newcommand*{\Item}[1][\mbox{}]{\item[\mbox{}\ \ {#1}]}
\begin{description}
\Item[Input] $\;\;\;\langle 08, \vec{B}, \vec{n}, c, y, r, k\rangle \trans^i \mathsf{Dirac}(\langle 09, \vec{B}, \vec{n}, c, i, r, k\rangle)$

\Item[Choose Function] $\;\;\;\langle 14, \vec{B}, \vec{n}, c, y, r, k\rangle \trans^\tau \mathsf{Dirac}(\langle 15, \vec{B}, \vec{n}, c, y, r, \kappa_y\rangle)$

\Item[Compute Function] $\;\;\;\langle 15, \langle B_0, \ldots, B_{t-1}\rangle, \vec{n}\rangle, c, y, r, k\rangle \trans^\tau \mu$
 where 
\[ \mu(\langle 16, \langle B_0, \ldots, B_{t-1}\rangle, \vec{n}, c, y, r', k\rangle) = \Pr[k(\bigbagunion_{\ell=0}^{t-1} B_{\ell}) = r'] \]
 using $\bagunion$ for multiset union and $\mu(s') = 0$ for states not of that form, and 

\Item[Output Result] $\;\;\;\langle 16, \vec{B}, \vec{n}, c, y, r, k\rangle \trans^r \mathsf{Dirac}(\langle 17, \vec{B}, \vec{n}, c, y, r, k\rangle)$
\end{description}

The third transition above is a probabilistic transition that represents the internal computation of a sanitization function $k$ on the union of the multisets $B_0, \ldots, B_{t-1}$.  The effect of the transition is to update the value of the $pc$ from $15$ to $16$ and to update the result to be output from $r$ to a new value $r'$ such that the probability of ending up in state $\langle 16, \langle B_1, \ldots, B_t\rangle, c, n, y, r', k\rangle$ as a result of the transition is $\Pr[k(\bigbagunion_{\ell=1}^{t} B_{\ell}) = r']$.

From these transitions, we can calculate the extended transitions for each of the three types of $H$-disabled states:
\begin{description}
\Item[Drop] $\;\;\;\langle 08, \vec{B}, \vec{n}, c, y, r, k\rangle \wtrans^d \mathsf{Dirac}(\langle 08, \vec{B}, \vec{n}, c, d, r, k\rangle)$ when $n_c$ of $\vec{n}$ is $v$;

\Item[Add] $\;\;\;\langle 08, \vec{B}, \vec{n}, c, y, r, k\rangle \wtrans^d \mathsf{Dirac}(\langle 08, \vec{B}', \vec{n}', c, d, r, k\rangle)$ when $n_c$ of $\vec{n}$ is less than $v$ and $\vec{B}'$ and $\vec{n}'$ are such that $B'_{c} = B_{c} \bagunion \{d\}$, $n'_{c} = n_{c}+1$, and for all $c' \neq c$,  $B'_{c'} = B_{c'}$ and $n'_{c'} = n_{c'}$;
 
\Item[Answer Query] $\;\;\;\langle 08, \langle B_0, \ldots, B_{t-1}\rangle, \vec{n}, c, y, r, k\rangle \wtrans^q \nu$ 
where 
\[ \nu(\langle 16, \langle B_0, \ldots, B_{t-1}\rangle, \vec{n}, c, q, r', \kappa_q\rangle) = \Pr[k(\bigbagunion_{\ell=0}^{t-1} B_{\ell}) = r'] \]
 and $\nu(s') = 0$ for states not of that form; and 

\Item[Delete Old Data] $\;\;\;\langle 16, \vec{B}, \vec{n}, c, y, r, k\rangle \wtrans^r \mathsf{Dirac}(\langle 08, \vec{B}', \vec{n}, c, y, r, k\rangle)$\\
where we have $B'_{c+1 \mod t} = \emptybag$, $n'_{c+1 \mod t} = 0$, and for all $c'' \neq c+1 \mod t$, $B'_{c''} = B_{c''}$ and $n'_{c''} = n_{c''}$ using $\emptybag$ for the empty multiset.
\end{description}

The third extended transition above represents a sequence of transitions that starts with the input of a query $q$.  The input of the query is followed by transitions on hidden actions that model the computation of the answer to the query where some of these hidden steps are probabilistic.
The resulting state has the property that $\kappa_q$ has been chosen as the sanitization function and that $pc = 16$, which implies that the resulting state is $H$-disabled and the automaton is ready to perform an observable output by outputing the answer to the query. 

The state space $S$ and transition relation $\trans$ determines the \textsc{plts} $\exparamL(\K) = \langle S, I, O, \trans\rangle$ for every set $\K$ of differentially private functions.  Using the initial state $s_0 = \langle 1, \emptybag^t, 0^t, 1, y_0, r_0, k_0\rangle$, we get the automaton $\exparamM(\K) = \langle \exparamL(\K), s_0\rangle$.  (The initial values $y_0$, $r_0$, $k_0$ do not matter since they will be replaced before being used.)

\paragraph{Verification of Differential Privacy and Bounded Memory}
The remainder of this paper develops the proof techniques needed to formally verify that models such as the one shown above has differential noninterference.  In particular, in the next section, we describe a composition result that allows to separately consider whether the sanitization functions in $K$ have differential privacy and whether $\exparamM$ properly uses them.  In Section~\ref{sec:unwinding}, we present a proof technique using \emph{unwinding families} for showing that for all sets $K$ of sanitization functions with $\epsilon$-differential privacy, the automaton $\exparamM(K)$ has $(2t*\epsilon)$-differential noninterference.  Lastly, Section~\ref{sec:checking-algo}, provides a proof-checking algorithm that ensures our unwinding technique is properly used.  These methods together allow for a compositional and mechanically checked formal proof of differential noninterference.
 
Given that the system modeled above uses $\epsilon$-differentially private functions $t$ times, one might be surprised that we prove that it has $(2t*\epsilon)$-differential noninterference rather than $(t*\epsilon)$-differential noninterference.  This extra leakage comes from dealing with the bounded memory of actual computers.  In particular, each array in \verb|dPts| is limited to a length of \verb|maxPts|.  The program keeps track of the current number of data points stored in each slot with the array \verb|numPts|.  If the current slot has reached \verb|maxPts| data points, the program drops any incoming data points until \verb|curSlot| advances.  

This dropping of data points introduces extra privacy leakage.  A single data point can have two effects: it is both included in calculations and can cause the system to drop future data points and exclude from calculations.  Thus, the system has only $(2t*\epsilon)$-differential noninterference.  In many scenarios, the possibility of running out of memory for storing data points is unrealistic.  If the number of data points can never reach the memory bound, then under this assumption, one can show that system has $(t*\epsilon)$-differential noninterference.

It may be tempting to use a linked list for each slot and keep track of how many total data points are stored in all the slots combined.  Then, the program could drop data points only when all the memory is exhausted instead of just the current slot's allocation.  However, this change would allow a single data point stored in one slot to affect which data points are dropped from other slots in the future.  Thus, a single data point may have an unbounded effect on future computation preventing such a program from satisfying differential noninterference for any privacy bound.

\section{Decomposing Verification}
\label{sec:compo}

Recall the example system presented in Section~\ref{sec:mot-ex}. The source code in Figure~\ref{fig:code-window} is written parametrically in the set of sanitization functions (Lines \texttt{14} and \texttt{15}).
The model $\exparamM(\K)$ of the system given in Section~\ref{sec:ex-autom} is parametrized over the set of sanitization functions $\K$ where the computation of a sanitization function from $\K$ is idealized as a single transition in the transition system of  $\exparamM(\K)$.
We will call such models in which computation of functions are abstracted as a single step \emph{idealized models}.  In reality, any function in the set $\K$ would be implemented by a subroutine that can be modeled by an automaton and an \emph{implementation model} could be obtained from an idealized model by replacing each idealized transition for a sanitization function with its corresponding subroutine automaton.  

In this section, we first provide an algorithm for checking that such subroutine automata modeling sanitization functions have differential privacy.  Second, we show how to use the proof that a subroutine has differential privacy to simplify the task of proving
that an interactive system using that function has differential noninterference.  That is, we show how we support compositional reasoning by separating the verification of a sanitization function from the verification of a system that uses the function.

\subsection{Mechanized Verification of Differential Privacy}
\label{sec:compo-algo}

Previous work has provided a method of formally verifying that a sanitization function has differential privacy~\cite{rp10distance}.  Their method operates over a special language to enable type-checking.  Below we provide an alternative using automata to model the function.

In particular, we model a subroutine implementing a sanitization function $k$ operating on the database $B$ using an I/O automaton $M_{k,B}$.  As $k$ performs no I/O, the model $M_{k,B}$ has an empty set of inputs and only one output $h$, a hidden action.  The initial state of $M_{k,B}$ represents the start of the computation $k$ operating on the argument $B$.  For each output $r$ in the range of $k$, $M_{k,B}$ has a terminal state $\xi(r)$ with no outgoing transitions corresponding to returning the value $r$.  Since $k$ is a function, $s_0 \wtrans^h \nu$ must be a distribution over these terminal states with $\nu(\bot) = 0$ and $\nu(s) = 0$ for all states not corresponding to an output.

A function $k$ has $\epsilon$-differential privacy only if $M_{k,B_1}$ and $M_{k,B_2}$ induces sufficiently close distributions over related terminal states for all data bases $B_1$ and $B_2$ differing by at most one data point.  In particular, for all $r$ in the range of $k$, $\nu_1(\xi_1(r)) \leq \e{\epsilon} * \nu_2(\xi_2(r))$ where $\nu_i$ is the distribution over terminal states induced by the automaton $M_{k,B_i}$ and $\xi_i$ is the mapping from the range of $k$ to terminal states for $M_{k,B_i}$.  

Thus, mechanically checking if a function $k$ has differential privacy reduces to constructing the appropriate models $M_{k,B_i}$, computing the distributions $\nu_i$ for each of them, and comparing them as needed.  As we are only concerned with systems that can actually be implemented, only a finite number of models and comparisons are needed.  The construction of the models may be done using known techniques from model checking (see, e.g.,~\cite{cgp00model}).   The most complex step is computing the distributions $\nu_i$.  

Fortunately, each of these automaton models $M_{k,B_i}$ may be converted to an \emph{absorbing Markov chain}, a model of random behavior leading to one of a fixed set of \emph{absorbing states} each representing a different outcome.  Under this conversion, the probability of the Markov chain leading to a particular absorbing state corresponds to the distribution $\nu_i$ over terminal states of $M_{k,B_i}$.
This conversion starts with finding the set $S'$ of all $H$-disabled states reachable from $s_0$ by using hidden actions.  For this task, we may view the transition system as a directed graph $G$ where the nodes are states.  If $s \trans^h \mu$ and $\mu(s') > 0$ for some hidden action $h$, then we add an edge from $s$ to $s'$ labeled with $\mu(s')$ to $G$.  (Recall that $s$ will never transition under more than one such hidden action due to the transition-determinism axiom.)  A depth-first search may then find those states reachable from $s$ in $G$.
Second, we remove all states from $G$ that are not reachable from $s$.  
Third, we convert all states in $G$ that are reachable from $s$ that do not reach any $H$-disabled states to a single state $s_\bot$, which we treat as an $H$-disabled state.  We can do this with a reachability analysis for each state to every $H$-disabled state.
Forth, we add a self-loop labeled with probability $1$ from every $H$-disabled state (including $s_\bot$) to itself.  The resulting graph corresponds to an absorbing Markov chain where the $H$-disabled states (including $s_\bot$) are the absorbing states.

To compute the absorbing probabilities of the $H$-disabled states, we use the standard method as presented in~\cite{gs97introduction}.  First, we represent the chain using a transition matrix $\mathbf{P}$ in \emph{canonical form}.  That is, we renumber the states so that the non-absorbing, or \emph{transient}, states come first in $\mathbf{P}$.  In our case, these are the $H$-enabled states.
Let $t$ be the number of transient states and $r$ be the number of absorbing states.  We may view $\mathbf{P}$ as having the following form: 
\[ \mathbf{P} = \left[ \begin{array}{c|c} \mathbf{Q} & \mathbf{R}\\ \hline \mathbf{0} & \mathbf{I} \end{array} \right] \]
where 
$\mathbf{Q}$ is a $t$-by-$t$ matrix,
$\mathbf{R}$ is a non-zero $t$-by-$r$ matrix,
$\mathbf{I}$ is a $r$-by-$r$ identity matrix, and
$\mathbf{0}$ is a $r$-by-$t$ zero matrix.
Here, $\mathbf{Q}$, $\mathbf{R}$, and $\mathbf{I}$ capture, the probabilities for, respectively, moving from a transient state to a transient state, moving from a transient state to an absorbing state, and moving from an absorbing state to an absorbing state.
Second, from $\mathbf{P}$, we compute \emph{fundamental matrix} $\mathbf{N} = (\mathbf{I}-\mathbf{Q})^{-1}$.  Third, we compute $\mathbf{A} = \mathbf{NR}$.  The entry $a_{ij}$ of $\mathbf{A}$ is the probability of the chain ending in (being absorbed by) the state numbered $j$ when started in the state $i$.  Thus, we may set $\nu(s') = a_{ij}$ where $i$ is the number of the initial state and $j$ is the number of the state $s'$.
We refer the reader to \cite{gs97introduction} for the correctness of this algorithm for computing the absorbing probabilities.  

\paragraph{Algorithm $\mathtt{closure}(M,s,a)$} The above algorithm may be generalized to compute $\nu$ for a state $s$ and an action $a$ where $s \wtrans^a \nu$.  The generalization replaces initial state with $s$ and constructs the terminal absorbing states from the $H$-disabled states reachable from $s$.
Let $\mathtt{closure}(M,s,a)$ denote the generalized algorithm used this way to compute $\nu$ such that $s \wtrans^a \nu$.  

As for the runtime of $\mathtt{closure}$, note that the first step of constructing of the graph $G$ runs in $O(|S|)$ where $S$ is the state space of $M_{k,B_i}$.  Converting $G$ to use $s_\bot$ takes $O(|S|^2)$.
Every other step of the conversion process runs in $O(|S|)$.
The matrix operations used to compute the matrix $\mathbf{A}$ can all be done in $O(|S|^3)$ as $t \leq |S|$ and $r \leq |S|$.  Thus, it runs in $O(|S|^3)$ time. 
Using $\mathtt{closure}$ for computing each $\nu_i$, we may check if $k$ has differential privacy in $O(m+|\mathcal{B}|*|D|*|S|^3)$ where $m$ is the time required to compute all the models and $\mathcal{B}$ is the set of all databases $B_i$.

\subsection{Implementation and Composition}
\label{sec:refinement}

The ability to verify that a subroutine provides differential privacy aids the verification that a system using that subroutine has differential noninterference.  In particular, this section shows that the verification of differential noninterference may assume that the subroutine provides a differentially private distribution over return values in a single idealized transition, without modeling the internal transitions of the subroutine. Doing the verification based on such an idealized model is more manageable than doing it based on a model that includes the details about the implementation of the subroutine.  

\paragraph{Implementing a Transition with an Automaton}
We now define what it means in our model for a single step transition on a hidden action to be implemented by an automaton with a series of hidden transitions. We base our notion of implementation on hidden transitions since it is sufficiently general for our purposes --- we do not concern ourselves with the general question of preserving all kinds of observable behavior through implementation but rather the more restricted question of preserving the resulting distribution over computed values. 

A single internal transition of an automaton $M_1$ may result in a distribution over next states that corresponds to the distribution over terminal states induced by many internal transitions in another automaton $M_2$.  To formalize this, let $s^\dagger$ be a state of the automaton $M_1$ such that $s^\dagger \transone^{h^{\dagger}} \mu^\dagger$ for some hidden action $h^{\dagger}$ of $M_1$.  Let $\iota$ be an injection from $\supp(\mu^\dagger)$ to the state space of some other automaton $M_2$ such that every state in the image of $\iota$ is disabled for every action (i.e., they are terminal states).  We say that the automaton \emph{$M_2$ implements the transition of $s^\dagger$ under $\iota$} if for all $s \in \supp(\mu^\dagger)$, $\mu^\dagger(s) = \sum_{\vec{h} \in H_2^+} M_2([\,])(\vec{h},\iota(s))$ where $H_2$ is the hidden action set of $M_2$ and $H_2^+$ is the set of non-empty finite sequences using elements from $H_2$.  That is, $M_2$ implements the transition of $s^\dagger$ under $\iota$ if the distribution over the terminal states that $M_2$ reaches is isomorphic to $\mu^\dagger$ under $\iota$.

\paragraph{Subroutine Composition}
Subroutine composition can be viewed as replacing a single step transition in an idealized model with its automaton implementation where such repeated replacements can be used to derive an implementation model from the idealized model. 

Let $M_1[s^\dagger, M_2, \iota]$ denote the automaton that results from replacing an internal transition from the state $s^\dagger$ of $M_1$ with the subroutine $M_2$ with the injection $\iota$ providing how to return from the subroutine.  Formally, given $M_1 = \langle\langle S_1, Q_1\uplus D_1, R_1 \uplus H_1, \transone\rangle, s^0_1\rangle$, $M_2 =  \langle\langle S_2, \emptyset, H_2, \transtwo\rangle, s^0_2\rangle$, $s^\dagger \in S_1$ such that $s^\dagger \transone^{h^{\dagger}} \mu^\dagger$ for some hidden action $h^\dagger \in H_1$ and $\mu^\dagger$ where $s^{\dagger}$ is the unique state that enables $h^{\dagger}$, $s^{\dagger} \notin \supp(\mu^{\dagger})$, and $\iota : \supp(\mu^\dagger) \to S_2$ such that every state in its image is disabled for all actions, let $M_1[s^\dagger, M_2, \iota]$ denote the automaton $M_3 = \langle\langle S_1 \uplus S_2, I_1, R_1 \uplus H_1 \uplus H_2 \uplus \{h^{\ddagger}\}, \transthree\rangle, s^0_1\rangle$ where $\uplus$ is disjoint union and $\transthree$ is defined as follows:
\begin{itemize}
\item $s_1 \transthree^a \mu$ if $s_1 \in S_1$, $s_1 \neq s^\dagger$, and $s_1 \transone^a \mu$;
\item $s_2 \transthree^a \mu$ if $s_2 \in S_2$ and $s_2 \transtwo^a \mu$;
\item $s^{\dagger} \transthree^{h^{\ddagger}} \mathsf{Dirac}(s^0_2)$; and
\item $\iota(s_1) \transthree^{h^{\ddagger}} \mathsf{Dirac}(s_1)$ for all $s_1 \in \supp(\mu^\dagger)$.
\end{itemize}

The special hidden action $h^{\ddagger}$ in the definition of $M_3$ above is used to mark the entry and exits points of the subroutine represented by $M_2$. This extra action is used to correctly ``hook up'' $M_2$ with $M_1$ to obtain $M_3$.

The lemma below states that if some internal transition of an automaton $M_1$ (for example, a step corresponding to calling a sanitization function in a differentially noninterference system) is replaced by an automaton $M_2$ (for example, multiple steps corresponding to a subroutine implementing the sanitization function), then the observable behavior of the resulting automaton is identical to that of $M_1$.

\begin{theorem}[Subroutine Composition]\label{thm:hidden-transition-replacement}
For all automata $M_1$ and $M_2$, states $s^\dagger$, and injections $\iota$ such that $M_2$ implements the transition of $s^\dagger$ under $\iota$, for all $\vec{i}$ in $I^*$, and $\vec{e}$ in $E^*$,
\[ \Pr[\, \restrict{M_1(\vec{i})}{E} \supseqeq \vec{e}\,] = \Pr[\, \restrict{M_1[s^\dagger,M_2,\iota](\vec{i})}{E} \supseqeq \vec{e}\,] \]
\end{theorem}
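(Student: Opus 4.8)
The plan is to reduce the statement to a single structural fact about the extended transition relation $\wtrans$ restricted to $H$-disabled states, and then invoke the fact that the observable semantics $\model{\cdot}$ of an automaton is completely determined by that restricted relation. Write $M_3 = M_1[s^\dagger, M_2, \iota]$, with state space $S_1 \uplus S_2$; recall that $h^\dagger$ is the unique action enabled at $s^\dagger$ in $M_1$ and that $h^\ddagger$ is the fresh hidden action added in $M_3$. I would first record three bookkeeping facts that are immediate from the definition of $M_1[s^\dagger, M_2, \iota]$: (i) every state of $S_1 \setminus \{s^\dagger\}$ has the same outgoing transitions, hence the same input-accepting/output-producing classification, in $M_3$ as in $M_1$; (ii) $s^\dagger$ is $H$-enabled in both automata (on $h^\dagger$ in $M_1$, on $h^\ddagger$ in $M_3$) and hence $H$-disabled in neither; and (iii) every state of $S_2$ is $H$-enabled in $M_3$, except for terminal states of $M_2$ lying outside the image of $\iota$ --- and those are reached from $s^0_2$ with probability $0$, since summing the implementation hypothesis $\mu^\dagger(s) = \sum_{\vec h \in H_2^+} M_2([\,])(\vec h, \iota(s))$ over $s \in \supp(\mu^\dagger)$ yields $1$ (as $\mu^\dagger$ is a probability measure), so $M_2$ started from $s^0_2$ reaches the image of $\iota$ almost surely. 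Consequently, the $H$-disabled states of $M_3$ that occur with positive probability along any execution all lie in $S_1$ and coincide, as transition-labelled states, with the $H$-disabled states of $M_1$.

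The crux is the following lemma: for every state $s$ that is $H$-disabled in $M_1$ (equivalently in $M_3$) and every non-hidden action $a \in I \cup R$, if $s \wtrans^a_{M_1} \nu_1$ and $s \wtrans^a_{M_3} \nu_3$, then $\nu_3$ is supported on $S_1 \cup \{\bot\}$ and agrees there with $\nu_1$; the same holds for the ``initial closures'' that reach the first $H$-disabled state from $s^0_1$ (dropping the leading $a$-step). To prove it, I would set up a weight-preserving bijection between the finite hidden-action paths that witness $M_1$'s closure and those that witness $M_3$'s: an $M_1$-path is turned into a family of $M_3$-paths by replacing each occurrence of a hidden step from $s^\dagger$ to a state $s' \in \supp(\mu^\dagger)$ --- which in $M_1$ carries the factor $\mu^\dagger(s')$ --- with the detours $s^\dagger \transthree^{h^\ddagger} s^0_2$, an internal $M_2$-path from $s^0_2$ to $\iota(s')$, then $\iota(s') \transthree^{h^\ddagger} s'$, ranging over all such $M_2$-paths. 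Since $\mu^\dagger(s') = \sum_{\vec h} M_2([\,])(\vec h, \iota(s'))$, the total weight of the detours equals the weight $\mu^\dagger(s')$ of the step they replace, so after summing over all (countably many) paths the two closure distributions agree; the hypothesis $s^\dagger \notin \supp(\mu^\dagger)$ ensures a detour never re-enters $s^\dagger$ immediately, and an $M_1$-path that visits $s^\dagger$ several times is handled by applying this accounting at each visit. Nontermination is matched in the same way: $\nu_i(\bot)$ is the weight of the infinite hidden paths that never reach an $H$-disabled state, and since a detour through $M_2$ terminates almost surely, the bijection introduces no new nontermination and preserves this weight as well.

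Finally I would lift the lemma to traces. The observable behavior of an automaton --- and in particular each quantity $\Pr[\restrict{\model{\M}(\vec i)}{E} \supseqeq \vec e]$, as formalized in Appendix~\ref{app:autom-details} --- is a function of the initial state together with, for each $H$-disabled state, its classification and the closure distributions $\nu$ with $s \wtrans^a \nu$ for non-hidden $a$: data-point inputs in $D$ are consumed from $\vec i$ but contribute nothing observable, while query inputs in $Q$ and responses in $R$ are matched against successive entries of $\vec e$. By the bookkeeping facts and the lemma, $M_1$ and $M_3$ present identical such data over the ($S_1$-valued) $H$-disabled states reachable from $s^0_1$, and they share the initial state $s^0_1$; hence a straightforward induction on $|\vec e|$, peeling off one closure step at a time and consuming a prefix of $\vec i$, shows the two probabilities are equal.

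I expect the main obstacle to be the lemma --- specifically, making the detour bijection rigorous in the presence of repeated visits to $s^\dagger$, of infinitely many hidden paths, and of nontermination, and verifying that the rearrangement of the resulting nonnegative series is legitimate. Everything afterward is routine: the bookkeeping facts fall out of the definition of $M_1[s^\dagger,M_2,\iota]$, and the trace-level induction uses only that the observable semantics is a function of the closure relation on $H$-disabled states.
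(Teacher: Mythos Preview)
Your proposal is correct and rests on the same core insight as the paper---replacing the single $h^{\dagger}$-step by the family of detours $s^{\dagger}\to s^0_2\to\cdots\to\iota(s')\to s'$ whose weights sum to $\mu^{\dagger}(s')$---but it packages that insight differently.

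The paper works entirely at the level of finite action sequences: it defines a syntactic map $\Psi$ that sends an $M_1$-trace $\vec a$ to the set of $M_3$-traces obtained by replacing each occurrence of $h^{\dagger}$ with $h^{\ddagger}\cons\vec h\cons h^{\ddagger}$ for $\vec h\in H_2^{+}$, and then proves by a single induction on $\vec a$ that $\sum_{s'\in S_1}\langle L_1,s\rangle(\vec i)(\vec a,s')=\sum_{\vec a'\in\Psi(\vec a)}\sum_{s'\in S_1}\langle L_3,s\rangle(\vec i)(\vec a',s')$. The theorem then follows by summing over $\vec a\in\gamma_1(\vec e)$ and checking that $\bigcup_{\vec a\in\gamma_1(\vec e)}\Psi(\vec a)$ accounts for every $\vec a'\in\gamma_3(\vec e)$ with nonzero weight. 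You instead factor through the extended transition relation: first show that $\wtrans$ restricted to $H$-disabled states is unchanged (your lemma, which is the detour argument applied to hidden paths rather than to full traces), and then invoke that the observable semantics depends only on this data (your final induction on $|\vec e|$, which amounts to repeated use of the paper's Proposition~\ref{thm:nu}). Your decomposition is conceptually cleaner and yields a reusable lemma about $\wtrans$, at the cost of two inductions instead of one; the paper's single induction on $\vec a$ stays closer to the raw definitions and avoids needing to state separately that the observable semantics is determined by $\wtrans$. Your anticipated obstacle---handling repeated visits to $s^{\dagger}$ and justifying the series rearrangement---is exactly where the paper's work goes as well, and its inductive treatment of the case $o=h^{\dagger}$ is essentially your detour bookkeeping written out.
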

In Appendix~\ref{app:hidden-transition-replacement}, we prove this by way of two lemmas.

A corollary is that if an idealized model has differential noninterference then a implementation model formed by replacing its internal transitions with subroutine automata also has differential noninterference.

\subsection{Example: Decomposing Verification}
\label{sec:compo-ex}

Suppose that $\excsM$ is the automaton obtained from $\exparamM(\{\textsc{count},\textsc{sum}\})$ by replacing the transitions that represent the computations of the functions \textsc{count} and \textsc{sum} with subroutine automata $\M_{\textsc{count},B_i}$ and $\M_{\textsc{sum},B_i}$.  That is, $\excsM$ is the code shown in Figure~\ref{fig:code-window} with the implementations of \textsc{count} and \textsc{sum} in-lined.  We may apply the composition theorem repeatedly for each replacement of a single transition in $\exparamM(\{\textsc{count},\textsc{sum}\})$ with a subroutine automaton in $\excsM$.  Such repeated compositions reduces the problem of verifying differential noninterference for $\excsM$ to two smaller problems:  First, we must show that the automata $\M_{\textsc{count},B_i}$ and $\M_{\textsc{sum},B_i}$ implement with a series of internal transitions the transitions corresponding to the functions \textsc{count} and \textsc{sum} found in $\exparamM(\{\textsc{count},\textsc{sum}\})$ as described in our formal definition of \emph{implementation}.  Second, we must show that the idealized model $\exparamM(\{\textsc{count},\textsc{sum}\})$ has the differential noninterference.  

The first problem can be solved using $\mathtt{closure}$, which establishes that the automaton correctly implement $\textsc{count}$ and $\textsc{sum}$.  As $\textsc{count}$ and $\textsc{sum}$ has differential privacy (proofs provided in Appendix~\ref{app:bdm}), we may conclude that these subroutine automata have differentially private distributions over their terminal states.\footnote{We may also mechanically prove that these subroutine automata have differential privacy using other formal methods such a type system~\cite{rp10distance}.}
The next two sections deal with solving the second problem.

While \textsc{count} and \textsc{sum} are simple sanitization functions, the above approach generalizes to more complex sanitization functions: As long as the function can be modeled as a series of internal transitions that ends in states corresponding to its return values, our approach will apply.  While most of the algorithms previously published use unbounded state spaces, we believe our approach can handle bounded versions of them.


\section{Unwinding Proof Technique}
\label{sec:unwinding}

We desire a technique for drawing conclusions about the global behavior (executions) of the system from local aspects (states, actions, and transitions) of the model.  Faced with a similar situation, Goguen and Meseguer introduced unwinding relations to simplify proving that a system has noninterference~\cite{gm84unwinding}.  We present a similar technique for proving that a system has differential noninterference.  In particular we state what it means for a relation family to be an unwinding family and prove Theorem~\ref{thm:unwinding}, which roughly states that the existence of an unwinding family for a given automaton implies that it satisfies differential noninterference.  Our unwinding notion is probabilistic and approximate, which is in keeping with the notion of differential privacy.  The novelty lies in the way we keep track of the privacy leakage bound, which evolves as the system evolves where the evolution is constrained by the differential privacy definition.


\subsection{Definition and Soundness}

Formulating a notion of unwinding relation that is sound for showing
differential noninterference is more complicated than existing notions
for showing noninterference because we must deal with probabilities
and we must keep track of the privacy leakage bound $\epsilon$.  To
deal with probabilities and approximation, we adapt the notion of
\emph{approximate lifting} from previous work on approximate
probabilistic simulation relations in the context of cryptographic
protocols~\cite{st07}.  However, such work does not deal with tracking
a leakage bound (see Section~\ref{sec:related-work} for additional
details).  Thus, we introduce a \emph{family} of unwinding relations
indexed by various amounts of privacy leakage.  Each unwinding
relation in the family is a relation on the state space of the
automaton. The unwinding relation indexed by the leakage amount
$\epsilon$ relates states that exhibit approximately the same trace
distributions in the sense of $\epsilon$-differential noninterference.

To deal with probabilities in a concise and modular way, we first define an approximate lifting operation that takes a relation over sets and produces a relation over distributions on those sets.  The degree of approximation is governed by a parameter $\delta$.  
\begin{definition}[$\delta$-Approximate Lifting]
\label{def:approx-lifting} 
Let $\mathsf{R}$ be a relation between a set $X$ and a set $Y$.  The \emph{$\delta$-approximate lifting} of $\mathsf{R}$ denoted by $\lift{\mathsf{R},\delta}$ is the relation between $\Disc(X)$ and $\Disc(Y)$ such that for all $\nu_1$ in $\Disc(X)$ and $\nu_2$ in $\Disc(Y)$, $\nu_1 \mathrel{\lift{\mathsf{R},\delta}} \nu_2$ 
if and only if there exists a bijection $\beta: \supp(\nu_1) \to \supp(\nu_2)$ such that for all $x$ in $\supp(\nu_1)$, $x \mathrel{\mathsf{R}} \beta(x)$ and $|\ln \nu_1(x) - \ln \nu_2(\beta(x))| \leq \delta$.
\end{definition}
The requirement for $\beta$ to be from the support set of $\nu_1$ to the support set of $\nu_2$ ensures that if a state is assigned a non-zero probability in $\nu_1$ then it is not possible for a related state to be assigned a zero probability in $\nu_2$ and vice versa---there is one to one correspondence between the states with non-zero and identical probabilities in the two distributions.
The form of $\delta$ involves natural logarithms because the privacy leakage bound in the differential privacy definition appears in the exponent.

Next we define our unwinding technique, which is illustrated in Figure~\ref{fig:unwinding-cover}.  Intuitively, since we want the behavior of the automaton to change only by a factor of $\epsilon$ on receiving a single data point, we want the transitions under a data point from a state $s$ to lead to states $s'$ that are only a factor of $\epsilon$ different from $s$.  \emph{Covering} (Definition~\ref{def:cover}) formalizes this by requiring that state $s$ is related to each such state $s'$ by a relation $\mathcal{R}^{\epsilon}$ that is part of an \emph{$\epsilon$-unwinding family} (Definition~\ref{def:unwinding}).  

In more detail, an $\epsilon$-unwinding family starts with a privacy leakage budget of $\epsilon$, which decreases over time to a current balance of $\epsilon'$.  Related states $s_1$ and $s_2$ are required to only make transitions under the same actions.  The distributions $\nu_1$ and $\nu_2$ that result from these transitions followed by any number of transitions under hidden outputs may differ only by a factor of $\delta$.  This difference is subtracted from the current balance $\epsilon'$ to get a new current balance.  Once the balance reaches zero, the resulting distributions must be equivalent.  As the balance started at $\epsilon$, only a total of $\epsilon$ privacy can be leaked, a point proved in Lemma~\ref{thm:near-unwinding}.   
\begin{figure*}
\begin{center}
\mbox{}\!\!\!\!\!\!\!\!\!\!\!\!\!\!\!\!\input{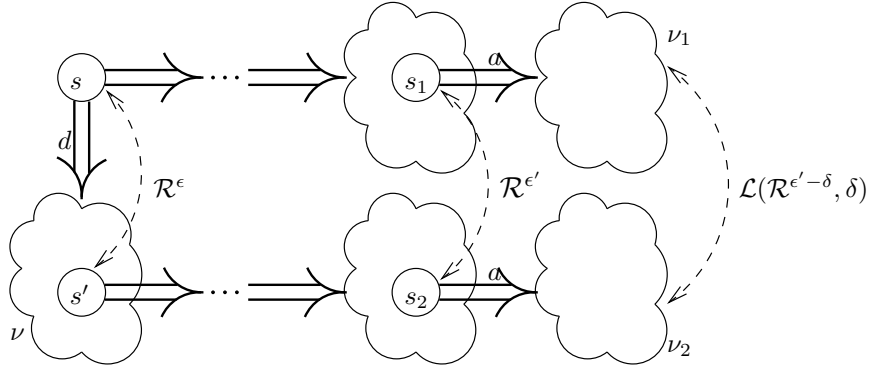}
\caption{Unwinding Family and Covering: The left side shows the requirements for a covering.  The right side shows the requirements placed on an unwinding family.  The solid arrows denote the extended transition relation $\wtrans$ and clouds depict probability distributions such as $\nu$ where $s' \in \supp(\nu)$.}
\label{fig:unwinding-cover}
\end{center}
\end{figure*}

\begin{definition}[$\epsilon$-Unwinding Family]
\label{def:unwinding}
For a non-negative real number $\epsilon$, a family indexed by the set $[0,\epsilon]$ of relations $\mathcal{R}^{\cdot}$ over the $H$-disabled states of a \textsc{plts} $L$ is an \emph{$\epsilon$-unwinding family} for $L$ if for all $\epsilon'$ in $[0,\epsilon]$, for all $x_1$ and $x_2$ in $S_\bot$ such that $x_1 \mathrel{\mathcal{R}^{\epsilon'}} x_2$, for all $a$ in $I \cup R$, there exists $\nu_1$ such that $x_1 \wtrans^a \nu_1$ iff there exists $\nu_2$ such that $x_2 \wtrans^a \nu_2$, and when they do exist, there exists a real number $\delta$ in $[0,\epsilon']$ such that $\nu_1 \mathrel{\lift{\mathcal{R}^{\epsilon' - \delta},\delta}} \nu_2$.
\end{definition}

\begin{lemma}\label{thm:near-unwinding}
For all $\epsilon$-unwinding families $\mathcal{R}^\cdot$, all $\epsilon'$ in $[0,\epsilon]$, all $x_1$ and $x_2$ in $S_\bot$ such that $x_1 \mathrel{\mathcal{R}^{\epsilon'}} x_2$, all $\vec{i}$ in $I^*$, and all $\vec{e}$ in $E^*$, both \\ $\Pr[\, \restrict{\model{\langle L, x_1\rangle}(\vec{i})}{E} \supseqeq \vec{e}\,] \leq \e{\epsilon'} \Pr[\, \restrict{\model{\langle L, x_2\rangle}(\vec{i})}{E} \supseqeq \vec{e}\,]$ and \\ $\Pr[\, \restrict{\model{\langle L, x_2\rangle}(\vec{i})}{E} \supseqeq \vec{e}\,] \leq \e{\epsilon'} \Pr[\, \restrict{\model{\langle L, x_1\rangle}(\vec{i})}{E} \supseqeq \vec{e} \,]$.
\end{lemma}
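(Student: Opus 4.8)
The plan is to prove both inequalities simultaneously by induction on $|\vec{i}| + |\vec{e}|$, using the characterization of the cone probabilities in terms of the extended transition relation $\wtrans$ developed in Appendix~\ref{app:autom-details}. Writing $P(x,\vec{i},\vec{e})$ for $\Pr[\,\restrict{\model{\langle L, x\rangle}(\vec{i})}{E} \supseqeq \vec{e}\,]$, the facts I would use are: $P(x,\vec{i},\langle\rangle) = 1$ for every $x$; $P(\bot,\vec{i},\vec{e}) = 0$ and $P(x,\langle\rangle,\vec{e}) = 0$ for $x$ input-accepting, whenever $\vec{e} \neq \langle\rangle$; if $x$ is $R$-producing with $x \wtrans^r \nu$ then $P(x,\vec{i},r\cons\vec{e}') = \sum_{x' \in \supp(\nu)} \nu(x')\,P(x',\vec{i},\vec{e}')$ while $P(x,\vec{i},e\cons\vec{e}') = 0$ when $e \neq r$; and if $x$ is input-accepting and $\vec{i} = i\cons\vec{i}'$ with $x \wtrans^i \nu$, then $P(x,i\cons\vec{i}',\vec{e}) = \sum_{x'\in\supp(\nu)}\nu(x')\,P(x',\vec{i}',\vec{e})$ when $i \in D$ (inputs in $D$ are not observable), $P(x,i\cons\vec{i}',e\cons\vec{e}') = \sum_{x'}\nu(x')\,P(x',\vec{i}',\vec{e}')$ when $i \in Q$ and $i = e$, and $P(x,i\cons\vec{i}',e\cons\vec{e}') = 0$ when $i\in Q$ and $i \neq e$. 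Transition determinism, output determinism, and quasi-input enabling guarantee that each $H$-disabled state is either $R$-producing under a unique response or input-accepting under every input, so these cases are exhaustive and the distributions $\nu$ above are well defined.

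Before the induction I would record that $\mathcal{R}^{\epsilon'}$ relates only states of the same kind: Definition~\ref{def:unwinding} forces $x_1$ and $x_2$ to enable, via $\wtrans$, exactly the same subset of $I \cup R$, so if $x_1$ is $R$-producing with response $r$ then (by output determinism) so is $x_2$, if $x_1$ is input-accepting then so is $x_2$, and if $x_1 = \bot$ then $x_2 = \bot$. The inductive step then dispatches the degenerate situations first: if $\vec{e} = \langle\rangle$ both probabilities equal $1 \le \e{\epsilon'}$; if $x_1 = x_2 = \bot$, or if $x_1,x_2$ are input-accepting with $\vec{i} = \langle\rangle$, both equal $0$. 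In the remaining case Definition~\ref{def:unwinding} supplies a common action $a$ --- the response $r$ when the $x_j$ are $R$-producing, the head $i$ of $\vec{i}$ when they are input-accepting --- together with distributions $\nu_1,\nu_2$ satisfying $x_j \wtrans^a \nu_j$, a real $\delta \in [0,\epsilon']$, and a bijection $\beta\colon \supp(\nu_1)\to\supp(\nu_2)$ witnessing $\nu_1 \mathrel{\lift{\mathcal{R}^{\epsilon'-\delta},\delta}} \nu_2$, hence $x' \mathrel{\mathcal{R}^{\epsilon'-\delta}} \beta(x')$ and $\nu_1(x') \le \e{\delta}\,\nu_2(\beta(x'))$ for every $x' \in \supp(\nu_1)$ (and the symmetric bound with $\nu_1,\nu_2$ swapped). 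If the recursion yields $0$ on the $x_1$ side --- a mismatched head response or a mismatched leading query --- it yields $0$ on the $x_2$ side too, since the vanishing condition depends only on the common kind and common action, so the inequality is immediate there.

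In the one remaining subcase the $x_1$ probability equals $\sum_{x' \in \supp(\nu_1)} \nu_1(x')\,P(x',\vec{j},\vec{f})$, where $(\vec{j},\vec{f})$ is one of $(\vec{i},\vec{e}')$, $(\vec{i}',\vec{e})$, or $(\vec{i}',\vec{e}')$ --- in every case with $|\vec{j}| + |\vec{f}| < |\vec{i}| + |\vec{e}|$ --- and the $x_2$ probability is the analogous sum over $\supp(\nu_2)$ with the \emph{same} pair $(\vec{j},\vec{f})$ (because related states fire the same response, or consume the same leading input). Bounding term by term with $\nu_1(x') \le \e{\delta}\,\nu_2(\beta(x'))$, applying the induction hypothesis $P(x',\vec{j},\vec{f}) \le \e{\epsilon'-\delta}\,P(\beta(x'),\vec{j},\vec{f})$ --- legitimate since $x' \mathrel{\mathcal{R}^{\epsilon'-\delta}} \beta(x')$, since $\epsilon'-\delta \in [0,\epsilon]$, and since the measure has strictly dropped --- and then re-indexing the sum along $\beta$, the $x_1$ probability is at most $\e{\epsilon'}\sum_{y \in \supp(\nu_2)} \nu_2(y)\,P(y,\vec{j},\vec{f})$, which is $\e{\epsilon'}$ times the $x_2$ probability. (All sums are over countable supports with nonnegative terms, so the term-wise bounds combine without convergence concerns.) The reverse inequality comes from the identical computation with $\nu_1,\nu_2$ and $x_1,x_2$ interchanged, which is why the two statements must be carried through the induction together.

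The main obstacle I expect is not any single estimate but keeping the case bookkeeping exactly right: verifying the recursive description of the cone probabilities from the automaton semantics (in particular that an input in $D$ leaves $\vec{e}$ untouched whereas a query in $Q$ or a response in $R$ consumes the leading letter of $\vec{e}$), checking in every branch that the ``probability $0$'' situations coincide on the two sides, and arranging the single reindexing step so that it cleanly realizes the budget split $\epsilon' = (\epsilon'-\delta) + \delta$ --- the approximate lifting contributes the $\e{\delta}$ factor on the weights and the induction hypothesis contributes the $\e{\epsilon'-\delta}$ factor on the subprobabilities, and their product is exactly $\e{\epsilon'}$. Everything else is routine.
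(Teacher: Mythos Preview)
Your proposal is correct and follows essentially the same route as the paper: induction on the structure of $\vec{i}$ and $\vec{e}$, case analysis on the kind of $H$-disabled state, and the key estimate that splits $\e{\epsilon'}$ into an $\e{\delta}$ factor from the approximate lifting on the weights and an $\e{\epsilon'-\delta}$ factor from the inductive hypothesis (the paper packages this step as Proposition~\ref{prp:nuw-help}). The only cosmetic differences are that the paper organizes its cases by the leading symbol of $\vec{e}$ first rather than by the kind of state, and it proves one inequality and invokes symmetry via $\beta^{-1}$ rather than carrying both through the induction; also note that your trichotomy ``$\bot$ / $R$-producing / input-accepting'' should strictly include the case of a non-$\bot$ $H$-disabled state with no transitions at all, but such a state gives probability $0$ on any nonempty $\vec{e}$ just like $\bot$, so this does not affect the argument.
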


The above lemma shows that two states related by an $\epsilon$-unwinding family, given the same input sequence, produce distributions that only deviate by a factor $\epsilon$.  
Thus, to maintain $\epsilon$-differential noninterference, we desire that a state $s$ should upon receiving a single data point $d$ transition to a state $s'$ that can be put into an $\epsilon$-unwinding family with $s$.  We formalize this intuition with the next definition and confirm it with the following theorem.

\begin{definition}[Covers]\label{def:cover}
We say that an $\epsilon$-unwinding family $\mathcal{R}^\cdot$ for a \textsc{plts} $L$ \emph{covers} a state $s$ and data point $d$ of $L$ if $s \wtrans^d \nu$ implies that $\nu(\bot) = 0$ and for all $s' \in \supp(\nu)$, $s \mathrel{\mathcal{R}^{\epsilon}} s'$.
\end{definition}

\begin{theorem}\label{thm:unwinding}
For an automaton $\M = \langle L, s_0\rangle$, if for all $H$-disabled states $s$ reachable from $s_0$ and all data points $d$, there exists a $\epsilon$-unwinding family that covers $s$ and $d$, then $\model{\M}$ has $\epsilon$-differential noninterference. 
\end{theorem}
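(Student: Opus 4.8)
The plan is to reduce $\epsilon$-differential noninterference for $\M$ to Lemma~\ref{thm:near-unwinding} by showing that any two input sequences $\vec{i}_1$ and $\vec{i}_2$ differing on exactly one data point can be understood through a single pair of states that are related by an $\epsilon$-unwinding family. Without loss of generality, $\vec{i}_2$ is obtained from $\vec{i}_1$ by inserting one data point $d$ at some position. First I would feed the common prefix of the two sequences (everything before the insertion point) into $\M$ starting from $s_0$. Because the automaton is not deterministic, feeding a prefix does not land us in a single state but in a distribution over $H$-disabled states; I would handle this by conditioning on each reachable $H$-disabled state $s$ and arguing state-by-state. For each such reachable $s$, the hypothesis gives an $\epsilon$-unwinding family $\mathcal{R}^{\cdot}_{s,d}$ that covers $s$ and $d$. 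Applying the transition $s \wtrans^d \nu$, covering tells us $\nu(\bot)=0$ and $s \mathrel{\mathcal{R}^{\epsilon}_{s,d}} s'$ for every $s' \in \supp(\nu)$.

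The key observation is that running $\vec{i}_1$ from $s$ corresponds to first (conceptually) having consumed nothing extra, i.e.\ continuing from $s$ on the suffix $\vec{i}_1'$; whereas running $\vec{i}_2$ from $s$ corresponds to consuming the data point $d$ (moving to some $s' \in \supp(\nu)$) and then continuing on the same suffix $\vec{i}_1'$. Here I need a small ``input absorption'' fact: performing the input action $d$ from $s$ is exactly the extended transition $s \wtrans^d \nu$ (data points are inputs, and a $d$-input step followed by hidden steps lands in $\supp(\nu)$ with the stated probabilities), so $\restrict{\model{\langle L, s\rangle}(d\cons\vec{i}_1')}{E}$ is the mixture, over $s' \in \supp(\nu)$ with weights $\nu(s')$, of $\restrict{\model{\langle L, s'\rangle}(\vec{i}_1')}{E}$. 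Since $d$ is a data point, not a query or response, it contributes nothing to the $E$-projection, so no spurious observable action is introduced by the insertion.

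Now I combine the pieces. For a fixed target set $S \subseteq E^*$ and a fixed reachable $H$-disabled state $s$, write $p_1(s) = \Pr[\restrict{\model{\langle L,s\rangle}(\vec{i}_1')}{E} \supseqeq S]$ and $p_2(s) = \Pr[\restrict{\model{\langle L,s\rangle}(d\cons\vec{i}_1')}{E} \supseqeq S] = \sum_{s' \in \supp(\nu)} \nu(s')\,\Pr[\restrict{\model{\langle L,s'\rangle}(\vec{i}_1')}{E} \supseqeq S]$. By covering, each $s'$ satisfies $s \mathrel{\mathcal{R}^{\epsilon}_{s,d}} s'$, so Lemma~\ref{thm:near-unwinding} (with $\epsilon' = \epsilon$) gives $\Pr[\restrict{\model{\langle L,s'\rangle}(\vec{i}_1')}{E} \supseqeq S] \leq \e{\epsilon}\, p_1(s)$ and, symmetrically, $p_1(s) \leq \e{\epsilon}\,\Pr[\restrict{\model{\langle L,s'\rangle}(\vec{i}_1')}{E} \supseqeq S]$. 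Averaging the first inequality against the weights $\nu(s')$ (which sum to $1$ since $\nu(\bot)=0$) yields $p_2(s) \leq \e{\epsilon}\, p_1(s)$; averaging the second yields $p_1(s) \leq \e{\epsilon}\, p_2(s)$. Finally I average over $s$ against the distribution induced by the common prefix on reachable $H$-disabled states, and since that prefix is identical for $\vec{i}_1$ and $\vec{i}_2$, the same mixing weights apply on both sides, giving $\Pr[\restrict{\model{\M}(\vec{i}_2)}{E} \supseqeq S] \leq \e{\epsilon}\,\Pr[\restrict{\model{\M}(\vec{i}_1)}{E} \supseqeq S]$ and the reverse inequality. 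Because the roles of $\vec{i}_1$ and $\vec{i}_2$ are symmetric (either could be the one with the extra point), this is exactly Definition~\ref{def:strong-diff-priv}.

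I expect the main obstacle to be the bookkeeping around nondeterminism and the formal manipulation of $\model{\cdot}$ as a mixture: one must carefully justify that conditioning on the reachable $H$-disabled state after the common prefix decomposes $\restrict{\model{\M}(\vec{i})}{E}$ into the claimed weighted combination, that insertion of $d$ really corresponds to the extended transition $\wtrans^d$ at exactly that point (which relies on quasi-input enabling to guarantee $s \wtrans^d \nu$ exists, and on $d \notin E$ so the projection is unaffected), and that the prefix-induced weights are genuinely identical on the two sides. The appeal to Lemma~\ref{thm:near-unwinding} and the convexity/averaging of the multiplicative bound are then routine; the delicacy is entirely in setting up the mixture decomposition correctly against the machinery of Appendix~\ref{app:autom-details}.
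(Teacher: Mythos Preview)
Your strategy is the paper's: isolate the single point where the two input sequences diverge, invoke the covering hypothesis there to get $s \mathrel{\mathcal{R}^{\epsilon}} s'$ for every $s'\in\supp(\nu)$, apply Lemma~\ref{thm:near-unwinding} to each such pair, and average using $\nu(\bot)=0$. The paper carries this out by simultaneous induction on $\vec{i}_1$, $\vec{i}_2$, and $\vec{e}$, peeling off one input or one observable output at a time (via Propositions~\ref{thm:prefix-sem} and~\ref{thm:nu}) until the divergence point is reached, whereas you package the same computation as a one-shot mixture over ``the $H$-disabled state reached after the common prefix''. The content is the same; the paper's induction is just the step-by-step version of your decomposition.

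There is one point where your write-up is looser than it needs to be. You condition on an arbitrary reachable $H$-disabled state $s$ and then invoke $s \wtrans^d \nu$, but $H$-disabled states include output-producing ones, and from those no $d$-transition exists (output determinism forbids it, and quasi-input-enabling only helps once \emph{some} input is already enabled at $s$). Concretely, if the last input in the common prefix is a query $q$, the extended transition $\wtrans^q$ lands you in a response-producing state; you must first let the automaton emit that response (and possibly further responses, all in $R\subseteq E$, hence contributing to the observable prefix you condition on) before you reach an input-accepting state that can absorb $d$. The paper's induction handles exactly this through the $s \trans^r \mu$ subcases scattered across its case analysis. If you refine ``processing the common prefix'' to also flush pending outputs until an input-accepting state is reached---tracking the resulting observable prefix jointly with the state in your mixture---your decomposition is correct and the rest of your argument goes through unchanged.
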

Appendix~\ref{app:unwinding-soundness} holds the proofs of 
Lemma~\ref{thm:near-unwinding} and Theorem~\ref{thm:unwinding}.
We prove Lemma~\ref{thm:near-unwinding} by induction over the structure of $\vec{a}$.  The interesting cases arise when $\vec{a}$ is of the form  $i\cons\vec{a}'$ for $i \in I$ or $o\cons\vec{a}'$ for $o \in O$, which require similar reasoning.  Suppose that $\vec{a} = i\cons\vec{a}'$ and $s_1 \wtrans^i \nu_1$ for some $i \in I$.  By the unwinding relation, we know that there exists a transition $s_2 \wtrans^i \nu_2$ such that $\nu_1$ and $\nu_2$ are in keeping with the privacy leakage bound imposed by the unwinding relation.  Then for states $s'_1 \in \supp(\nu_1)$, and $s'_2 \in \supp(\nu_2)$, we apply the inductive hypothesis for $\vec{a}'$ to obtain the result.  

To prove Theorem~\ref{thm:unwinding}, we use Proposition~\ref{prp:strong-by-trace} and show for all $\vec{i}_1$, $\vec{i}_2$, and $\vec{e}$ where $\Delta(\vec{i}_1, \vec{i}_2) = 1$ that $\Pr[\restrict{\model{\langle L, s\rangle}(\vec{i}_1)}{E} {\supseqeq} \vec{e}] \leq  \e{\epsilon} \Pr[\restrict{\model{\langle L, s\rangle}(\vec{i}_2)}{E} {\supseqeq} \vec{e}]$.  We use proof by induction over $\vec{i}_1$, $\vec{i}_2$, and $\vec{e}$.  When we reach the point where $\vec{i}_1$ and $\vec{i}_2$ differ by a data point $d$, we apply Lemma~\ref{thm:near-unwinding} knowing that an $\epsilon$-unwinding family exists for the current state $s$ and $d$.

\subsection{Example: Applying the Proof Technique}
\label{sec:unwinding-ex}

We now return to the parametric automaton model $\exparamM(\K)$ of Section~\ref{sec:ex-autom}.  We show that for any $\K$, every state $s$ and data point $d$ of $\exparamL(\K)$ 
can be \emph{covered} by a $(2t*\epsilon)$-unwinding family $\mathcal{R}^{\cdot}_{s,d}$ in the sense of Definition~\ref{def:cover}.  Differential noninterference will follow from Theorem~\ref{thm:unwinding}.  

For the $(2t*\epsilon)$-unwinding family $\mathcal{R}^{\cdot}_{s,d}$, we construct for each $j$ in $[0,t]$ the unwinding relation $\mathcal{R}^{2j*\epsilon}_{s,d}$.  To construct these unwinding relations, we first introduce some notation. 

For a state $s = \langle pc, \vec{B}, \vec{n}, c, y, r, k\rangle$ and $d \in D$, $\mathsf{add}(s,c',d)$ adds $d$ to the slot $c'$ of the state $s$.  Formally, 
\[ \mathsf{add}(s,c',d) =  \langle pc, \vec{B}', \vec{n}, c, y, r, k\rangle \]
where $\vec{B}' = \vec{B}$ and $\vec{n}' = \vec{n}$ when $n_c = v$ and, otherwise, $B'_{c} = B_{c} \bagunion \{d\}$, $n'_{c} = n_{c}+1$, and for all $c' \neq c$,  $B'_{c'} = B_{c'}$ and $n'_{c'} = n_{c'}$.

The function $\mathsf{swap}$ replaces one data point with another. Formally,

\[ \mathsf{swap}(s,c',d,d') =  \langle pc, \vec{B}', \vec{n}, c, y, r, k\rangle \] where $B'_{c'} = B_{c'} - \{d'\} \bagunion \{d\}$ and $B'_{c''} = B_{c''}$ for all $c'' \neq c'$.   

For $j$ such that $0 \leq j \leq t$, let $S_1^{j}$ to be the set of all states $s_1$ such that $s_1$ is reachable from $s$ using $t-j$ queries and any number of data points. Intuitively, this means that from $s_1$ one can pose $j$ more queries until the privacy budget runs out on the data point that is input into the system in state $s$. We define the relations as follows:

\begin{itemize}
\item For $j > 0$, let $\mathcal{R}^{2j*\epsilon}_{s,d}$ to be such that for all $s_1 \in S_1^{j}$, $s_1 \mathrel{\mathcal{R}^{2j*\epsilon}_{s,d}} \mathsf{add}(s_1,c,d)$ and for all $d'$, $s_1 \mathrel{\mathcal{R}^{2j*\epsilon}_{s,d}} \mathsf{swap}(s_1,c,d,d')$ where $s = \langle pc, \vec{B}, \vec{n}, c, y, r, k\rangle$.  That is, $\mathcal{R}^{2j*\epsilon}_{s,d}$ relates a state to the states it could have become had it received $d$ as input when the \verb|curSlot| was $c$, the value \verb|curSlot| had in state $s$.  

\item For $j = 0$, $\mathcal{R}^{2j*\epsilon}_{s,d}$ is as above for states with a PC of $16$ and is equality for those with a PC of $08$.
\end{itemize}

\begin{lemma}\label{lem:near-system-unwinding}
For all sets $\K$ of functions such that each function in $\K$ has $\epsilon$-differential privacy, for all states $s$ and for all data points $d$, $\mathcal{R}^{\cdot}_{s,d}$ is a $(2t*\epsilon)$-unwinding family for the automaton $\exparamM(\K)$.
\end{lemma}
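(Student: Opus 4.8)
The plan is to verify each clause of Definition~\ref{def:unwinding} directly for the family $\mathcal{R}^{\cdot}_{s,d}$ and the bound $2t\epsilon$. For indices $\epsilon' \in [0,2t\epsilon]$ not of the form $2j\epsilon$ with $j \in \{0,\dots,t\}$, complete the family by setting $\mathcal{R}^{\epsilon'}_{s,d} = \emptyset$; the unwinding condition is then vacuous at such indices, and since every transition considered below changes the index by only $0$ or $2\epsilon$, the family never needs to land on one. So fix $j$, a related pair $x_1 \mathrel{\mathcal{R}^{2j\epsilon}_{s,d}} x_2$ (hence $x_1 \in S_1^{j}$ and $x_2$ is $\mathsf{add}(x_1,c,d)$ or $\mathsf{swap}(x_1,c,d,d')$ for some $d'$, with $j=0$ at $pc=08$ forcing $x_1 = x_2$), and an action $a \in I \cup R$. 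Because $\mathsf{add}$ and $\mathsf{swap}$ leave the program counter unchanged, $x_1$ and $x_2$ are enabled for exactly the same actions --- all inputs when $pc=08$ by quasi-input enabling, the single matching output when $pc=16$ by output determinism --- so the ``$\nu_1$ exists iff $\nu_2$ exists'' requirement is automatic, and every extended transition in play has $\nu_i(\bot)=0$.

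Next comes a case analysis on $a$ and the program counter. When $a \in D$ and $pc=08$, the extended transition is \textbf{Drop} or \textbf{Add}, hence a Dirac distribution, so take $\delta=0$; the work is to check that the resulting pair is again $\mathcal{R}^{2j\epsilon}_{s,d}$-related, a routine commutation argument once one notes that the current slot of $x_1$ equals $c$ only in the extreme cases $j\in\{0,t\}$ (with $j=0$ trivial), and that when $j=t$ adding the incoming point may overflow slot $c$ in the $\mathsf{add}$-world while it is retained in the plain world --- which is the $\mathsf{swap}$ clause, and is why $\mathsf{swap}$ appears in the relation at all. When $a \in Q$ and $pc=08$ with $j\ge 1$ (the case $j=0,\,pc=08$ being trivial as $x_1=x_2$), the extended transition is \textbf{Answer Query}: $x_1 \wtrans^a \nu_1$ and $x_2 \wtrans^a \nu_2$ are supported on states $\langle 16,\cdot,\cdot,\cdot,q,r',\kappa_q\rangle$ with weights $\Pr[\kappa_q(\biguplus_\ell B_\ell)=r']$ and $\Pr[\kappa_q(\biguplus_\ell B'_\ell)=r']$, where the two databases differ on at most two data points ($d$ alone in the $\mathsf{add}$ case; $d'$ removed and $d$ added in the $\mathsf{swap}$ case). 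Take $\delta = 2\epsilon$: applying Definition~\ref{def:Dwork-diff-priv} twice --- i.e.\ the factor-$e^{2\epsilon}$ group-privacy bound together with its symmetric counterpart --- to the singleton events $\{r'\}$ shows simultaneously that $\supp(\nu_1)$ and $\supp(\nu_2)$ correspond under ``same $r'$'' and that $|\ln\nu_1(\cdot)-\ln\nu_2(\cdot)|\le 2\epsilon$; the matched states are $\mathcal{R}^{2(j-1)\epsilon}_{s,d}$-related because their databases are again an $\mathsf{add}$/$\mathsf{swap}$ pair and a query input step carries $S_1^{j}$ into $S_1^{j-1}$. When $a\in R$ and $pc=16$, the \textbf{Delete Old Data} transition is Dirac ($\delta=0$); the only delicate point is that once $j$ has reached $0$ the recycled slot is precisely $c$, so the stored copy of $d$ is erased and the $\mathsf{add}$/$\mathsf{swap}$-world collapses onto the plain world --- which is why $\mathcal{R}^{0}_{s,d}$ is equality on $pc=08$ states, equality being a trivial tail of any unwinding family. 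The remaining combinations ($a$ an input at $pc=16$, $a$ an output at $pc=08$) are vacuous.

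The main obstacle is the index bookkeeping in the query case: one must pin down which $S_1^{j'}$ each post-transition state inhabits --- a query input advances the effective query count (and costs $2\epsilon$), while data-point inputs and output steps leave it fixed (and cost $0$) --- and must check that the budget reaches $0$ exactly as slot $c$ is recycled, so that $\mathcal{R}^{0}_{s,d}$ seamlessly closes the family. The factor of two, and hence the $2t\epsilon$ bound rather than $t\epsilon$, is forced precisely here: since the single point $d$ both enters every query's database and can, by filling slot $c$, evict a later point (the $\mathsf{swap}$ case), the databases compared at a generic query differ by two rather than one, leaving no $\delta$ smaller than $2\epsilon$ available per query. The degenerate case $t=0$, where $\mathsf{add}$ and $\mathsf{swap}$ are no-ops and the whole family reduces to equality, and the fact that the transient $y$-component of the state is irrelevant to every step, are both routine to dispatch.
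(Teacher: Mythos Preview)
The proposal is correct and follows essentially the same approach as the paper's proof in Appendix~\ref{app:example-unwinding-proofs}: a case analysis on the action type, taking $\delta=0$ for data-point inputs and response outputs (where the extended transitions are Dirac) and $\delta=2\epsilon$ for queries (applying $\epsilon$-differential privacy twice to cover the two-point database difference induced by $\mathsf{add}$/$\mathsf{swap}$). Your explicit completion of the family by $\emptyset$ at indices that are not multiples of $2\epsilon$, and your observation that the current slot coincides with $c$ only at $j\in\{0,t\}$, tidy up points the paper leaves implicit but do not change the argument.
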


Appendix~\ref{app:example-unwinding-proofs} holds the proof. 
The proof uses a case analysis over the different types of actions $a$ that might be received by two related states.  The most interesting case is when $a$ is a query and $j=1$.  In this case, $s_1 \mathrel{\mathcal{R}^{\epsilon'}_{s,d}} s_2$ implies that $s_1$ is in $S_1^{t-1}$ with $s_1$ and $s_2$ reached in $t-1$ queries.  For a $2t*\epsilon$ privacy leakage bound, this corresponds to the last time $d$ may be used in answering a query.  This requirement is met since for $s_1$ and $s_2$ to be reached with $t-1$ queries, by the construction of $\exparamM(\K)$, \verb|curSlot| in both states must be $t-1$ slots away from the slot that holds $d$.  Thus, after answering the next query the slot \verb|curSlot|, whose value is always mod $t$, will point to the slot that holds $d$ and that slot will be rewritten removing $d$.  

Since $\mathcal{R}^{2j*\epsilon}_{s,d}$ covers $s$ and $d$ for all states $s$ and data points $d$ of the automaton $\exparamM(\K)$, Lemma~\ref{lem:near-system-unwinding} and Theorem~\ref{thm:unwinding} implies that the automaton has $(2t*\epsilon)$-differential noninterference.

\begin{theorem}\label{thm:system-unwinding}
For all set of functions $\K$ such that each function in $\K$ has $\epsilon$-differential privacy, $\exparamM(\K)$ has $(2t*\epsilon)$-differential noninterference.
\end{theorem}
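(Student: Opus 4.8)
\section*{Proof proposal for Theorem~\ref{thm:system-unwinding}}

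The plan is to obtain the theorem immediately from Lemma~\ref{lem:near-system-unwinding} and Theorem~\ref{thm:unwinding}. All of the quantitative work---tracking the budget $2t*\epsilon$ as it is spent one query at a time---has already been carried out in the proof of Lemma~\ref{lem:near-system-unwinding}, so the only thing that remains is to verify the hypothesis of Theorem~\ref{thm:unwinding}: that for every $H$-disabled state $s$ reachable from $s_0$ and every data point $d$, some $(2t*\epsilon)$-unwinding family \emph{covers} $s$ and $d$ in the sense of Definition~\ref{def:cover}. Fix a set $\K$ of sanitization functions each having $\epsilon$-differential privacy. By Lemma~\ref{lem:near-system-unwinding}, for every state $s$ and data point $d$ the family $\mathcal{R}^{\cdot}_{s,d}$ is a $(2t*\epsilon)$-unwinding family for $\exparamM(\K)$, so it suffices to show that this particular family covers $s$ and $d$.

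To check the covering condition I would reason by cases on the $H$-disabled state $s$. The $H$-disabled states of $\exparamM(\K)$ are exactly those with $pc = 08$ or $pc = 16$. If $pc = 16$ then $s$ is output-enabled, so by output determinism it has no transition on any data point, and the implication in Definition~\ref{def:cover} holds vacuously. If $s = \langle 08, \vec{B}, \vec{n}, c, y, r, k\rangle$, then (by the \textbf{Input} rule) $s$ accepts $d$, and its extended transition $s \wtrans^d \nu$ is the \textbf{Drop} transition when $n_c = v$ and the \textbf{Add} transition when $n_c < v$; in both cases $\nu$ is a Dirac distribution. Hence no infinite run of hidden actions is possible, so $\nu(\bot) = 0$, and $\supp(\nu)$ consists of the single state $s' = \mathsf{add}(s,c,d)$: when $n_c = v$, \textbf{Drop} leaves $\vec{B},\vec{n}$ unchanged, matching the $n_c = v$ clause of $\mathsf{add}$; when $n_c < v$, \textbf{Add} performs exactly the update $B'_c = B_c \bagunion \{d\}$, $n'_c = n_c + 1$ that $\mathsf{add}$ prescribes.

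It then remains to observe that $s$ itself lies in $S_1^{t}$, since $s$ is reachable from $s$ using $t - t = 0$ queries and no data points. Consequently the defining clause of $\mathcal{R}^{2t*\epsilon}_{s,d}$ (the $j = t$ member of the family), instantiated at $s_1 := s$, gives $s \mathrel{\mathcal{R}^{2t*\epsilon}_{s,d}} \mathsf{add}(s,c,d) = s'$, which is precisely the covering requirement $s \mathrel{\mathcal{R}^{2t*\epsilon}_{s,d}} s'$. (The boundary case $t = 0$, where the system stores no data and the claimed bound is $0$, is argued the same way using the equality relation that $\mathcal{R}^{\cdot}_{s,d}$ assigns at index $0$ for $pc = 08$.) Having established covering for every $H$-disabled $s$ and every data point $d$, Theorem~\ref{thm:unwinding} gives that $\exparamM(\K)$ has $(2t*\epsilon)$-differential noninterference, which is the claim.

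I do not expect a genuine obstacle in this step, precisely because the hard part lives in Lemma~\ref{lem:near-system-unwinding}; what is left here is a routine inspection of the extended transitions of $\exparamM(\K)$. The only points that warrant a moment's care are (i) confirming that the $d$-transitions out of an $H$-disabled state are genuinely Dirac, so that $\nu(\bot) = 0$ and the singleton support both come for free, and (ii) matching the post-states of \textbf{Drop} and \textbf{Add} with $\mathsf{add}(s,c,d)$ up to the scratch components $y$, $r$, $k$; for (ii) one either reads the definition of $\mathsf{add}$ as also overwriting $y$ (consistent with the way the initial values $y_0,r_0,k_0$ are treated) or, equivalently, notes that the unwinding relations are insensitive to those components since they are always replaced before use.
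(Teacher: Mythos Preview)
Your proposal is correct and takes the same approach as the paper: use Lemma~\ref{lem:near-system-unwinding} to obtain the unwinding families, verify the covering condition, and apply Theorem~\ref{thm:unwinding}. You in fact supply more detail on covering than the paper does (it simply asserts it), and your remark about the $y$-component in $\mathsf{add}$ versus the \textbf{Drop}/\textbf{Add} extended transitions is a fair observation about a minor presentational wrinkle in the definitions.
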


As $\textsc{count}$ and $\textsc{sum}$ are $\epsilon$-differentially private functions, this implies that $\exparamM(\{\textsc{count},\textsc{sum}\})$ has $(2t*\epsilon)$-differential noninterference.  Furthermore, as explained in Section~\ref{sec:compo-ex}, subroutine composition shows that $\excsM$, a system with $\textsc{count}$ and $\textsc{sum}$ implemented as subroutines instead of atomic transitions, has $(2t*\epsilon)$-differential noninterference.  Thus, we have proved that our example has $(2t*\epsilon)$-differential noninterference.  In the next section we turn to mechanically verifying differential noninterference.

\section{Mechanizing Verification of Unwinding}
\label{sec:checking-algo}

We provide an algorithm that soundly checks if a given family of relations is an unwinding family for a given automaton. While our algorithm does not generate the unwinding family, it automates the process of showing that a candidate family satisfies all the conditions for being an unwinding family (Definition~\ref{def:unwinding}). 
By repeatedly applying our algorithm to a collection of relation families, we can algorithmically check that the covering condition of Theorem~\ref{thm:unwinding} holds and that automaton has differential noninterference. The process of verifying an unwinding relation family manually is typically tedious and sometimes error-prone.  The existence of a mechanized verifier hence adds practical value to the proof technique presented in the previous section and justifies its use in favor of ad hoc proof methods.

\subsection{Algorithm}
Our algorithm $\mathtt{isUnwindFam}$ takes as input a labeled transition system of finite size, an array of relations over the system's states, a value $\delta$, and a natural number $t$.  
The array $\rel$ may only represent relation families $\mathcal{R}^{\cdot}$ over the interval $[0,t*\delta]$ of a restricted form.  
$\mathcal{R}^\cdot$ must be such that $\mathcal{R}^{j\delta} = \mathcal{R}^{k\delta}$ for all $j$ and $k$ such that $\lfloor j \rfloor = \lfloor k\rfloor$.  That is, it must be possible to break the index set of $\mathcal{R}^{\cdot}$ into $t$ intervals of size $\delta$ such that the relations in that interval are the same and one point corresponding to $\mathcal{R}^{t*\delta}$.

\newcommand{\s}{\ \ \ \ } 
\begin{figure}
\mbox{}$\mathtt{isUnwindFam}(\langle S, I, O, T\rangle, \rel, \delta, t)$\\
\mbox{}\s convert all hidden actions of $\langle\langle S, D, Q, R, T\rangle, s_0\rangle$ to be the same one\\
\mbox{}\s if($|\rel| \neq t+1$),\\
\mbox{}\s\s return $\mathtt{false}$\\
\mbox{}\s for all $i$ in $[0,t]$,\\
\mbox{}\s\s for all $\langle x_1, x_2\rangle \in \rel[i]$,\\
\mbox{}\s\s\s for all $a \in I \cup O$,\\
\mbox{}\s\s\s\s if ($T[x_1][a] = \nil$ xor $T[x_2][a] = \nil$),\\
\mbox{}\s\s\s\s\s then return $\mathtt{false}$\\
\mbox{}\s\s\s\s if ($T[x_1][a] \neq \nil$ and $T[x_2][a] \neq \nil$),\\
\mbox{}\s\s\s\s\s $\nu_1 = \mathtt{closure}(\langle S, I, O, T\rangle, x_1, a)$\\
\mbox{}\s\s\s\s\s $\nu_2 = \mathtt{closure}(\langle S, I, O, T\rangle, x_2, a)$\\
\mbox{}\s\s\s\s\s if(not $\mathtt{isInLiftedRelation}(S_\bot, \rel[i], 0, \nu_1, \nu_2)$)\\
\mbox{}\s\s\s\s\s\s if($i = 0$),\\
\mbox{}\s\s\s\s\s\s\s return $\mathtt{false}$\\
\mbox{}\s\s\s\s\s\s if(not $\mathtt{isInLiftedRelation}(S_\bot, \rel[i-1], \delta, \nu_1, \nu_2)$)\\
\mbox{}\s\s\s\s\s\s\s return $\mathtt{false}$\\
\mbox{}\s return $\mathtt{true}$
\caption{Algorithm for checking relation families.}\label{fig:checking-algo-uw}
\end{figure}
The algorithm is shown in Figure~\ref{fig:checking-algo-uw}.  It represents the transition relation $\trans$ as an array $T$ with $|S_\bot|$ rows and $|A|$ columns where $T[\bot][a] = \nil$ for all $a$.  The array either stores a distribution over next states or $\nil$ to indicate that the state cannot transition under that action.

The first step of the algorithm converts all the hidden actions to be the same one since $\mathtt{closure}$ presumes just one hidden action.  The function $\mathtt{closure}$, defined in Section~\ref{sec:compo-algo}, computes the distribution over states that results from the system exhibiting the observable behavior $a$ from a state $x_i$ and computing until reaching an $H$-disabled state.

The distributions resulting from $\mathtt{closure}$ are compared with the provided family $\rel$ using the function $\mathtt{isInLiftedRelation}$ to determine whether they obey the requirements of a unwinding family.
$\mathtt{isInLiftedRelation}(\mathsf{R}, \delta, \nu_1, \nu_2)$ checks if the two distributions $\nu_1$ and $\nu_2$ are related by the $\delta$-approximate lifting of $\mathsf{R}$.  This function operates in $O(|S|^{2.5})$ time by reducing the problem to the decision problem of if a perfect matching exists for a bipartite graph, which can be solved in $O(|S|^{2.5})$ using the Hopcroft-Karp algorithm~\cite{hk73n52}.  The reduction constructs a bipartite graph such that each vertex in the left part of the graph corresponds to a state in the support of $\nu_1$, and each in the right part to a state in the support of $\nu_2$.  Edges connect those states $x_1$ in the left part to those $x_2$ in the right part such that $x_1 \mathrel{\mathsf{R}} x_2$ and $|\ln \nu_1(x_1) - \ln \nu_2(x_2)| \leq \delta$.  A matching of graph that includes every vertex (i.e., a \emph{perfect} matching) exists iff there is a bijection showing that $\nu_1 \mathrel{\lift{\mathsf{R},\delta}} \nu_2$.  Appendix~\ref{app:lift-algo} formally presents the algorithm and proves this result.

The following lemmas state, respectively, the soundness and the runtime complexity of the algorithm.  Appendix~\ref{app:checking-algo-proofs} contains the proofs for this section.

\begin{lemma}[Soundness]\label{lem:checking-algo-uw-sound}
If the algorithm $\mathtt{isUnwindFam}(L,\rel,\delta,t)$ returns true, then $\rel$ corresponds to relation family that is $(t*\delta)$-unwinding family for $L$.
\end{lemma}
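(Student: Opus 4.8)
The plan is to show that whenever $\mathtt{isUnwindFam}(L,\rel,\delta,t)$ returns true, the relation family $\mathcal{R}^{\cdot}$ encoded by $\rel$ satisfies Definition~\ref{def:unwinding} with $\epsilon = t*\delta$; soundness for $(t*\delta)$-differential noninterference then follows by pushing this through Theorem~\ref{thm:unwinding} (and Lemma~\ref{thm:near-unwinding}), but the core obligation is just to verify the unwinding-family conditions directly against the structure of the algorithm. So first I would unwind the definitions: fix $\epsilon' \in [0, t*\delta]$ and states $x_1 \mathrel{\mathcal{R}^{\epsilon'}} x_2$ in $S_\bot$. By the restricted form of $\rel$ described in Section~\ref{sec:checking-algo}, $\mathcal{R}^{\epsilon'}$ equals $\rel[i]$ where $i = \lceil \epsilon'/\delta \rceil$ (with $i = t$ when $\epsilon' = t*\delta$, and $i=0$ only when $\epsilon'=0$); so $\langle x_1, x_2 \rangle \in \rel[i]$, and the algorithm has examined this pair in the $i$-th iteration of its outer loop.

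The second step handles the action-matching requirement. For each $a \in I \cup O$, the algorithm checks the \texttt{xor} condition on $T[x_1][a] = \nil$ versus $T[x_2][a] = \nil$ and returns $\mathtt{false}$ if exactly one is $\nil$. Since it returned true, for every $a$ we have $T[x_1][a] = \nil \iff T[x_2][a] = \nil$. I would then argue that $T[x][a] = \nil$ for all $a \in I \cup O$ is equivalent (for $H$-disabled $x$, and after collapsing hidden actions) to ``there is no $\nu$ with $x \wtrans^a \nu$'': if $x$ can take an $a$-transition and then perform hidden steps to some $H$-disabled state, $\mathtt{closure}$ returns a genuine distribution, and conversely. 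Hence for every $a$, $x_1$ has an outgoing $\wtrans^a$ iff $x_2$ does, which is the first half of Definition~\ref{def:unwinding}. (I should be careful about $\bot$: since $x_1, x_2$ range over $S_\bot$ and $\bot$ has no transitions, the case $x_1 = \bot$ or $x_2 = \bot$ either forces the other to be $\bot$ as well or is vacuous — this is a small bookkeeping point.)

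Third, when both transitions exist, the algorithm sets $\nu_1 = \mathtt{closure}(L, x_1, a)$ and $\nu_2 = \mathtt{closure}(L, x_2, a)$; by correctness of $\mathtt{closure}$ (Section~\ref{sec:compo-algo}) these are exactly the $\nu_1, \nu_2$ with $x_1 \wtrans^a \nu_1$ and $x_2 \wtrans^a \nu_2$. The algorithm then calls $\mathtt{isInLiftedRelation}(S_\bot, \rel[i], 0, \nu_1, \nu_2)$; if that succeeds, then by correctness of $\mathtt{isInLiftedRelation}$ we get $\nu_1 \mathrel{\lift{\rel[i], 0}} \nu_2 = \nu_1 \mathrel{\lift{\mathcal{R}^{\epsilon'}, 0}} \nu_2$, and choosing $\delta$-parameter $0 \in [0,\epsilon']$ with $\mathcal{R}^{\epsilon' - 0} = \mathcal{R}^{\epsilon'}$ discharges the condition. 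Otherwise, since the algorithm returned true we must have had $i \neq 0$ and $\mathtt{isInLiftedRelation}(S_\bot, \rel[i-1], \delta, \nu_1, \nu_2)$ succeeding, giving $\nu_1 \mathrel{\lift{\rel[i-1], \delta}} \nu_2$; here I pick the $\delta$-parameter to be $\delta$ itself, and I must check $\delta \in [0, \epsilon']$ — this holds because $i \neq 0$ means $\epsilon' > (i-1)\delta \geq 0$, and in fact $\epsilon' \geq \delta$ whenever $i \geq 1$ under the ceiling indexing (the case $\epsilon' \in (0,\delta)$ has $i=1$ and the available relations are $\rel[0]$ with parameter $\delta$, forcing $\epsilon'-\delta$ into $\rel[0]$'s interval $[0,\delta)$ after collapsing — I'd state this matching precisely). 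Then $\mathcal{R}^{\epsilon' - \delta} = \rel[i-1]$ by the indexing convention, so $\nu_1 \mathrel{\lift{\mathcal{R}^{\epsilon'-\delta},\delta}} \nu_2$ as required.

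The main obstacle is the bookkeeping at the \emph{boundaries} of the $\delta$-intervals: precisely matching the continuous index $\epsilon' \in [0,t\delta]$ used in Definition~\ref{def:unwinding} to the discrete array indices $\rel[0],\dots,\rel[t]$ via the stipulated ``$\mathcal{R}^{j\delta} = \mathcal{R}^{k\delta}$ when $\lfloor j\rfloor = \lfloor k\rfloor$'' convention, and verifying that the two candidate $\delta$-values the algorithm implicitly tries ($0$ and $\delta$) always land the residual budget $\epsilon'-\delta$ in an interval whose relation is the one the algorithm actually checked. Everything else — action matching, and the translation from $\mathtt{closure}$/$\mathtt{isInLiftedRelation}$ outputs to the $\wtrans$ and $\lift{\cdot,\cdot}$ predicates — is a direct appeal to the already-established correctness of those subroutines, and then the conclusion about $(t*\delta)$-differential noninterference is immediate from Theorem~\ref{thm:unwinding} once we know $\rel$ is a genuine $(t*\delta)$-unwinding family covering the reachable states.
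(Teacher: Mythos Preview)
Your approach is essentially the paper's: walk the loop structure of $\mathtt{isUnwindFam}$, use the \texttt{xor} check for action matching, use correctness of $\mathtt{closure}$ and $\mathtt{isInLiftedRelation}$ to translate the two inner checks into Conditions $\nu_1 \mathrel{\lift{\rel[i],0}} \nu_2$ or $\nu_1 \mathrel{\lift{\rel[i-1],\delta}} \nu_2$, and then argue these witness the required $\delta$-value in Definition~\ref{def:unwinding}.

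There is one concrete gap in your bookkeeping, precisely at the place you flag as ``the main obstacle.'' You set $i = \lceil \epsilon'/\delta\rceil$ and then assert ``$\epsilon' \geq \delta$ whenever $i \geq 1$ under the ceiling indexing.'' That is false: for $\epsilon' \in (0,\delta)$ the ceiling gives $i = 1$, yet $\epsilon' < \delta$, so the candidate $\delta$-parameter $\delta$ is \emph{not} in $[0,\epsilon']$ and your Condition~(2) branch cannot be invoked. Your parenthetical about ``forcing $\epsilon'-\delta$ into $\rel[0]$'s interval $[0,\delta)$'' does not repair this, since $\epsilon'-\delta$ is negative there. The paper avoids the problem by using the floor convention $\mathcal{R}^{\epsilon'} = \rel[\lfloor \epsilon'/\delta\rfloor]$, which is exactly the restricted form stated in Section~\ref{sec:checking-algo}: then $i = 0$ iff $\epsilon' \in [0,\delta)$, and the algorithm's ``if $i=0$ return \texttt{false}'' line correctly forces the $0$-parameter case on that interval; for $i \geq 1$ one has $\epsilon' \geq i\delta \geq \delta$ automatically, and $\lfloor(\epsilon'-\delta)/\delta\rfloor = i-1$ lands you in $\rel[i-1]$ as needed. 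Switching to floor resolves your boundary issue without changing anything else in your argument.

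A minor overreach: the final sentence about concluding $(t*\delta)$-differential noninterference via Theorem~\ref{thm:unwinding} belongs to Theorem~\ref{thm:checking-algo-cover-sound} (through $\mathtt{isAllCovered}$), not to this lemma, which only claims $\rel$ is a $(t*\delta)$-unwinding family.
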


\begin{lemma}[Runtime Complexity]\label{lem:checking-algo-uw-time}
The algorithm $\mathtt{isUnwindFam}$ runs in $O(t*|A|*|S|^4)$ time.
\end{lemma}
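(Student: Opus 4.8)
The plan is to walk through the algorithm $\mathtt{isUnwindFam}$ line by line and charge each piece of work to the claimed bound $O(t \cdot |A| \cdot |S|^4)$, treating the transition array $T$ and the relation array $\rel$ as given in a representation that supports constant-time lookup of $T[x][a]$ and linear-time enumeration of the pairs in each $\rel[i]$. First I would dispose of the cheap preprocessing: relabeling all hidden actions to a single action touches each transition once and costs $O(|S| \cdot |A|)$, and the cardinality check $|\rel| \neq t+1$ is $O(1)$ (or $O(t)$). These are dominated by the main loop, so they can be absorbed.

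Next I would analyze the nesting structure of the triple loop. The outer loop runs over $i \in [0,t]$, contributing a factor of $t+1$. The middle loop runs over pairs $\langle x_1, x_2\rangle \in \rel[i]$; since each $\rel[i]$ is a relation over $S_\bot$, it has at most $|S_\bot|^2 = O(|S|^2)$ pairs, contributing a factor of $O(|S|^2)$. The innermost loop runs over $a \in I \cup O = A$, contributing a factor of $|A|$. So the number of iterations of the innermost body is $O(t \cdot |S|^2 \cdot |A|)$, and it remains to bound the cost of one execution of that body.

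Inside the innermost body the work consists of: two $\nil$-checks on $T[x_1][a]$ and $T[x_2][a]$, which are $O(1)$; up to two calls to $\mathtt{closure}(L, x_i, a)$; and up to two calls to $\mathtt{isInLiftedRelation}$. By the analysis in Section~\ref{sec:compo-algo}, each call to $\mathtt{closure}$ runs in $O(|S|^3)$ (dominated by the matrix inversion and multiplication on matrices of dimension at most $|S|$), and as established via the Hopcroft--Karp reduction, each call to $\mathtt{isInLiftedRelation}$ runs in $O(|S|^{2.5})$. Hence one execution of the innermost body costs $O(|S|^3)$. Multiplying, the total runtime is $O(t \cdot |S|^2 \cdot |A|) \cdot O(|S|^3) = O(t \cdot |A| \cdot |S|^5)$.

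I should flag that this naive accounting gives $|S|^5$, one factor of $|S|$ worse than the claimed $|S|^4$, so the real content of the proof is recovering that factor. The key observation is that the call $\mathtt{closure}(L, x_i, a)$ depends only on the pair $(x_i, a)$, not on which pair of $\rel[i]$ we are currently examining; there are only $O(|S| \cdot |A|)$ distinct such pairs, and moreover the $H$-closure graph does not depend on $i$ at all. So the intended algorithm (or a trivial memoization of it) computes all the needed $\nu$'s once, in total time $O(|S| \cdot |A| \cdot |S|^3) = O(|A| \cdot |S|^4)$, and the inner body then only pays the $O(|S|^{2.5})$ matching cost, giving $O(t \cdot |S|^2 \cdot |A| \cdot |S|^{2.5}) = O(t \cdot |A| \cdot |S|^{4.5})$ for the comparison phase — still within $O(t \cdot |A| \cdot |S|^5)$ but one wants the stated $O(t\cdot|A|\cdot|S|^4)$. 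The cleanest route to exactly the stated bound is to note that the $\mathtt{closure}$ computations, the dominant cost, number $O(|A|\cdot|S|)$ and cost $O(|S|^3)$ each for a total of $O(|A|\cdot|S|^4)$ independent of $t$, while the matching checks number $O(t\cdot|S|^2\cdot|A|)$ at $O(|S|^{2.5})$ each; the main obstacle is arguing that the $H$-closure structure is shared so the $\mathtt{closure}$ cost is not re-incurred per relation-pair, and then confirming that with this sharing both phases fit under $O(t\cdot|A|\cdot|S|^4)$ — which requires the mild simplification of charging the matching step at $O(|S|^2)$ (a perfect-matching decision on a graph with $O(|S|)$ vertices and $O(|S|^2)$ edges) rather than $O(|S|^{2.5})$, or equivalently observing $|S|^{2.5}\le |S|^3$ and that this term is dominated by the per-pair bookkeeping already counted. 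I would present the argument in that order: preprocessing, loop-structure counting, per-call costs of $\mathtt{closure}$ and $\mathtt{isInLiftedRelation}$, the sharing observation for $\mathtt{closure}$, and finally the arithmetic combining the two phases.
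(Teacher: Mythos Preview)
Your approach is exactly the paper's: a line-by-line walk through $\mathtt{isUnwindFam}$, counting $O(t\cdot|S|^2\cdot|A|)$ iterations of the triple loop and charging each body execution the cost of two $\mathtt{closure}$ calls at $O(|S|^3)$ plus two $\mathtt{isInLiftedRelation}$ calls at $O(|S|^{2.5})$, hence $O(|S|^3)$ per body.  You are also right that this multiplies out to $O(t\cdot|A|\cdot|S|^5)$, not $|S|^4$.

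The paper's own proof in Appendix~\ref{app:checking-algo-proofs} performs precisely the same accounting --- $O(t\cdot|S|^2\cdot|A|)$ iterations, body dominated by the $O(|S|^3)$ closure calls --- and then simply asserts ``the whole loop is $O(t*|A|*|S|^4)$.''  It does not introduce any memoization or sharing argument; the stated product is off by one factor of $|S|$ as written.  So the discrepancy you flagged is real and is present in the paper's proof itself, not something you are missing.

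Your proposed repair --- observing that $\mathtt{closure}(L,x,a)$ depends only on $(x,a)$, so all distinct closure calls can be precomputed once in $O(|A|\cdot|S|^4)$ --- is sound and is the natural way to recover the claimed bound, but it is \emph{not} in the paper.  If you present your argument with that sharing step made explicit, you will have a cleaner proof than the one the paper gives.
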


We use $\mathtt{isUnwindFam}$ to construct an algorithm $\mathtt{isAllCovered}$ that checks a collection of relation families to conclude if they prove that an automaton has differential privacy (using Theorem ~\ref{thm:unwinding}). In particular, the algorithm takes as input an automaton, an array $\rels$ of relation families, $\delta$, and the natural number $t$.  For all states $s$ that are reachable from the start state of the automaton and data points $d$, the algorithm uses $\mathtt{isUnwindFam}$ to check whether $\rels[s][d]$ corresponds to a $(t*\delta)$-unwinding family that covers $s$ and $d$. 
The algorithm is shown in Figure~\ref{fig:checking-algo-cover}.
\begin{figure}
\mbox{}$\mathtt{isAllCovered}(\langle\langle S, I, O, T\rangle, s_0\rangle, \rels, \delta, t)$\\
\mbox{}\s $\mathtt{reachableStates := computeReachableStates(}\langle S, D, Q, R, T\rangle\mathtt{,} s_0\mathtt{)}$\\
\mbox{}\s for all $s$ in $\mathtt{reachableStates}$,\\
\mbox{}\s\s for all $d \in D$,\\
\mbox{}\s\s\s if($T[s][d] \neq \nil$),\\
\mbox{}\s\s\s\s $\nu = \mathtt{closure}(\langle S, I, O, T\rangle, s, d)$\\
\mbox{}\s\s\s\s if($\nu(\bot) \neq 0$ or $|\rels[s][d]| \neq t+1$),\\
\mbox{}\s\s\s\s\s return $\mathtt{false}$\\
\mbox{}\s\s\s\s for all $s' \in S$,\\
\mbox{}\s\s\s\s\s if($\nu(s') > 0$ and $\langle s, s'\rangle \notin \rels[s][d]$),\\
\mbox{}\s\s\s\s\s\s return $\mathtt{false}$\\
\mbox{}\s\s\s\s if(not $\mathtt{isUnwindFam}(\langle S, I, O, T\rangle, \rels[s][d], \delta)$\\
\mbox{}\s\s\s\s\s return $\mathtt{false}$\\
\mbox{}\s return $\mathtt{true}$
\caption{Algorithm for checking for differential privacy.}\label{fig:checking-algo-cover}
\end{figure}

The following theorems state the soundness and the runtime complexity of the procedure for checking whether all reachable states are covered by a given collection of relation families. 

\begin{theorem}[Soundness]\label{thm:checking-algo-cover-sound}
If $\mathtt{isAllCovered}(M, \rels, \delta, t)$ returns true, then $M$ has $(t*\delta)$-differential noninterference.
\end{theorem}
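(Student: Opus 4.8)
The plan is to show that a successful run of $\mathtt{isAllCovered}(M,\rels,\delta,t)$ establishes exactly the hypothesis of Theorem~\ref{thm:unwinding} with $\epsilon = t*\delta$, and then to invoke that theorem. Concretely, for every $H$-disabled state $s$ reachable from the start state $s_0$ of $M$ and every data point $d$, I will exhibit a $(t*\delta)$-unwinding family that covers $s$ and $d$ in the sense of Definition~\ref{def:cover}; the witness will be the relation family encoded by the array $\rels[s][d]$.

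First I would trace the control flow of the algorithm under the assumption that it returns $\mathtt{true}$. Since $\mathtt{computeReachableStates}$ is assumed correct, the outer loop ranges over a superset of the $H$-disabled states reachable from $s_0$. If such a state $s$ has $T[s][d] = \nil$, then by transition and output determinism no $\nu$ satisfies $s \wtrans^d \nu$, so the covering condition of Definition~\ref{def:cover} holds vacuously. Otherwise $T[s][d] \neq \nil$, and the algorithm computes $\nu = \mathtt{closure}(\langle S,I,O,T\rangle, s, d)$; by the correctness of $\mathtt{closure}$ from Section~\ref{sec:compo-algo} this is exactly the $\nu$ with $s \wtrans^d \nu$. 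Because no $\mathtt{false}$ was returned, we have $\nu(\bot) = 0$, and $|\rels[s][d]| = t+1$ so that the array encodes a relation family over $[0, t*\delta]$ of the restricted form used by $\mathtt{isUnwindFam}$, and for every $s' \in \supp(\nu)$ the pair $\langle s, s'\rangle$ lies in the relation at the top of the budget interval, i.e.\ $s \mathrel{\mathcal{R}^{t*\delta}} s'$. These last two facts are precisely the requirements that $\mathcal{R}^{\cdot} = \rels[s][d]$ covers $s$ and $d$.

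Next I would invoke Lemma~\ref{lem:checking-algo-uw-sound}: since the algorithm did not return $\mathtt{false}$ at the call to $\mathtt{isUnwindFam}$ on $\rels[s][d]$, that call returned $\mathtt{true}$, hence $\rels[s][d]$ corresponds to a $(t*\delta)$-unwinding family for the \textsc{plts} underlying $M$. Combining this with the covering conditions obtained above gives, for each reachable $H$-disabled $s$ and each $d$, a $(t*\delta)$-unwinding family covering $s$ and $d$, so Theorem~\ref{thm:unwinding} yields that $\model{M}$ has $(t*\delta)$-differential noninterference. A minor bookkeeping point is that $\mathtt{closure}$ presumes a single hidden action, so I would note that the single-hidden-action collapsing (performed inside $\mathtt{isUnwindFam}$, and which must likewise be understood before the $\mathtt{closure}$ calls in $\mathtt{isAllCovered}$) leaves the induced $\wtrans$ distributions unchanged, since it only relabels internal steps.

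The main obstacle is not any deep argument—the substantive work is already carried by Theorem~\ref{thm:unwinding} and Lemma~\ref{lem:checking-algo-uw-sound}—but the careful matching of the imperative tests in Figure~\ref{fig:checking-algo-cover} against the declarative conditions of Definition~\ref{def:cover}. In particular I would need to pin down that the membership test $\langle s, s'\rangle \in \rels[s][d]$ is against the endpoint relation $\mathcal{R}^{t*\delta}$ (the $\mathcal{R}^{\epsilon}$ of Definition~\ref{def:cover}), that reachable $H$-enabled states and reachable states lacking a $d$-transition are correctly handled as vacuous covering obligations, and that $\mathtt{computeReachableStates}$ returns \emph{every} $H$-disabled state reachable from $s_0$ so no required obligation is silently skipped.
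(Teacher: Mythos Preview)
Your proposal is correct and follows essentially the same approach as the paper: trace the control flow to see that a successful run verifies, for every reachable state $s$ and data point $d$, the covering conditions of Definition~\ref{def:cover} together with Lemma~\ref{lem:checking-algo-uw-sound} certifying that $\rels[s][d]$ is a $(t*\delta)$-unwinding family, and then invoke Theorem~\ref{thm:unwinding}. If anything, you are more careful than the paper about bookkeeping (identifying the membership test as being against $\mathcal{R}^{t*\delta}$, noting the hidden-action collapse, and discussing the vacuous case when no $d$-transition exists).
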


\begin{theorem}[Runtime Complexity]\label{thm:checking-algo-cover-time}
The algorithm $\mathtt{isAllCovered}$ runs in $O(t*|D|*|A|*|S|^5)$ time.
\end{theorem}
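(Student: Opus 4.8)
The plan is to charge the running time line by line against the control structure of $\mathtt{isAllCovered}$ (Figure~\ref{fig:checking-algo-cover}) and to observe that the dominant cost is the nested call to $\mathtt{isUnwindFam}$, whose complexity is supplied by Lemma~\ref{lem:checking-algo-uw-time}.

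First I would dispatch the preprocessing. Representing $\trans$ as the array $T$ with $|S_\bot|$ rows and $|A|$ columns, each nonempty entry is a distribution whose support we treat as an adjacency list; a standard graph search computes $\mathtt{reachableStates}$ from $s_0$ in $O(|S| \cdot |A|)$ time (at most $|A|$ transitions out of each of at most $|S|$ states), which is absorbed by the bounds below. Then I would analyze a single iteration of the body of the double loop indexed by a reachable state $s$ and a data point $d \in D$: the guard $T[s][d] \neq \nil$ is $O(1)$; the call $\mathtt{closure}(\langle S, I, O, T\rangle, s, d)$ costs $O(|S|^3)$ by the analysis of $\mathtt{closure}$ in Section~\ref{sec:compo-algo}; the tests $\nu(\bot) \neq 0$ and $|\rels[s][d]| \neq t+1$ cost $O(|S| + t)$; the loop over $s' \in S$ performs at most $|S|$ lookups of $\nu(s')$ together with membership tests $\langle s, s'\rangle \in \rels[s][d]$, for a total of $O(|S|^2)$ under the natural representation of each relation; and finally the call $\mathtt{isUnwindFam}(\langle S, I, O, T\rangle, \rels[s][d], \delta, t)$ costs $O(t \cdot |A| \cdot |S|^4)$ by Lemma~\ref{lem:checking-algo-uw-time}. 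This last term dominates all the others, so one iteration of the body costs $O(t \cdot |A| \cdot |S|^4)$.

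Multiplying by the number of iterations of the double loop --- at most $|S|$ reachable states $s$ times $|D|$ data points $d$ --- yields a total of $O(|S| \cdot |D| \cdot t \cdot |A| \cdot |S|^4) = O(t \cdot |D| \cdot |A| \cdot |S|^5)$, which is the claimed bound, after confirming that the preprocessing term is of strictly lower order.

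I do not expect a genuine obstacle here: the substantive content of the argument is already carried by Lemma~\ref{lem:checking-algo-uw-time}, and the only points that merit care in the write-up are fixing a concrete cost model for membership in the relations $\rels[s][d]$ and for accessing the distributions returned by $\mathtt{closure}$, and checking that every term other than the $\mathtt{isUnwindFam}$ call is of strictly lower order so that no hidden factor inflates the stated polynomial.
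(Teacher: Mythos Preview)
Your proposal is correct and follows essentially the same line-by-line accounting as the paper's own proof: both identify the call to $\mathtt{isUnwindFam}$ as the dominant $O(t\cdot|A|\cdot|S|^4)$ term per iteration (via Lemma~\ref{lem:checking-algo-uw-time}) and multiply by the $|S|\cdot|D|$ iterations of the outer two loops. Your estimates for the reachability computation and the inner $s'$-loop are slightly more conservative than the paper's, but they are still absorbed by the dominant term and do not affect the final bound.
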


While sound, the algorithm is not complete even for this restricted class of unwinding relations it accepts as input.  The algorithm (soundly) rejects any family if it has a relation that relates two states that transition to distributions over next states that differ by more than $\delta$.  That is, it requires that the automaton never leaks more than a $\delta$ worth of private information in a single step.  Furthermore, it pessimistically presumes that every leakage of private information is a whole $\delta$s worth.  

Nevertheless, we believe the algorithm is still of interest.  In the next section, we show that it is powerful enough to prove that our example system, which is similar to \textsc{pinq}, has differential noninterference.  
While this system only has two very simple sanitization functions, $\mathsc{count}$ and $\mathsc{sum}$, our algorithm will work for more complex sanitization functions provided they can be computed with a finite number of states.

\subsection{Example: Using the Algorithm}

To use our algorithm, we must first model the above program as an automaton $\excsM$ with the subroutines \textsc{count} and \textsc{sum} in-lined as explained in Section~\ref{sec:compo-ex}.  Then, we must construct $\rels$, which stores all the needed $(2t*\epsilon)$-unwinding families in the correct format.  Such families exist since whenever $\excsM$ leaks privacy, it leaks no more than $2*\epsilon$ in a single step, and, thus, we can use $2*\epsilon$ for $\delta$.  These families are instances of the parametric families shown in Section~\ref{sec:unwinding-ex}.
The reader can confirm that these families may be expressed in the needed format for $\rels$.  

Indeed, as the body of the sanitization functions consists entirely of $H$-enabled states, only the distributions over return values matter to our algorithm in that they influence the computation of $\mathtt{closure}$ and nothing more.  Thus, the general families further shows that our algorithm can verify any modification of $\excsM$ that substitutes a different set of $\epsilon$-differentially private functions for $\{\mathsc{count}, \mathsc{sum}\}$ provided that those functions can be implemented using a bounded number of states as we would expect from the discussion of composition in Section~\ref{sec:compo-ex}. 

\section{Related Work}
\label{sec:related-work}

\paragraph{Formal Verification of Differential Privacy}
The most closely related work to ours is a programming language with a linear type system for proving that well-typed programs in the language have differential privacy~\cite{rp10distance}.  Later work applies their type system to detecting network attacks in a private manner~\cite{rawhps10differential}.  
The usual trade-offs between a program analysis technique designed to work over standard programming languages and a custom type system for a specialized language apply: the type system makes explicit in the source code why the program has differential privacy and type checking scales well, but the programmer must use a special-purpose programming language and annotate the code as the type system requires.  Additionally, their programming language lacks I/O commands for creating interactive systems whereas our proof technique is for automata modeling interactive systems.

\paragraph{Other Differential Privacy Definitions} 
The definition of differential privacy may be seen as largely a simplification of the previously defined notion of $\epsilon$-\emph{indistinguishability}~\cite{dmns06calibrating}, which explicitly models interaction between a private system and the data examiner as in our definition of differential noninterference. Our definition, however, is cast in the framework of probabilistic automata rather than Turing machines. This supports having structured models that are capable of highlighting issues arising from the bounded memory of actual computers.   Furthermore, we deal with non-termination using prefixes allowing us to leverage previous work on formal methods for automata (e.g.,~\cite{lsv07observing}).  

Differential privacy is a very active research field giving rise to new definitions and techniques at a fast pace~\cite{Dwork10,DNPR10}. For example, \emph{pan-privacy} is a notion of differential privacy that gives differential privacy against adversaries that can observe the internal state of a system, in addition to outputs~\cite{MPRV09}. \emph{Computational differential privacy} gives certain differential privacy guarantees against computationally bounded adversaries. Our definition of differential noninterference and the formal proof technique was developed from the definition of Dwork~\cite{d06differential}.  We think that our choice of probabilistic automata as a model would prove useful in extending the work of this paper to these new definitions as well. For example, algorithms such as stream-processing algorithms that have been subject to research from pan-privacy point of view can be naturally modeled using probabilistic automata. Similarly, probabilistic automata-based models have successfully been used in the formal analysis of cryptographic protocols against computationally bounded adversaries~\cite{st07,backes07reactive,CCKLLPS08}. 

\paragraph{Information-Flow Properties}
Differential noninterference has some similarities with information flow properties such as noninterference~\cite{gm82security}.  
The literature contains several works on the use of transition systems, observational equivalences, and various notions of bisimulation relations to define information flow properties.
 To name a few, Focardi and Gorrieri have developed a classification of noninterference-like properties in the unifying framework of a process algebra in a non-probabilistic setting~\cite{FG01classification}.  Sabelfeld and Sands~\cite{SS00}, and Smith~\cite{Smith03} have used probabilistic bisimulation in defining probabilistic noninterference for multi-threaded programs, which they enforce using type systems.
Probabilistic noninterference is regarded by many to be too strong in practice since it requires the probabilities of traces of the system observable by low-level users to be identical for any pair of high-level inputs (data points in our setting)~\cite{g91toward,g92toward}.
As noninterference is often too strong of a requirement, weaker probabilistic versions have been proposed that allow for some information leakage~\cite{phw04approximate,bp02computation}.
Di Pierro, Hankin, and Wiklicky introduced \emph{approximate noninterference}~\cite{phw04approximate}, and Backes and Pfitzmann introduced \emph{computational probabilistic noninterference}~\cite{bp02computation}, both of which allow for some information leakage.  
However, unlike differential noninterference, they do not allow the system behavior to diverge as the difference between the high-level inputs (data points) increases.  This divergence, which is allowed by our differential noninterference definition (Proposition~\ref{prp:strong-n-composition} in Appendix~\ref{app:mut-diff-proofs}), is needed to release meaningful statistics and gain utility from the data set as discussed in detail in Section~\ref{sec:intro}.
 
Quantitative information flow analysis attempts to determine how much information a program provides an adversary about a 
sensitive input or class of inputs.  Clark, Hunt, and Malacaria present a formal model of programs for quantifying information flows 
and a static analysis that provides lower and upper bounds on the amount of information that flows~\cite{chm07static}.  
They measure information flow as the mutual information between the high-level inputs and low-level outputs given that the 
adversary has control over the low-level inputs.  Malacaria extends this work to handle loops~\cite{m07assessing}, and Chen 
and Malacaria to multi-threaded programs~\cite{cm07quantitiative}.  McCamant and Ernst~\cite{me07simulation}, and Newsome and 
Song~\cite{ns08influence} provide dynamic analyses for quantitative information flow using the mutual information formalization.  
There is also recent work on efficient computation of information leakage in the information theoretic-sense using a probabilistic automaton model~\cite{APRS10}. 
All of the above approaches assume that the adversary's beliefs are aligned with the actual distribution producing the sensitive input(s) 
and that adversary has no additional background knowledge.  
Clarkson, Myers, and Schneider instead propose a formulation using the beliefs of the adversary~\cite{cms05belief}.  However, such a formulation may be difficult to apply in  practice because the surveyor may not know the adversary's beliefs.  
An advantage of differential privacy is that no assumptions are needed about the adversary's auxiliary information, computational power,
or beliefs.

\paragraph{Proof Techniques for Transition Systems}
Simulation and bisimulation provide a systematic proof technique for showing implementation and equivalence relationships between two automata~\cite{CCSbook,LV95,LS95} 
and are related to unwinding (see e.g.,~\cite{bfpr03bisimulation}).  Most similar to our unwinding technique, Segala and Turrini have studied approximate simulation relations in the context of cryptographic protocols~\cite{st07}.  Their work differs from ours by using asymptotic approximations and only executions of polynomial length in terms of a security parameter.  
Their work allows certain transitions of the protocol to not have a matching transition in the specification.
This models the capability of the adversary to compromise correctness.  
A protocol is deemed correct if the leakage accumulated at the end of a polynomial length execution is exponentially small in some security parameter.  Our unwinding technique, on the other hand, requires that there always be an approximately matching transition, uses an exact error bound, and considers executions of any length.  
However, the probabilities of those transitions are only within some exponential multiplicative factor of one another. 
Thus, neither approach subsumes the other.
Furthermore, our relations are over states whereas theirs is over prefixes of executions. 

Much work has been done on decision algorithms for probabilistic simulation and bisimulation~\cite{bhk04,BEM99a,pls00weak,cs02decision}. Particularly relevant are the works of Baier and Hermans~\cite{bhk04}, and Cattani and Segala~\cite{cs02decision} on decision algorithms for weak bisimulations. Since our unwinding relations keep track of an error bound in the form of indices in a relation family, the methods of these papers to generate relations do not readily apply to our setting. We limit ourselves to checking if a given relation family is an unwinding family rather than generating one. Extending these prior works to our setting remains as future work. 

Finding refinement methods that preserve information flow properties has been investigated by several authors~\cite{Mantel01,Jurjens01,HPS01,ACZ06}. In most of those works refinement is used in the sense of reducing various flavors of nondeterminism in an abstract system. For example, Mantel focuses on a range of information flow properties and unwinding conditions as local conditions that imply these properties~\cite{Mantel01}. He then presents some operators that refine a given transition system such that these conditions are preserved in the system refined by the given operators. We have a more restricted goal in this paper, namely, to pin down the conditions under which  an abstract internal transition can be replaced by a sequence of internal transitions in a way that will preserve differential noninterference. This is sufficient for our purposes because such transition replacements are the sources of different abstraction levels that typically arise in the analysis of systems we consider in this paper.

\section{Future Work}
\label{sec:future-work}

The results of this paper represent progress towards developing a basis for the formal verification of differential privacy for systems, but leave open several interesting directions that we plan to explore in future work.
We hope to create a decision procedure for our proof technique by extending prior work on decision procedures for probabilistic bisimulations~\cite{bhk04,BEM99a,pls00weak,cs02decision} to make them produce a family of relations rather than a single one. We also plan to extend the theory to model and reason about higher level systems, such as computer systems of hospitals and other distributed systems~\cite{rrsksw10airavat} that allow interactions of the system with data providers and with data analysts, while protecting the privacy of the data stored and manipulated by the system. For example, {\sc airavat} allows computations over data distributed in a cloud, and combines mandatory access control with 
differential privacy where differential privacy is used to facilitate declassification governed by the privacy error bound set by a data provider. Our techniques can currently apply to the verification of differential privacy property of the {\sc airavat} system using a whole-system model. We are interested in 
exploring the computational model of {\sc airavat} further to understand the interplay between the fine-grained access control mechanisms and the differential privacy mechanisms in stating the end-to-end information-flow guarantee of 
{\sc airavat}. Moreover, we wish to extend compositionality aspects of our framework so that we can decompose the reasoning about such properties, and exploit our proof technique for differential noninterference for parts of the proof. Finally, while the current paper uses manually constructed automata models of systems, we plan to develop techniques to extract such models from source code of software systems such as \textsc{pinq}~\cite{m09privacy} and {\sc airavat}~\cite{rrsksw10airavat}.

\paragraph{Acknowledgments}
We thank Jeremiah Blocki and Michael Dinitz for helping us understand infinity.

\bibliographystyle{alpha}
\bibliography{diff-priv}

\appendix\newcommand{\appsection}[1]{\section{#1}}\newcommand{\appsubsection}[1]{\subsection{#1}}

\appsection{The Truncated Geometric Mechanism}
\label{app:bdm}

\appsubsection{The Mechanism}
The Truncated Geometric Mechanism of Ghosh et al.~\cite{grs09universally} is an adaptation of the Laplace mechanism made to produce outputs over only a bounded range of discrete values.  The Laplace mechanism works by computing the exact result of some statistic $f$ and then adding noise drawn from a Laplace distribution.  The amount of noise depends upon both the privacy parameter $\epsilon$ and the \emph{sensitivity} of $f$.  The sensitivity of $f$ is the amount the value that $f$ computes can change by adding or removing a single data point from the data set.  Formally, the sensitivity of $f$, denoted $\delta(f)$, is maximum value that $|f(B_1) - f(B_2)|$ can take on where $B_1$ and $B_2$ ranges over all pairs of data sets differing by one data point.  Using $\kappa^{\mathsf{LM}}_{f,\epsilon}$ to denote the Laplace mechanism applied to the statistic $f$, we have that $\kappa^{\mathsf{LM}}_{f,\epsilon} (B) = f(B) + \mathsf{Lap}(\delta(f)/\epsilon)$ where $\mathsf{Lap}(b)$ is a random variable producing noise according to the Laplace distribution centered at zero with variance $2b^2$.

To make the Laplace distribution discrete, start by noting that informally the Laplace distribution is two exponential distributions back to back.  That is, $\Pr[\mathsf{Lap}(b) {=} x] = \Pr[\mathsf{Exponential}(1/b) {=} |x|]$ where $\mathsf{Exponential}(\lambda)$ is the exponential distribution with the p.d.f.\ of $\lambda\e{-\lambda x}$ at $x$ for $x \leq 0$ and $0$ otherwise.  Since the discrete version of the exponential distribution is a geometric distribution, one can use two geometric distributions back to back to create a ``discrete'' Laplace distribution.
Formally, $\Pr[\mathsf{Exponential}(\lambda) {=} x] = \Pr[\mathsf{Geo}(\e{-\lambda}) {=} \lfloor x\rfloor]$ where $\Pr[\mathsf{Geo}(p){=}k] = p^k(1-p)$ (i.e., $p$ is the ``failure probability'').  
Using $\mathsf{DL}$ to denote this distribution, we have that $ \Pr[\mathsf{DL}(p){=}n] = p^{|n|}\frac{1-p}{1+p}$.

Next, one must bound the mechanism to produce only results between the minimal and maximum numbers that the computer can represent.  For simplicity we assume that the minimum is $-m$ where $m$ is the maximum.  Thus, we need that the result of adding noise $f(B) + N$ is such that $-m \leq f(B) + N \leq m$ where $N$ is random variable generating noise.  This implies that $-m - f(B) \leq N \leq m - f(B)$ requiring that $N$ depends upon both $m$ and $f(B)$ in addition to $\epsilon$ and $\delta(f)$.

At this point, it may be tempting to simply take the discrete Laplace distribution $\mathsf{DL}$ and condition on the noise being between $-m - f(B)$ and $m - f(B)$.  This will produce a bounded distribution such that the probability of producing two adjacent outputs are within a multiplicative factor of one another.  However, since the condition involves the value of $f(B)$, the distributions resulting from two adjacent data sets may differ.  In general, they need not be within a multiplicative factor of one another.

Fixing this problem requires adding extra weight to the probability of producing the extreme results $-m$ and $m$ for $f(B) + N$.  Intuitively, this extra weight account for the tails being cut off.  Formally, it comes from a system of equations constraining the relationship between each pair of distributions $N(m,f(B_1),\e{-\epsilon/\delta(f)})$ and $N(m,f(B_2),\e{-\epsilon/\delta(f)})$ where $B_1$ and $B_2$ differ by one data point.  Formally,
\begin{align*}
\Pr[N(m,t,p){=}n] &= 
\begin{cases}
p^{|n|} * \frac{1}{1+p} & |t + n| = m\\
p^{|n|} * \frac{1-p}{1+p} & -m < t + n <  m\\
0 & \text{ otherwise}
\end{cases}
\end{align*}

$N$ produces noise for $\kappa_{f,\epsilon}$, a differentially private mechanism for the statistic $f$:
\[ \kappa_{f,\epsilon}(B) = f(B) + N(m,f(B),\e{-\epsilon/\delta(f)}) \]

\begin{proposition}[Differential Privacy]
\label{prp:bdm-diff-priv}
For all integers $m > 0$, for all functions $f$ from data sets to $\{-m, \ldots, m\}$, the function $\kappa_{f,\epsilon}$ has $\epsilon$-differential privacy.
\end{proposition}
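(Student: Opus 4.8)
The plan is to reduce $\epsilon$-differential privacy of $\kappa_{f,\epsilon}$ to a pointwise multiplicative bound on the probability of each individual output value, and then to establish that bound by a one-line computation after a short case split on the shape of the noise law $N$. As preliminaries I would first record that for every $t\in\{-m,\ldots,m\}$ and every $p\in(0,1]$, $N(m,t,p)$ is a genuine probability distribution, supported exactly on $\{-m-t,\ldots,m-t\}$ --- this is a routine geometric-series summation (and is the reason for placing the extra weight $\tfrac{1}{1+p}$ on the two extreme outcomes), and may also be cited from~\cite{grs09universally}. Consequently $\kappa_{f,\epsilon}(B)=f(B)+N(m,f(B),p)$ is a well-defined distribution on the finite (hence countable) set $\{-m,\ldots,m\}$, so $\range{\kappa_{f,\epsilon}}$ is countable and Definition~\ref{def:Dwork-diff-priv} applies. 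I would then dispose of the degenerate cases $\delta(f)=0$ and $\epsilon=0$: in either case $\kappa_{f,\epsilon}(B_1)$ and $\kappa_{f,\epsilon}(B_2)$ are literally the same distribution for data sets differing by one point (when $\delta(f)=0$ because $f(B_1)=f(B_2)$; when $\epsilon=0$ because $p=1$ makes $N(m,f(B),1)$ put mass $\tfrac12$ on each of the two points $m-f(B)$ and $-m-f(B)$, independent of $f(B)$), so the inequality is trivial. Hence assume $\delta(f)>0$ and $\epsilon>0$, so that $p=\e{-\epsilon/\delta(f)}\in(0,1)$ is well defined. Finally, since probabilities add over the countable range, it suffices to prove $\Pr[\kappa_{f,\epsilon}(B_1){=}z]\leq\e{\epsilon}\,\Pr[\kappa_{f,\epsilon}(B_2){=}z]$ for every single $z\in\{-m,\ldots,m\}$; summing over $z\in S$ then yields the bound for an arbitrary $S\subseteq\range{\kappa_{f,\epsilon}}$.

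For the pointwise bound, fix data sets $B_1,B_2$ differing on at most one element and put $a=f(B_1)$, $b=f(B_2)$; by the definition of sensitivity, $|a-b|\leq\delta(f)$. For $z\in\{-m,\ldots,m\}$ we have $\Pr[\kappa_{f,\epsilon}(B_i){=}z]=\Pr[N(m,f(B_i),p){=}z-f(B_i)]$, and since $f(B_i)+(z-f(B_i))=z$ lies in $\{-m,\ldots,m\}$, the definition of $N$ gives $\Pr[\kappa_{f,\epsilon}(B_1){=}z]=p^{|z-a|}\,w(z)$ and $\Pr[\kappa_{f,\epsilon}(B_2){=}z]=p^{|z-b|}\,w(z)$ with the \emph{same} positive factor $w(z)$ (equal to $\tfrac{1}{1+p}$ if $|z|=m$ and to $\tfrac{1-p}{1+p}$ if $-m<z<m$). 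The common factor cancels, so
\[ \frac{\Pr[\kappa_{f,\epsilon}(B_1){=}z]}{\Pr[\kappa_{f,\epsilon}(B_2){=}z]}=p^{\,|z-a|-|z-b|}. \]
By the reverse triangle inequality $|z-a|-|z-b|\geq-|a-b|\geq-\delta(f)$, and since $0<p<1$ the map $x\mapsto p^{x}$ is nonincreasing, so $p^{\,|z-a|-|z-b|}\leq p^{-\delta(f)}=\e{\epsilon}$. This is exactly the pointwise bound, which completes the proof.

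I do not anticipate a genuine obstacle. The one point that needs care is the case split between boundary outputs ($|z|=m$) and interior outputs, together with the observation that the added boundary weight is precisely what makes the normalising constant of $N(m,t,p)$ independent of $t=f(B)$; this $t$-independence is what lets the per-output ratio collapse to $p^{|z-a|-|z-b|}$ with no residual multiplicative term. By contrast, the naive truncate-and-renormalise construction discussed just before the proposition would leave a factor $Z(b)/Z(a)$ of the two input-dependent normalisers, which is not bounded by $\e{\epsilon}$. The only genuinely computational part is verifying that $N(m,t,p)$ sums to $1$, which I would either carry out directly or take from~\cite{grs09universally}.
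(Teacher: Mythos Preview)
Your proposal is correct and follows essentially the same approach as the paper: reduce to a pointwise bound, compute $\Pr[\kappa_{f,\epsilon}(B){=}z]$ to see that the normalising factor depends only on whether $|z|=m$ (not on $f(B)$), cancel it, and bound the remaining ratio $p^{|z-a|-|z-b|}$. Your final step via the reverse triangle inequality is in fact cleaner than the paper's four-subcase split on the signs of $f(B_1)-f(B_2)$ and $|r-f(B_2)|-|r-(f(B_2)+\partial)|$, and your explicit handling of the degenerate cases $\delta(f)=0$ and $\epsilon=0$ is a welcome addition the paper omits.
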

\begin{proof}
By a lemma similar to Proposition~\ref{prp:strong-by-trace}, since $\kappa_{f,\epsilon}$ is discrete, it gives $\epsilon$-differential privacy iff for all data sets $B_1$ and $B_2$ differing on at most one element, and for all $r \in \range{\kappa_{f,\epsilon}}$,
\[ \Pr[\kappa_{f,\epsilon}(B_1) = r] \leq \e{\epsilon} * \Pr[\kappa_{f,\epsilon}(B_2) = r] \]

Note
\begin{align*}
\Pr[\kappa_{f,\epsilon}&(B) = r]\\
&= \Pr[f(B) + N(m,f(B),\e{-\epsilon/\delta(f)}) = r]\\
&= \Pr[N(m,f(B),\e{-\epsilon/\delta(f)}) = r-f(B)]\\
&=\begin{cases}
\e{-\epsilon/\delta(f)}^{|r-f(B)|} * \frac{1}{1+\e{-\epsilon/\delta(f)}} & |f(B) + (r-f(B))| = m\\
\e{-\epsilon/\delta(f)}^{|r-f(B)|} * \frac{1-\e{-\epsilon/\delta(f)}}{1+\e{-\epsilon/\delta(f)}} & -m < f(B) + (r-f(B)) <  m\\
0 & \text{ otherwise}
\end{cases}\\
&=\begin{cases}
\e{-|r-f(B)|\epsilon/\delta(f)} * \frac{1}{1+\e{-\epsilon/\delta(f)}} & |r| = m\\
\e{-|r-f(B)|\epsilon/\delta(f)} * \frac{1-\e{\epsilon/\delta(f)}}{1+\e{-\epsilon/\delta(f)}} & -m < r <  m\\
0 & \text{ otherwise}
\end{cases}
\end{align*}
Thus, if $r > m$ or $r < -m$, $\Pr[\kappa_{f,\epsilon}(B_1) = r] = 0 \leq 0 = \e{-\epsilon} * \Pr[\kappa_{f,\epsilon}(B_2) = r]$.  Otherwise, since the normalization factor, which depends on whether $|r| = m$ or not, is the same on each side of the inequality $\Pr[\kappa_{f,\epsilon}(B_1) = r] \leq \e{\epsilon} * \Pr[\kappa_{f,\epsilon}(B_2) = r]$, the inequality holds iff 
\[ e{-|r-f(B_1)|\epsilon/\delta(f)} \leq \e{\epsilon} \e{-|r-f(B_2)|\epsilon/\delta(f)} \]
 
Since $B_1$ and $B_2$ only differ by at most one data point, we know that $|f(B_1) - f(B_2)| \leq \delta(f)$.

Case: $f(B_2)\leq f(B_1)$.  In this case, $f(B_1) - f(B_2) \leq \delta(f)$.
Let $f(B_1) - f(B_2) = \partial$ so that $\e{-|r-f(B_1)|\epsilon/\delta(f)} = \e{-|r-(f(B_2)+\partial)|\epsilon/\delta(f)}$.

\begin{itemize}
\item Subcase: $|r-f(B_2)| \leq |r-(f(B_2)+\partial)|$.  In this case, $\e{-|r-(f(B_2)+\partial)|\epsilon/\delta(f)} \leq \e{-|r-f(B_2)|\epsilon/\delta(f)}$.  Thus, 
\[ \e{-|r-f(B_1)|\epsilon/\delta(f)} \leq \e{\epsilon} \e{-|r-f(B_2)|\epsilon/\delta(f)} \]
 since $\epsilon \leq 0$.

\item Subcase: $|r-f(B_2)| \geq |r-(f(B_2)+\partial)|$.  Let $\partial' = |r-f(B_2)| - |r-(f(B_2)+\partial)|$.  Since $\partial' \leq \partial$,
\begin{align*}
\e{-|r-f(B_1)|\epsilon/\delta(f)} 
&= \e{-(|r-(f(B_2)|-\partial')\epsilon/\delta(f)}\\
&= \e{(\partial'-|r-f(B_2)|)\epsilon/\delta(f)}\\ 
&\leq \e{(\partial-|r-f(B_2)|)\epsilon/\delta(f)}\\
&\leq \e{(\delta(f)-|r-f(B_2)|)\epsilon/\delta(f)}\\
&= \e{(\delta(f)\epsilon/\delta(f))-(|r-f(B_2)|\epsilon/\delta(f))}\\
&= \e{\epsilon-(|r-f(B_2)|\epsilon/\delta(f))}\\
&= \e{\epsilon} \e{-|r-f(B_2)|\epsilon/\delta(f)}
\end{align*}
\end{itemize}

Case: $f(B_1) \leq f(B_2)$.  In this case, $-(f(B_1) - f(B_2)) = f(B_2) - f(B_1) \leq \delta(f)$
Let $f(B_2) - f(B_1) = \partial$ so that 
\[ \e{-|r-f(B_2)|\epsilon/\delta(f)} = \e{-|r-(f(B_1)+\partial)|\epsilon/\delta(f)} \]
\begin{itemize}
\item Subcase: $|r-f(B_1)| \leq |r-(f(B_1)+\partial)|$. 
Let $\partial' = |r-(f(B_1)+\partial)| - |r-f(B_1)|$.  Since $\partial' \leq \partial$,
\begin{align*}
\e{-|r-f(B_1)|\epsilon/\delta(f)} 
&= \e{(-|r-f(B_1)|\epsilon - \delta(f)\epsilon + \delta(f)\epsilon)/\delta(f)}\\
&= \e{((-|r-f(B_1)|-\delta(f))\epsilon + \epsilon\delta(f))/\delta(f)}\\
&= \e{(-|r-f(B_1)|-\delta(f))\epsilon/\delta(f) + \epsilon\delta(f)/\delta(f)}\\
&= \e{(-|r-f(B_1)|-\delta(f))\epsilon/\delta(f) + \epsilon}\\
&\leq \e{\epsilon} \e{(-|r-f(B_1)|-\delta(f))\epsilon/\delta(f)}\\
&\leq \e{\epsilon} \e{(-|r-f(B_1)|-\partial')\epsilon/\delta(f)}\\
&= \e{\epsilon} \e{-(|r-f(B_1)|+\partial')\epsilon/\delta(f)}\\
&= \e{\epsilon} \e{-|r-(f(B_1)+\partial)|\epsilon/\delta(f)}\\
&= \e{\epsilon} \e{-|r-f(B_2)|\epsilon/\delta(f)}
\end{align*}

\item Subcase: $|r-f(B_1)| \geq |r-(f(B_1)+\partial)|$.  
In this case, we have that $\e{-|r-(f(B_1))|\epsilon/\delta(f)} \leq \e{-|r-f(B_1)+\partial|\epsilon/\delta(f)}$.  Thus, 
\[ \e{-|r-f(B_1)|\epsilon/\delta(f)} \leq \e{\epsilon} \e{-|r-f(B_2)|\epsilon/\delta(f)} \]
 since $\epsilon \leq 0$.
\end{itemize}
\end{proof}

The probability of $\kappa_{f,\epsilon}(B)$ being $b$ or more away from $f(B)$ decreases exponentially in $b$.
\begin{proposition}[Utility]
\label{prp:bdm-useful}
$\Pr[|\kappa_{f,\epsilon}(B) - f(B)| \geq b] \leq \frac{2p^b}{1+p}$.
\end{proposition}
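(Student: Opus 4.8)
The plan is to bound the tail of the truncated noise by the tail of the \emph{untruncated} discrete Laplace distribution $\mathsf{DL}(p)$, whose tail probability is an exact geometric sum equal to $\frac{2p^{b}}{1+p}$. Writing $p = \e{-\epsilon/\delta(f)}$ and $t = f(B)$, we have $\kappa_{f,\epsilon}(B) - f(B) = N(m,t,p)$, so the goal becomes $\Pr[\,|N(m,t,p)| \geq b\,] \leq \frac{2p^{b}}{1+p}$. Since $f$ maps data sets into $\{-m,\ldots,m\}$, we have $-m \leq t \leq m$, so $N(m,t,p)$ is supported on the integers from $n_- = -m-t \leq 0$ to $n_+ = m-t \geq 0$, with $n_- < n_+$. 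Note that the crude bound that replaces each $\Pr[N{=}n]$ by $\frac{p^{|n|}}{1+p}$ gives only $\frac{2p^{b}}{1-p^{2}}$, which is too weak; the extra mass the mechanism places at the two endpoints must be handled more carefully.

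The key step is the observation that $N(m,t,p)$ is precisely the distribution obtained from $\mathsf{DL}(p)$ by \emph{folding} each tail onto the nearer endpoint of the support: $\Pr[N(m,t,p){=}n] = \Pr[\mathsf{DL}(p){=}n]$ for every $n$ strictly between $n_-$ and $n_+$, while at the endpoints
\[ \Pr[N(m,t,p){=}n_+] \;=\; \Pr[\mathsf{DL}(p){=}n_+] + \sum_{n > n_+}\Pr[\mathsf{DL}(p){=}n], \]
and symmetrically at $n_-$. I would verify these identities directly from the definitions of $N$ and $\mathsf{DL}$; after summing the geometric tail each reduces to the elementary identity $p^{j}(1-p) + p^{j+1} = p^{j}$ with $j = |n_\pm|$. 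This makes rigorous the intuition, noted when $N$ was defined, that the extra endpoint weight accounts for the truncated tails.

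From this decomposition I would deduce $\Pr[\,|N(m,t,p)| \geq b\,] \leq \Pr[\,|\mathsf{DL}(p)| \geq b\,]$ for every $b \geq 1$. Because $n_+ \geq 0$, every integer $n > n_+$ satisfies $|n| > |n_+|$, and because $n_- \leq 0$, every integer $n < n_-$ satisfies $|n| > |n_-|$; hence folding moves $\mathsf{DL}(p)$-mass from points of absolute value strictly larger than $|n_\pm|$ onto a single point of absolute value exactly $|n_\pm|$. A short case split then settles each tail: if $b \leq |n_\pm|$ the folded mass lies inside $\{|n| \geq b\}$ both before and after folding, so that tail's contribution is unchanged; if $b > |n_\pm|$ the folded mass leaves the event, so the probability can only decrease. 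The right and left foldings act on disjoint sets of integers and leave the middle of the support untouched, so they do not interfere and the inequality follows.

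It remains to evaluate the $\mathsf{DL}(p)$ tail. For $b \geq 1$,
\[ \Pr[\,|\mathsf{DL}(p)| \geq b\,] = \frac{1-p}{1+p}\sum_{|n| \geq b} p^{|n|} = \frac{1-p}{1+p}\cdot\frac{2p^{b}}{1-p} = \frac{2p^{b}}{1+p}, \]
while for $b = 0$ the claim is immediate from $\Pr[\,\cdot\,] \leq 1 \leq \frac{2}{1+p}$ (using $p \leq 1$). Chaining the two inequalities proves the proposition. I expect the folding/monotonicity argument to be the only mildly delicate point --- one must check that the endpoints have the right signs so that ``truncation never increases a symmetric tail'' is actually valid --- but once the exact $\mathsf{DL}(p)$-decomposition of $N(m,t,p)$ is in hand, everything else is routine geometric-series bookkeeping.
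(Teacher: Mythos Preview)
Your proof is correct and takes a genuinely different route from the paper's. The paper proceeds by direct computation: it writes $\Pr[|N|\ge b]=1-\sum_{n=-b+1}^{b-1}\Pr[N{=}n]$ and then splits into four cases according to whether each endpoint $n_\pm$ of the support lies inside or outside $[-b{+}1,b{-}1]$, summing the geometric series explicitly in each case to obtain $0$, $\frac{p^b}{1+p}$, or $\frac{2p^b}{1+p}$. Your approach instead identifies $N(m,t,p)$ as the ``folded'' version of $\mathsf{DL}(p)$ and observes that folding mass onto $n_\pm$ can only move probability from larger to smaller absolute values, hence weakly decreases every symmetric tail; you then compute only the single $\mathsf{DL}(p)$ tail. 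This buys you a cleaner argument with essentially one geometric-series evaluation instead of three, and it makes transparent \emph{why} truncation never hurts the utility bound; the paper's computation, on the other hand, yields the exact tail probability in each regime (showing in particular that the bound is tight when $b\le\min(|n_-|,|n_+|)$), which your comparison argument recovers only implicitly. Both are valid; yours is the more conceptual of the two.
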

\begin{proof}
\begin{align*}
\Pr[|\kappa_{f,\epsilon}&(B) - f(B)| \geq b]\\
&= 1 - \Pr[-b < \kappa_{f,\epsilon}(B) - f(B) < b]\\ 
&= 1 - \Pr[-b+1 \leq \kappa_{f,\epsilon}(B) - f(B) \leq b-1]\\ 
&= 1 - \Pr[-b+1 \leq f(B) + N(m,f(B),\exp(-\epsilon/\delta(f))) - f(B) \leq b-1]\\ 
&= 1 - \Pr[-b+1 \leq N(m,f(B),\exp(-\epsilon/\delta(f))) \leq b-1] \\
&= 1 - \sum_{n=-b+1}^{b-1} \Pr[N(m,f(B),\exp(-\epsilon/\delta(f))) = n] 
\end{align*}
If $b-1 \geq m-f(B)$ and $-b+1 \leq -m-t$, then this is $1-1 = 0$.
If $b-1 < m-f(B)$ and $-b+1 > -m-f(B)$, then this is 
\begin{align*} 
1- \sum_{n=-b+1}^{b-1} \exp(-\epsilon/\delta(f))^n &\frac{1-\exp(-\epsilon/\delta(f))}{1+\exp(-\epsilon/\delta(f))}\\ 
&= 1 - \frac{1+\exp(-\epsilon/\delta(f))-2\exp(-\epsilon/\delta(f))^{b}}{1+\exp(-\epsilon/\delta(f))}\\
&= \frac{2p^b}{1+p}
\end{align*}
If $b-1 < m-f(B)$ and $-b+1 \leq -m-t$, then this is  
\begin{align*}
1- p^{|-m-t|}\frac{1}{1+p} + \sum_{n=-m-t+1}^{b-1} &\exp(-\epsilon/\delta(f))^n \frac{1-\exp(-\epsilon/\delta(f))}{1+\exp(-\epsilon/\delta(f))}\\
&= 1- \frac{1+\exp(-\epsilon/\delta(f))-\exp(-\epsilon/\delta(f))^{b}}{1+\exp(-\epsilon/\delta(f))}\\
&= \frac{p^b}{1+p}
\end{align*}
If $b \geq m-f(B)$ and $-b > -m-t$, then this is  
\begin{align*}
1 - p^{|m-t|}\frac{1}{1+p} + \sum_{n=-b+1}^{m-t-1} &\exp(-\epsilon/\delta(f))^n \frac{1-\exp(-\epsilon/\delta(f))}{1+\exp(-\epsilon/\delta(f))}\\
&= 1- \frac{1+\exp(-\epsilon/\delta(f))-\exp(-\epsilon/\delta(f))^{b}}{1+\exp(-\epsilon/\delta(f))}\\
&= \frac{p^b}{1+p}
\end{align*}
completing the proof.
\end{proof}

\appsubsection{An Implementation}
Below is an efficient algorithm for sampling from $N(\verb|m|,\verb|t|,\verb|p|)$ for $\verb|m| > 0$, $-\verb|m| \leq \verb|t| \leq \verb|m|$, and $0 \leq \verb|p| < 1$:
\begin{verbatim}
01 sample_N(m,t,p)
02   if(flip(p/(1+p)))
03      if(flip(p^(m+t-1)))
04         return(-m-t);
05      else
06         q := (p-1)/(p^(m+t)-p);
07         for(n:=-1; n>-m-t+1; n--)
08            if(flip(q))
09               return(n);
10            q := p*q/(1-q);
11         return(-m-t+1);
12   else
13      if(flip(p^(m-t)))
14         return(m-t);
15      else
16         q := (p-1)*p^t/(p^m-p^t);
17         for(n:=0; n<m-t-1; n++)
18            if(flip(q))
19               return(n);
20            q := p*q/(1-q);
21         return(m-t-1);
\end{verbatim}

Each \verb|flip| command uses an independent Bernoulli distribution to select either true or false.  \verb|flip(|$p$\verb|)| returns true with probability $p$.

\begin{proposition}[Correctness]
\label{prp:bdm-sample-correct}
\verb|sample_N| samples from $N(m,t,p)$.
\end{proposition}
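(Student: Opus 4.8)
The plan is to compute, for each integer $n$ in the support $\{-m-t,\dots,m-t\}$ of $N(m,t,p)$, the probability that \verb|sample_N(m,t,p)| returns $n$, and to check that it equals $\Pr[N(m,t,p){=}n]$ as given by the case analysis preceding the proposition. Since the procedure is a fixed decision tree of independent Bernoulli trials, this probability is obtained by multiplying the flip probabilities along the root-to-leaf path(s) that return $n$, so the whole argument reduces to a handful of finite geometric-series identities.

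First I would treat the top-level branch at line 02. The \verb|true| subtree (lines 03--11) returns exactly the negative values $\{-m-t,\dots,-1\}$ and the \verb|false| subtree (lines 13--21) returns exactly the nonnegative values $\{0,\dots,m-t\}$, so it suffices to check $\Pr[N{<}0]=\tfrac{p}{1+p}$ and $\Pr[N{\ge}0]=\tfrac1{1+p}$; each follows by summing $\sum_k p^{k}\tfrac{1-p}{1+p}$ over the interior of the corresponding side and adding the endpoint weight $\tfrac1{1+p}\,p^{m+t}$ (resp. $\tfrac1{1+p}\,p^{m-t}$), the endpoint mass being precisely what makes the side-sum telescope to the claimed value. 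Inside each side, the single flip at line 03 (resp. line 13) returns the extreme value $-m-t$ (resp. $m-t$), and I would check that its success probability $p^{m+t-1}$ (resp. $p^{m-t}$) equals the conditional probability of that extreme value given we are on that side---another one-line quotient of the expressions just computed.

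The heart of the proof is the two \verb|for| loops. For the nonnegative loop (lines 16--21) I would prove by induction on the iteration number the invariant: when control reaches \verb|flip(q)| with loop variable $n$, the current value of \verb|q| equals $\Pr[N{=}n \mid N{\ge}0,\ N{\neq}m{-}t,\ N\notin\{0,\dots,n{-}1\}]$, the conditional probability of returning $n$ given that all earlier flips failed. The base case is the initialization at line 16, which simplifies to $\tfrac{1-p}{1-p^{m-t}}$; the inductive step holds because $\Pr[N{=}n{+}1]/\Pr[N{=}n]=p$ for adjacent interior values, which makes $q\mapsto \tfrac{pq}{1-q}$ exactly the map carrying one conditional probability to the next (a two-line algebraic check). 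Given the invariant, each interior value is returned with its correct conditional probability, and the deterministic \verb|return| at line 21 picks up the remaining value $m-t-1$ with exactly the leftover probability, which is correct because the conditional probabilities along the branch sum to $1$; multiplying through by the line-02 and line-13 branch probabilities then recovers $\Pr[N{=}n]$ for every nonnegative $n$. The negative loop (lines 06--11) is handled identically, with the initialization reducing to $\tfrac{1-p}{1-p^{m+t-1}}$ and the relation $\Pr[N{=}{-}(k{+}1)]/\Pr[N{=}{-}k]=p$; here one should double-check both the simplification of the line-06 expression and the endpoint of the \verb|for| range, so that the value $-m-t+1$ returned at line 11 is exactly the one left uncovered by the loop body.

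The main obstacle is bookkeeping rather than insight: lining up the off-by-one endpoints of the two loops with the explicit \verb|return| statements, pushing the telescoping identities through cleanly, and dispatching the degenerate regimes where a side is empty or a singleton ($m+t\le1$, or $t=\pm m$, where some \verb|flip| argument degenerates to $p^{0}=1$ or a loop range is empty). A subsidiary point worth stating is that every argument handed to \verb|flip| lies in $[0,1]$; for the loop variable \verb|q| this is automatic from the invariant, since each such value is a genuine conditional probability.
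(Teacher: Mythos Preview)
Your overall strategy matches the paper's almost exactly: the paper also proves an inductive formula for the loop variable \verb|q|, from which the return probabilities follow by telescoping, and it too handles only the nonnegative branch in detail and dismisses the negative branch with ``nearly the same reasoning.'' Your framing of the invariant as a conditional probability is a pleasant conceptual gloss on the paper's closed-form expression for $q_n$; the two are algebraically the same statement.

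There is one concrete slip in your proposal, however, and it is precisely the spot you yourself flagged as needing a double-check. Line~06 of the code reads $q:=(p-1)/(p^{m+t}-p)$, which simplifies to
\[
\frac{1-p}{p\,(1-p^{m+t-1})},
\]
not to $\dfrac{1-p}{1-p^{m+t-1}}$ as you assert. The latter is indeed the correct conditional probability $\Pr[N=-1\mid N<0,\ N\neq -m-t]$, so your invariant argument would require the \emph{code} to be off by a factor of $p$ at this initialization (for very small $p$ the code's value even exceeds $1$, so \verb|flip(q)| would be ill-posed). The paper's proof never actually works through the negative branch, so it does not catch this; your method, carried out carefully, would. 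Rather than asserting that the simplification holds, you should either note the discrepancy or verify the code against the target distribution directly at $n=-1$.
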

\begin{proof}
Let $q_n$ denote the value that variable $\verb|q|$ has the beginning of the $n$th iteration of the last \verb|for| loop: $q_{0} = \frac{(p-1)p^t}{p^m - p^t}$ and $q_n = p*q_{n-1}/(1-q_{n-1})$ for $0 < n < m-t-1$.  
We show by induction over $n$, that for $n$ between $0$ and $m-t-2$, 
\[ q_n = \left(1-\frac{p}{1+p}\right)^{-1}(1-p^{m-t})^{-1}\prod_{j=0}^{n-1} (1-q_j)^{-1} p^n \frac{1-p}{1+p} \]
For the base case with $n = 0$,
\begin{align*} 
q_0 &= \frac{(p-1)p^t}{p^m-p^t}\\
&= \left(1-\frac{p}{1+p}\right)^{-1}(1-p^{m-t})^{-1}\frac{1-p}{1+p}\\
&= \left(1-\frac{p}{1+p}\right)^{-1}(1-p^{m-t})^{-1}\prod_{j=0}^{-1} (1-q_j)^{-1} p^0 \frac{1-p}{1+p}\\ 
&= \left(1-\frac{p}{1+p}\right)^{-1}(1-p^{m-t})^{-1}\prod_{j=0}^{n-1} (1-q_j)^{-1} p^n \frac{1-p}{1+p} 
\end{align*}
For the inductive case, assume this is true for $n-1$.  Then,
{\small
\begin{align*}
q_n &= p*q_{n-1}/(1-q_{n-1})\\
&= p*\left( \left(1-\frac{p}{1+p}\right)^{-1}(1-p^{m-t})^{-1}\prod_{j=0}^{(n-1)-1} (1-q_j)^{-1} p^{(n-1)} \frac{1-p}{1+p} \right)*(1-q_{n-1})^{-1}\\
&= \left(1-\frac{p}{1+p}\right)^{-1}(1-p^{m-t})^{-1}\prod_{j=0}^{n-1} (1-q_j)^{-1} p^{n} \frac{1-p}{1+p}
\end{align*}
}
If the last \verb|for| loop executes, then with probability $\prod^{n-1}_{j=0} (1-q_j) q_n$ it will stop at the $n$th iteration and return $n$ for values of $n$ between $0$ to $m-t-2$ (inclusive).  Using above equation,
\begin{align*}
\prod^{n-1}_{j=0} &(1-q_j) q_n\\
&= \left(\prod^{n-1}_{j=0} (1-q_j)\right)\left(\left(1-\frac{p}{1+p}\right)^{-1}(1-p^{m-t})^{-1}\prod_{j=0}^{n-1} (1-q_j)^{-1} p^{n} \frac{1-p}{1+p}\right)\\
&= \left(1-\frac{p}{1+p}\right)^{-1}(1-p^{m-t})^{-1} p^n \frac{1-p}{1+p}
\end{align*}
Since the probability of the \verb|for| loop executing is $\left(1-\frac{p}{1+p}\right)(1-p^{m-t})$, this implies that the probability of returning $n$ such that $0 \leq n \leq m-t-2$ is $p^n \frac{1-p}{1+p} = p^{|n|} \frac{1-p}{1+p}$.

For $0 \leq n = m-t$, the probability of returning $m-t$ is $(1-\frac{p}{1+p})p^{m-t} = p^{m-t}*\frac{1}{1+p} = p^{|m-t|}*\frac{1}{1+p}$.

The probability of the \verb|for| running until completion and returning $m-t-1$ is equal to the probability that none of the other values of $n$ is returned.  That is, the probability \verb|flip(p/(1+p))| returning false less the probability of some other number between $0$ and $m-t$ being returned:
\[ (1-\frac{p}{1+p}) - p^{m-t}*\frac{1}{1+p} - \sum_{n=0}^{m-t-2} p^j (1-p)/(1+p) = p^{m-t-1}\frac{1-p}{1+p} = p^{|m-t-1|}\frac{1-p}{1+p} \]

Nearly the same reasoning shows that the negative values for noise also have the correct probabilities.
\end{proof}

Assuming that all the operations in \verb|sample_N| including \verb|flip| are constant time, \verb|sample_N| runs in expected constant time.
\begin{proposition}[Runtime Complexity]
\label{prp:bdm-sample-runtime}
\verb|sample_N| runs in $O(1)$ expected time.
\end{proposition}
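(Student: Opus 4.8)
The plan is to show that, aside from a single \verb|for| loop, \verb|sample_N| executes only a bounded number of \verb|flip| and arithmetic operations, and that the expected number of iterations of that loop is bounded by a quantity depending on $p$ alone, hence constant in the parameter $m$. First I would note that on any call the \verb|flip| on line~02 selects one of the two main branches, and inside the selected branch the nested \verb|flip| on line~03 or line~13 decides whether the corresponding \verb|for| loop (lines~07--11 or lines~17--20) is entered at all, while every other statement is executed at most once. Hence at most one loop runs per call, and it suffices to bound the expected number of iterations of each of the two loops; I would carry this out in detail for the loop on lines~17--20 and observe that the loop on lines~07--11 is handled by an entirely analogous computation.

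For the loop on lines~17--20, let $q_n$ denote the value of the variable \verb|q| at the start of iteration $n$, as in the proof of Proposition~\ref{prp:bdm-sample-correct}, so that $q_0 = \frac{1-p}{1-p^{m-t}}$ and $q_{n+1} = p\,q_n/(1-q_n)$, and the loop body runs at iteration $n$ precisely when the flips of iterations $0,\dots,n-1$ all returned false, an event of probability $\prod_{j=0}^{n-1}(1-q_j)$ (each factor $1-q_j$ being the probability of not returning at iteration $j$, hence in $[0,1]$). The estimate I would establish is that $q_n\ge 1-p$ for every $n$ at which the loop has not yet returned: the base case holds because $1-p^{m-t}\le 1$, and the inductive step holds because $x\mapsto p\,x/(1-x)$ is increasing on $[0,1)$ and fixes the point $1-p$. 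Consequently $1-q_n\le p$ for all such $n$, and therefore
\[
0 \;\le\; \prod_{j=0}^{n-1}(1-q_j) \;\le\; p^{n}.
\]

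Given this bound, the expected number of executions of the loop body is $\sum_{n\ge 0}\prod_{j=0}^{n-1}(1-q_j) \le \sum_{n\ge 0}p^{n} = 1/(1-p)$, which is independent of $m$ and $t$. Since by hypothesis each iteration costs $O(1)$ (one \verb|flip|, one comparison, one update of \verb|q|), the loop contributes $O(1/(1-p))$ expected time, and adding the $O(1)$ cost of the finitely many remaining statements gives the claimed $O(1)$ expected running time. I expect the only nonroutine step to be verifying that the recurrence for \verb|q| never drives $q_n$ below $1-p$; as an alternative one could instead read off the exact return-value distribution from the proof of Proposition~\ref{prp:bdm-sample-correct} and bound the tail $\sum_{k\ge n}\Pr[\text{return value is }k]$, but that route is slightly more calculational.
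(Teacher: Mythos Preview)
Your proof is correct and takes a genuinely different route from the paper's. The paper argues via the output distribution: it invokes Proposition~\ref{prp:bdm-sample-correct} to identify $\Pr[N(m,t,p)=n]$ with the probability that \verb|sample_N| returns $n$, observes that the running time when returning $n$ is $O(|n|)$, and then bounds the expected cost by $\sum_n |n|\,\Pr[N(m,t,p)=n] \le \tfrac{2(1-p)}{1+p}\sum_{n\ge 0} n\,p^n$, a constant depending only on $p$. Your argument instead works directly with the loop dynamics: the invariant $q_n \ge 1-p$ (via the fixed point of $x\mapsto px/(1-x)$) yields the geometric bound $\prod_{j<n}(1-q_j)\le p^n$ on the probability of reaching iteration $n$, hence an expected iteration count of at most $1/(1-p)$. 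Your approach is more self-contained in that it does not lean on the correctness proposition, and it isolates the single structural fact that drives termination; the paper's approach is shorter once the output distribution is already in hand and makes explicit the link between running time and the noise magnitude $|n|$. One small point worth recording: your inductive step uses monotonicity of $x\mapsto px/(1-x)$ on $[0,1)$, so you are implicitly relying on $q_n<1$ throughout the loop; this does hold (in fact $q_n=(1-p)/(1-p^{\,m-t-n})\le 1/(1+p)$), but it deserves a sentence.
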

\begin{proof}
The expected running time is $\sum_{n=-m-t}^{m-t}  \Pr[N(m,t,p) {=} n] * T_n$ where $T_n$ is the running time of \verb|sample_N| when it produces $n$.  The running time is constant in the case where \verb|sample_N| produces $m-t$ or $-m-t$.  The running time $T_n$ is $O(|n|)$ for $n$ such that $-m-t < n < m-t$.  Thus, ignoring constants, the expected running time is 
\begin{align}
\sum_{n=-m-t+1}^{m-t-1} p^{|n|}\frac{1-p}{1+p} n
&\leq \frac{1-p}{1+p} * 2 * \sum_{n=0}{\max(|-m-t+1|, m-t-1)} p^{n} n\\
&\leq \frac{1-p}{1+p} * 2 * \sum_{n=0}{\infty} p^{n} n\\
&= \frac{1-p}{1+p} * 2 * \frac{p}{1-p} \label{ln:geo-mean}\\
&= \frac{2p}{1+p}\\
&\leq 2
\end{align}
where line~\ref{ln:geo-mean} follows from the expected value of the geometric distribution.  (Recall that we are using $p$ to denote the failure probability unlike most references, which use $1-p$ for the failure probability.)  Thus, it is expected to run in constant time.
\end{proof}

\appsubsection{Using the Mechanism for the Sanitization Functions \textsc{COUNT} and \textsc{SUM}}
\label{app:bdm-count-sum}
We use the above privacy mechanism to implement sanitization functions similar to the ones that \textsc{pinq} provides.  Due to space constraints, we focus on two representative ones: \textsc{count} and \textsc{sum}.  Since we use a bounded discrete privacy mechanism over integers, our implementations differ from the implementations found in \textsc{pinq}.    
We force data points to be integers between $-100$ and $100$ whereas \textsc{pinq} bounds the sensitivity of functions by mapping data points to doubles between $-1$ and $1$.  (Our range may be made larger without affecting our results.)  We then use numbers outside this range to encode objects other than data points such as queries.  

Given these, we implement our \textsc{pinq}-like system as follows.  \verb|datapoint(y)| on line \verb|09| of the code of Figure~\ref{fig:code-window} would be implemented as a function with body \verb|return(-100 <= y && y <= 100)|.  \verb|emptyArray(|$x$\verb|)| must be implemented to store a value outside of $\{-100,\ldots,100\}$ so that data points can be distinguished from empty spots.  The program uses numbers larger than $100$ to indicate queries: $101$ denotes \textsc{count} and $102$ denotes \textsc{sum}.  Given $101$ or $102$, 
\verb|get_sanitization_funct(y)| returns a function that computes the count statistic or sum statistic, respectively.  \textsc{count} is computed with
\begin{verbatim}
01 count(dPts)
02   count := 0;
03   for(j:=0; j<t; j++)
04     for(k:=0; k<maxPts; k++)
05       if(-100 <= dp[j][k] <= 100)
06         count++;
07       else
08         break; 
09   s := t*maxPts/2;
10   noise:=s+sample_N(s,count-s,exp(-e/1));
11   result:=count+noise;
12   return(result);
\end{verbatim}
and \textsc{sum} with
\begin{verbatim}
01 sum(dPts)
02   sum := 0;
03   for(j:=0; j<t; j++)
04     for(k:=0; k<maxPts; k++)
05       if(-100 <= dp[j][k] <= 100)
06           sum := sum+dp[j][k];
07         else
08           break;
09   noise := 
       sample_N(t*maxPts*100,sum,exp(-e/100));
10   result := sum+noise;
11   return(result);
\end{verbatim}
\noindent where \verb|sample_N| is as defined above and \verb|e| stores the value for the privacy bound $\epsilon$.  We add and subtract \verb|t*maxPts| in the calculation of the noise in \mathsc{count} to shift the noise over to keep the value count positive.

\appsection{Automaton Model}
\label{app:autom-details}

\appsubsection{Probability of Action Sequences}

We use $\langle L, s\rangle(\vec{i})(\vec{a},s')$ to denote the probability of the automaton (starting in state $s$) producing the trace $\vec{a}$ and ending in the state $s'$ after producing the last action of $\vec{a}$ given that the available inputs are $\vec{i}$.  $\langle L, s\rangle(\vec{i})(\vec{a},s')$ is defined as follows:
\begin{align*}
\langle L, s\rangle(i\cons\vec{i})(i\cons\vec{a},s') &= \sum_{s'' \in S} \mu(s'') \langle L, s''\rangle(\vec{i})(\vec{a},s') &&\text{if } s \trans^i \mu\\
\langle L, s\rangle(\vec{i})(o\cons\vec{a},s') &= \sum_{s'' \in S} \mu(s'') \langle L, s''\rangle(\vec{i})(\vec{a},s') &&\text{if } s \trans^o \mu\\
\langle L, s\rangle(\vec{i})([\,],s) &= 1\\
\langle L, s\rangle(\vec{i})(\vec{a},s') &= 0 &&\text{\!\!\!otherwise}
\end{align*}
where $i \in I$ and $o \in O$. 
The first line in the above definition, for example, considers the case where the state $s$ transitions to a new state under the input $i$ according to the distribution $\mu$. It states the probability of starting in the state $s$, consuming the input $i$, and then performing the actions $\vec{a}$ ending in state $s'$ given that $\vec{i}$ remain available inputs. This probability is the sum of the probabilities of transitioning to a state $s''$ and then performing the actions $\vec{a}$ from $s''$, ending in state $s'$ given that $\vec{i}$ are available inputs. 

\begin{proposition}\label{prp:well-def-model}
For all automata $\langle L, s\rangle$, $\vec{a}$ in $A^*$, $s'$ in $S$, and $\vec{i}$ in $I^*$, 
$\langle L, s\rangle(\vec{i})(\vec{a},s')$
is well defined and between $0$ and $1$.
\end{proposition}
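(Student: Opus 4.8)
The plan is to prove well-definedness and the bound $0 \le \langle L, s\rangle(\vec{i})(\vec{a},s') \le 1$ together, the latter by induction on the length of $\vec{a}$ with the statement universally quantified over all states $s$ and $s'$ and all input sequences $\vec{i}$ (this quantification is needed because the recursive calls on the right-hand side change both the state, $s$ to $s''$, and the input sequence, $\vec{i}$ to $\vec{i}'$). First I would settle well-definedness. The four defining clauses are jointly exhaustive, since the last one is a catch-all, and mutually exclusive: the first two clauses apply only when $\vec{a}$ is non-empty while the third applies only when $\vec{a} = [\,]$, and when $\vec{a}$ is non-empty its head action lies in exactly one of $I$ or $O$ because $I$ and $O$ are disjoint, so at most one of the input clause and the output clause can fire. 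Within the first (resp. second) clause the distribution $\mu$ with $s \trans^i \mu$ (resp. $s \trans^o \mu$) is unique by transition determinism, so the right-hand side is unambiguous. Finally the recursion is well-founded, because each recursive call appearing in the first two clauses is to $\langle L, s''\rangle(\cdot)(\vec{a}',s')$ with $|\vec{a}'| = |\vec{a}| - 1$; thus the display is a legitimate structural recursion on $|\vec{a}|$ and pins down a unique value.

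For the bound, the base case is $|\vec{a}| = 0$, i.e. $\vec{a} = [\,]$: the value is $1$ if $s' = s$ and $0$ otherwise, both in $[0,1]$. For the inductive step, write $\vec{a} = a \cons \vec{a}'$. If the catch-all clause applies, the value is $0$. If the first clause applies, then $\vec{i} = a \cons \vec{i}'$ with $a \in I$ and $s \trans^a \mu$ for the unique $\mu \in \Disc(S)$, and the value is $\sum_{s'' \in S} \mu(s'')\,\langle L, s''\rangle(\vec{i}')(\vec{a}',s')$. By the induction hypothesis every factor $\langle L, s''\rangle(\vec{i}')(\vec{a}',s')$ lies in $[0,1]$, so each summand is non-negative and dominated by $\mu(s'')$; since $S$ is countable and $\mu$ is a discrete probability measure with $\sum_{s''} \mu(s'') = 1$, the series converges absolutely and $0 \le \sum_{s''} \mu(s'')\,\langle L, s''\rangle(\vec{i}')(\vec{a}',s') \le \sum_{s''} \mu(s'') = 1$. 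The output clause is handled identically, with $\vec{i}$ unchanged. This closes the induction.

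The only points that require any care are the well-foundedness of the recursion (which is immediate once one observes that $|\vec{a}|$ strictly decreases, even though $\vec{i}$ need not) and the convergence of the possibly infinite sum over $S$; the latter is free from $\mu \in \Disc(S)$ together with the $[0,1]$ bound furnished by the induction hypothesis. I do not anticipate a genuine obstacle: the proposition is essentially a check that the recursive definition in the preceding display denotes a total function taking values in $[0,1]$, and the structural axioms on \textsc{plts}s --- transition determinism to make ``if $s \trans^i \mu$'' unambiguous, and disjointness of $I$ and $O$ to keep the clauses from overlapping --- are exactly what make this go through.
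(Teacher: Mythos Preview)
Your proposal is correct and follows essentially the same approach as the paper: structural induction on $\vec{a}$, with the base case $\vec{a}=[\,]$ giving $0$ or $1$ and the inductive cases bounding $\sum_{s''}\mu(s'')\,\langle L,s''\rangle(\cdot)(\vec{a}',s')$ using the inductive hypothesis together with $\sum_{s''}\mu(s'')=1$. If anything, you are more explicit than the paper about why the defining clauses are mutually exclusive and why the recursion is well-founded; the paper's version jumps straight to the bound and uses an $\argmax$ state rather than your termwise majorization, but these are cosmetic differences.
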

\begin{proof}
Proof by induction over the structure of $\vec{a}$.

Case: $\vec{a} = [\,]$.  $\langle L, s\rangle(\vec{i})(\vec{a},s')$ is $1$ if $s' = s$, and $\langle L, s\rangle(\vec{i})(\vec{a},s')$ is $0$ for $s' \neq s$.

Case: $\vec{a} = i\cons\vec{a}'$.  If there does not exist $\vec{i}'$ such that $\vec{i} = i\cons\vec{i}'$ and $s \trans^i \mu$, then $\langle L, s\rangle(\vec{i})(\vec{a},s') = 0$.  If there does exist such a $\vec{i}'$, then $\langle L, s\rangle(\vec{i})(\vec{a},s') = \sum_{s'' \in S} \mu(s'') \langle L, s''\rangle(\vec{i}')(\vec{a}')$.  By the inductive hypothesis, $\langle L, s''\rangle(\vec{i}')(\vec{a}',s')$ is well defined and between $0$ and $1$ for all $s''$.  Since $\mu$ is a distribution over states and the events of being in a state are mutually exclusive, $\sum_{s'' \in S} \mu(s'') = 1$.  Let $s_{\mathsf{max}} = \argmax_{s'' \in S} \langle L, s''\rangle(\vec{i}')(\vec{a}',s')$.  
\begin{align*}
\model{\langle L, s\rangle}(\vec{i})(\vec{a},s') &= \sum_{s'' \in S} \mu(s'') \langle L, s'\rangle(\vec{i}')(\vec{a}',s')\\
&\leq \sum_{s'' \in S} \mu(s'') \langle L, s_{\mathsf{max}}\rangle(\vec{i}')(\vec{a}',s')\\
&= \langle L, s_{\mathsf{max}}\rangle(\vec{i}')(\vec{a}',s')\\
&\leq 1
\end{align*}

Case: $\vec{a} = o\cons\vec{a}'$.  If there does not exist $\mu$ such that $s \trans^o \mu$, then $\langle L, s\rangle(\vec{i})(\vec{a},s') = 0$.  If there does, then
$\langle L, s\rangle(\vec{i})(\vec{a},s') = \sum_{s'' \in S} \mu(s'') \langle L, s''\rangle(\vec{i})(\vec{a}',s')$ and we can use the inductive hypothesis as above.
\end{proof}

A helpful proposition about our model follows.
\begin{proposition}\label{prp:mult-apart}
For all \textsc{plts} $L$, states $s', s'' \in S$, $\vec{i}' \in I^*$, and $\vec{h} \in H^*$,
\[ \langle L, s''\rangle(\vec{i}')(\vec{h}\cons\vec{a}', s') = \sum_{s''' \in S}  \langle L, s''\rangle([\,])(\vec{h},s''') * \langle L, s'''\rangle(\vec{i}')(\vec{a}', s') \]
\end{proposition}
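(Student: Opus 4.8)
The plan is to prove the identity by induction on the length (equivalently, the structure) of the hidden-action sequence $\vec{h} \in H^*$, keeping $s''$, $\vec{i}'$, $\vec{a}'$, and $s'$ universally quantified so that the induction hypothesis may be applied at every intermediate state. The point that makes the splitting work is that $\vec{h}$ consists only of hidden \emph{output} actions, so processing $\vec{h}$ never consumes anything from the input list $\vec{i}'$; hence cutting the trace $\vec{h}\cons\vec{a}'$ after $\vec{h}$ does not cut $\vec{i}'$, and the ``remaining'' computation runs from the intermediate state $s'''$ with the full input list $\vec{i}'$ still available.

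For the base case $\vec{h} = [\,]$, the left-hand side is just $\langle L, s''\rangle(\vec{i}')(\vec{a}', s')$, while on the right-hand side the factor $\langle L, s''\rangle([\,])([\,],s''')$ equals $1$ when $s''' = s''$ and $0$ otherwise (by the third and fourth clauses of the definition of $\langle L,\cdot\rangle$), so the sum collapses to the single term $\langle L, s''\rangle(\vec{i}')(\vec{a}', s')$, and the two sides agree. For the inductive step write $\vec{h} = h\cons\vec{h}'$ with $h \in H \subseteq O$ and $\vec{h}' \in H^*$. If $s''$ has no transition on $h$, then the left-hand side is $0$ by the ``otherwise'' clause, and likewise $\langle L, s''\rangle([\,])(h\cons\vec{h}', s''') = 0$ for every $s'''$, so the right-hand side is $0$ as well. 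Otherwise let $\mu$ be the unique distribution with $s'' \trans^h \mu$ (uniqueness by transition determinism); since $h$ is an output action, the second clause of the definition gives
\[ \langle L, s''\rangle(\vec{i}')(\vec{h}\cons\vec{a}', s') = \sum_{s_1 \in S} \mu(s_1)\, \langle L, s_1\rangle(\vec{i}')(\vec{h}'\cons\vec{a}', s'), \]
with $\vec{i}'$ untouched. Applying the induction hypothesis to each summand $\langle L, s_1\rangle(\vec{i}')(\vec{h}'\cons\vec{a}', s')$ turns the right side into a double sum over $s_1$ and $s'''$; as all terms are nonnegative, the order of summation may be interchanged, yielding
\[ \sum_{s''' \in S} \Bigl( \sum_{s_1 \in S} \mu(s_1)\, \langle L, s_1\rangle([\,])(\vec{h}', s''') \Bigr) \langle L, s'''\rangle(\vec{i}')(\vec{a}', s'). \]
Applying the second clause of the definition once more, this time with the empty input list, identifies the parenthesized inner sum as $\langle L, s''\rangle([\,])(h\cons\vec{h}', s''') = \langle L, s''\rangle([\,])(\vec{h}, s''')$, which is exactly the right-hand side of the claimed identity.

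I expect the only real friction to be bookkeeping rather than anything conceptual: one must check that $h \in O$ (so the input component genuinely survives unchanged while $\vec{h}$ is processed), handle the degenerate case where the transition on $h$ is absent so that both sides vanish, and justify the interchange of the two countable sums (immediate since every term is a nonnegative probability and, by Proposition~\ref{prp:well-def-model}, the whole double sum is bounded by $1$). The hidden-action restriction on $\vec{h}$ is doing all the work, and once it is used the computation is a routine unfold/fold of the recursive definition of $\langle L,\cdot\rangle$.
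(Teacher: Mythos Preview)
Your proposal is correct and follows essentially the same approach as the paper: induction on the structure of $\vec{h}$, with the base case collapsing via $\langle L, s''\rangle([\,])([\,],s''')$ being the Kronecker delta, and the inductive step unfolding the output clause on both sides, applying the hypothesis, and swapping the two countable sums. Your additional remarks (that $h \in H \subseteq O$ so $\vec{i}'$ is preserved, uniqueness of $\mu$ via transition determinism, and the nonnegativity justification for Fubini) are more explicit than the paper's write-up but do not change the argument.
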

\begin{proof}
Proof by induction over the structure of $\vec{h}$.  
In the case where $\vec{h} = [\,]$, $\langle L, s''\rangle([\,])(\vec{h},s''') = 1$ when $s''' = s''$ and $0$ otherwise.  Thus, 
\begin{align*}
\sum_{s''' \in S}  \langle L, s''\rangle([\,])([\,],s''') * \langle L, s'''\rangle(\vec{i}')(\vec{a}', s') &= 1 * \langle L, s''\rangle([\,]\cons\vec{i}')(\vec{a}', s')\\
&=\langle L, s''\rangle(\vec{i}')(\vec{h}\cons\vec{a}', s')
\end{align*}

Case: $\vec{h} = h\cons\vec{h}'$ for some $h$ and $\vec{h}'$.  
If $s'' \trans^h \mu'$, then
\begin{align*}
\sum_{s''' \in S}  &\langle L, s''\rangle([\,])(h\cons\vec{h}',s''') * \langle L, s'''\rangle(\vec{i}')(\vec{a}', s')\\
 &= \sum_{s''' \in S}  \left( \sum_{s'''' \in S} \mu'(s'''') \langle L, s''''\rangle([\,])(\vec{h}',s''')  \right) * \langle L, s'''\rangle(\vec{i}')(\vec{a}', s')\\
 &= \sum_{s'''' \in S} \mu'(s'''') \sum_{s''' \in S} \langle L, s''''\rangle([\,])(\vec{h}',s''') * \langle L, s'''\rangle(\vec{i}')(\vec{a}', s')\\
 &= \sum_{s'''' \in S} \mu'(s'''') \langle L, s''''\rangle(\vec{i}')(\vec{h}'\cons\vec{a}', s')\label{ln:mult-apart-ih}\\
&= \langle L, s''\rangle(\vec{i}')(h\cons\vec{h}'\cons\vec{a}', s')
\end{align*}
where third line follows from the inductive hypothesis.
If for no $\mu'$, then $s'' \trans^h \mu'$, $\langle L, s''\rangle(\vec{i}')(\vec{h}\cons\vec{a}', s') = 0 = \sum_{s''' \in S}  \langle L, s''\rangle([\,])(\vec{h},s''') * \langle L, s'''\rangle(\vec{i}')(\vec{a}', s')$ since $\langle L, s''\rangle([\,])(\vec{h},s''') = 0$ for all $s''' \in S$.
\end{proof}

\appsubsection{Extended Transitions}

We define $s \wtrans^a \nu$ so that $\nu(s')$ is the probability of reaching the $H$-disabled state $s'$ from the state $s$ where $a$ is the action performed from state $s$:
\begin{align*}
\nu(s') &= \sum_{s'' \in S} \mu(s'') \sum_{\vec{h} \in H^*}  \langle L, s''\rangle([\,])(\vec{h},s') &&\text{$s'$ is $H$-disabled}
\end{align*}
and $\nu(s') = 0$ otherwise 
where $s \trans^a \mu$.  Thus, the probability of reaching the $H$-disabled state $s'$ from $s$ by performing the action $a$ followed by a sequence of hidden actions $\vec{h}$ is calculated by considering each $s''$ that is reachable by performing the single action $a$ from $s$. For each such $s''$ we multiply the probability of ending up in $s''$ by performing an $a$ from $s$ with the 
the probability of reaching $s'$ from $s''$ by performing a sequence of hidden actions (the inner sum). The value $\nu(s')$ is then calculated by adding the probabilities corresponding to each $s''$. 
Since all $\vec{h}$ in $H^*$ contain only actions from $H$, an execution with the action sequence $\vec{h}$ cannot leave an $H$-disabled state.  Thus, $\nu(s')$ is the probability of $s'$ being the first $H$-disabled state reached.
If there is no $\mu$ such that $s \trans^a \mu$, then there is no $\nu$ such that $s \wtrans^a \nu$.

For notational convenience we extend the transition relation $\trans$ to $S_\bot$ by having no transitions to nor from $\bot$.  This implies that
\begin{align*}
\langle L, \bot\rangle(\vec{i})([\,],\bot) &= 1\\
\langle L, s\rangle(\vec{i})(\vec{a},\bot) &= 1 - \sum_{s' \in S} \langle L, s\rangle(\vec{i})(\vec{a},s)\\
\langle L, \bot\rangle(\vec{i})(\vec{a},x) &= 0 &&\text{if $\vec{a} \neq [\,]$ or $x \neq \bot$}
\end{align*}
Thus, $\Pr[\,\restrict{\model{\langle L, \bot\rangle}(\vec{i})}{E} {\supseqeq} \vec{e}\,]$ is $1$ if $\vec{e} = [\,]$ and $0$ otherwise, which matches the intuition that a nonterminating program which never interacts with the data examiner will only have the empty trace as a prefix.
\begin{proposition}\label{prp:nu-dist}
For all states $s$ and actions $a$, $s \wtrans^a \nu$ implies that $\nu$ is a distribution over $S_\bot$.
\end{proposition}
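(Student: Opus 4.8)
The plan is to verify the three defining requirements of a discrete distribution on $S_\bot$: nonnegativity, total mass one, and countable support; since $S_\bot$ is countable only the first two carry content. Fix the (unique, by transition determinism) $\mu$ with $s \trans^a \mu$, so that $\nu(s') = \sum_{s'' \in S}\mu(s'')\sum_{\vec h \in H^*}\langle L,s''\rangle([\,])(\vec h,s')$ for $H$-disabled $s'$, $\nu(s') = 0$ for the remaining $s' \in S$, and $\nu(\bot) = 1 - \sum_{s' \in S}\nu(s')$. Nonnegativity of $\nu(s')$ for $s' \neq \bot$ will be immediate from Proposition~\ref{prp:well-def-model} (each $\langle L,s''\rangle([\,])(\vec h,s') \in [0,1]$) together with $\mu \geq 0$, while $\sum_{x \in S_\bot}\nu(x) = \nu(\bot) + \sum_{s' \in S}\nu(s') = 1$ holds by the very definition of $\nu(\bot)$. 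Hence the whole statement reduces to the single inequality $\sum_{s' \in S}\nu(s') \leq 1$, which simultaneously delivers $\nu(\bot) \geq 0$ and $\nu(x) \leq 1$ for every $x$.

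To prove $\sum_{s' \in S}\nu(s') \leq 1$, I would introduce for each state $x$ the quantity $f(x) = \sum_{\vec h \in H^*}\sum_{s' \text{ $H$-disabled}}\langle L,x\rangle([\,])(\vec h,s')$ --- morally, the probability that the purely hidden computation started from $x$ ever reaches an $H$-disabled state --- and show $f(x) \leq 1$ for all $x$. Since $\langle L,x\rangle([\,])([\,],s') = 1$ iff $s' = x$, the length-zero truncation $f_0(x)$ equals $1$ when $x$ is $H$-disabled and $0$ when $x$ is $H$-enabled. For $H$-disabled $x$ there is, by definition, no hidden transition out of $x$, so no $\vec h$ of positive length contributes and $f_n(x) = 1$ at every truncation level $n$; for $H$-enabled $x$, output determinism and transition determinism supply a unique hidden action $h$ with $x \trans^h \mu_x$, and the defining recursion of $\langle L,\cdot\rangle$ gives $\langle L,x\rangle([\,])(h\cons\vec h',s') = \sum_{y \in S}\mu_x(y)\langle L,y\rangle([\,])(\vec h',s')$, whence $f_{n+1}(x) = \sum_{y \in S}\mu_x(y)\,f_n(y)$. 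An induction on $n$ then yields $f_n(x) \leq 1$ uniformly --- the step being $f_{n+1}(x) = \sum_y \mu_x(y) f_n(y) \leq \sum_y \mu_x(y) = 1$ for $H$-enabled $x$, and trivial otherwise --- and letting $n \to \infty$ (the $f_n$ are the partial sums of the nonnegative series defining $f$, hence increase to it) gives $f(x) \leq 1$. Finally, since all summands are nonnegative, Tonelli's theorem permits the reordering
\[ \sum_{s' \in S}\nu(s') \;=\; \sum_{s'' \in S}\mu(s'')\Bigl(\sum_{s' \text{ $H$-disabled}}\sum_{\vec h \in H^*}\langle L,s''\rangle([\,])(\vec h,s')\Bigr) \;=\; \sum_{s'' \in S}\mu(s'')\,f(s'') \;\leq\; \sum_{s'' \in S}\mu(s'') \;=\; 1, \]
using $f(s'') \leq 1$ and $\mu \in \Disc(S)$ in the last two steps. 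Combining, $\nu$ takes values in $[0,1]$ and sums to $1$, so it is a discrete probability measure over $S_\bot$.

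\textbf{Where the work is.} Everything except the bound $f(x) \leq 1$ is bookkeeping. The one genuine subtlety is nontermination: an $H$-enabled state may launch an infinite run through $H$-enabled states that never meets an $H$-disabled one, so $f(x)$ can be strictly less than $1$, and one cannot simply rearrange the original triple sum into an expression that visibly telescopes to $1$. This is exactly why I would route the argument through the finite truncations $f_n$ and pass to the limit, the missing mass $1 - f(s'')$ being precisely what accumulates in $\nu(\bot)$.
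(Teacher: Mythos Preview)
Your proposal is correct and follows essentially the same approach as the paper: both reduce to showing $\sum_{s'\in S}\nu(s')\le 1$, both introduce the finite-length truncations of the hidden-absorption probability (your $f_n$ is the paper's $\eta(n,\cdot)$), establish the one-step recursion $f_{n+1}(x)=\sum_y\mu_x(y)f_n(y)$ for $H$-enabled $x$, prove $f_n\le 1$ by induction on $n$, pass to the limit, and then average against $\mu$. The only cosmetic difference is that the paper defines $\eta$ by the recursion and then proves it equals the partial sum, whereas you define $f_n$ as the partial sum and derive the recursion; your remark that output determinism collapses the sum over $h\in H$ to a single term is something the paper leaves implicit.
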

\begin{proof}
To prove that $\nu$ is a distribution over $S_\bot$, we must show that for all $x \in S_\bot$, $0 \leq \nu(x) \leq 1$ and $\sum_{x \in S_\bot} \nu(x) = 1$.
We start by proving that $\sum_{x \in S'} \nu(x) \leq 1$ by introducing a function $\eta$.

Given the set $S'$ of $H$-disabled states, let
$\eta$ be defined as follows:
\begin{align*}
\eta(n, s) &= 1 &&\text{$s \in S'$}\\
\eta(n, s) &= 0 &&\text{when $n = 0$ and $s \notin S'$}\\
\eta(n, s) &= \sum_{h \in H} \sum_{s'' \in S} \mu_h(s'') \eta(n-1,s'') &&\text{otherwise}
\end{align*}
where $s \trans^h \mu_h$ and $n$ is a natural number.  

Proof by induction over $n$ shows that
$\eta(n,s) = \sum_{\vec{h} \in H^{\leq n}} \sum_{s' \in S'} \langle L, s\rangle([\,])(\vec{h},s')$.
where $H^{\leq n} = H^n \cup H^{\leq n-1}$ for $n \leq 1$ and $H^{\leq 0} = H^0 = \{[\,]\}$.
In the base case, $n = 0$, if $s \in S'$, then $\eta(n,s) = 1 = \sum_{s' \in S'} \langle L, s\rangle([\,])([\,],s')$ since $\langle L, s\rangle([\,])([\,],s) = 1$, $\langle L, s\rangle([\,])([\,],s') = 0$ for $s \neq s'$, and $s \in S'$.  If $s \notin S'$, 
 $\eta(n,s) = 0 = \sum_{s' \in S'} \langle L, s\rangle([\,])([\,],s')$ since $\langle L, s\rangle([\,])([\,],s') = 0$ for $s \neq s'$ and $s \notin S'$ whereas $s' \in S'$.

In the inductive case, if $s \in S'$, then  
$\langle L, s\rangle([\,])(\vec{h},s') = 0$ if $\vec{h} \neq [\,]$ or $s' \neq s$ since $s$ is $H$-disabled.
Thus, $\eta(n+1,s) = 1 = \sum_{\vec{h} \in H^{\leq n+1}} \sum_{s' \in S'} \langle L, s\rangle([\,])(\vec{h},s')$ since $\langle L, s\rangle([\,])([\,],s) = 1$.
If $s \notin S'$, then
\begin{align}
\eta(n+1, s) &= \sum_{h \in H} \sum_{s'' \in S} \mu_h(s'') \eta(n,s'')\\
&= \sum_{h \in H} \sum_{s'' \in S} \mu_h(s'')  \sum_{\vec{h} \in H^{\leq n}} \sum_{s' \in S'} \langle L, s\rangle([\,])(\vec{h},s')\label{ln:nu-dist-ih}\\
&= \sum_{s' \in S'} \sum_{h \in H}  \sum_{\vec{h} \in H^{\leq n}} \sum_{s'' \in S} \mu_h(s'')  \langle L, s\rangle([\,])(\vec{h},s')\\
&= \sum_{s' \in S'} \sum_{h \in H}  \sum_{\vec{h} \in H^{\leq n}} \langle L, s\rangle([\,])(h\cons\vec{h},s')\label{ln:nu-dist-enabled}\\
&= \sum_{s' \in S'} \langle \sum_{h \in H}  \sum_{\vec{h} \in H^{\leq n}} \langle L, s\rangle([\,])(h\cons\vec{h},s') \rangle) + \langle L, s\rangle([\,])([\,],s')  \label{ln:nu-dist-add-empty}\\
&= \sum_{s' \in S'} \sum_{\vec{h} \in H^{\leq n+1}} \langle L, s\rangle([\,])(\vec{h},s')\\
&= \sum_{\vec{h} \in H^{\leq n+1}} \sum_{s' \in S'} \langle L, s\rangle([\,])(\vec{h},s')
\end{align}
Line~\ref{ln:nu-dist-ih} follows from the inductive hypothesis.
Line~\ref{ln:nu-dist-enabled} follows since $s$ is $H$-enabled.
Line~\ref{ln:nu-dist-add-empty} follows from $\langle L, s\rangle([\,])([\,],s') = 0$ since $s \notin S'$.

Induction over $n$ can also show that $0 \leq \eta(n,s) \leq 1$ since $\mu_h$ is always a distribution.

We use $\eta$ to show the following:
\begin{align*}
\sum_{s' \in S'} \nu(s') 
&= \sum_{s' \in S'} \sum_{s'' \in S} \mu(s'') \sum_{\vec{h} \in H^*} \langle L, s''\rangle([\,])(\vec{h},s')\\
&= \sum_{s'' \in S} \mu(s'') \sum_{\vec{h} \in H^*} \sum_{s' \in S'} \langle L, s''\rangle([\,])(\vec{h},s')\\
&= \sum_{s'' \in S} \mu(s'') \lim_{n \to \infty} \sum_{\vec{h} \in H^{\leq n}} \sum_{s' \in S'} \langle L, s''\rangle([\,])(\vec{h},s')\\
&= \sum_{s'' \in S} \mu(s'') \lim_{n \to \infty} \eta(n,s'')\\
&\leq \sum_{s'' \in S} \mu(s'') 1\\
&\leq 1
\end{align*}
where $s \trans^a \mu$.

For all $s' \in S$, if $s'$ is $H$-enabled, $\nu(s') = 0$.  Thus, $\sum_{s \in S} \nu(s) = \sum_{s' \in S'} \nu(s') \leq 1$.  Furthermore, for all $s \in S$, $0 \leq \nu(s)$ and $0 \leq \sum_{s \in S} \nu(s)$ since no operations that could introduce negative numbers is every used in computing $\nu(s)$.  Since $\nu(\bot) = 1 - \sum_{s \in S} \nu(s)$, $0 \leq \nu(\bot) \leq 1$ and $\sum_{x \in S_\bot} \nu(x) = 1$.  Since for all $s$, $0 \leq \nu(s)$ and $\sum_{s \in S} \nu(s) \leq 1$, it must be the case that $\nu(s) \leq 1$.
\end{proof}

Given such an automaton $\M = \langle L, s\rangle$, we define $\model{\M}$ to be a function from input sequences to a distribution over trace prefixes (finite action sequences).
\[ \Pr[\, \model{\M}(\vec{i}) {\supseqeq} \vec{a}\,] = \sum_{s' \in S} \M(\vec{i})(\vec{a},s') \]

We write $\restrict{\vec{a}}{E}$ for restricting the action sequence $\vec{a}$ to some subset $E$ of $A$.  Formally, $\restrict{[\,]}{E} = [\,]$, $\restrict{a\cons\vec{a}}{E} = a\cons\restrict{\vec{a}}{E}$ if $a \in E$, and $\restrict{a\cons\vec{a}}{E} = \restrict{\vec{a}}{E}$, otherwise.  For infinite sequences $\vec{a}$ with only a finite number of elements from $E$, $\restrict{\vec{a}}{E}$ is the finite sequence that results from $\restrict{\vec{a}'}{E}$ where $\vec{a}'$ is the finite prefix of $\vec{a}$ holding all the elements from $E$.  If $\vec{a}$ contains an infinite number of elements from $E$, then $\restrict{\vec{a}}{E}$ is the infinite sequence whose $j$th entry is the $j$th element of $E$ in $\vec{a}$.

Given an automaton $\M$, $\Pr[\restrict{\model{\M}(\vec{i})}{E} \supseqeq \vec{e}]$ is the probability of the data examiner seeing $\vec{e} \in E^{*}$ as a prefix given that the available inputs are $\vec{i}$.  To calculate $\Pr[\restrict{\model{\M}(\vec{i})}{E} \supseqeq \vec{e}]$, consider the set $\gamma(\vec{e})$ of action sequences $\vec{a}$ such that $\restrict{\vec{a}}{E} = \vec{e}$ and ends with the last element of $\vec{e}$.  That is, $\gamma(\vec{e}) = \set{\vec{a} \in A^*}{\restrict{\vec{a}}{E} = \vec{e} \land \last(\vec{a}) = \last(\vec{e})}$ with the special case that $\gamma([\,]) = \{[\,]\}$.  To calculate $\Pr[\restrict{\model{\M}(\vec{i})}{E} \supseqeq \vec{e}]$, we need not consider all $\vec{a}$ such that $\restrict{\vec{a}}{E} = \vec{e}$.  Rather, we may focus only on those in $\gamma(\vec{e})$ since every $\vec{a}$ such that $\restrict{\vec{a}}{E} = \vec{e}$ will have a prefix in $\gamma(\vec{e})$.  Since it is impossible to see two different prefixes from $\gamma(\vec{e})$ during the same execution (no element of $\gamma(\vec{e})$ is the prefix of another), they are mutually exclusive.  Thus, $\Pr[\restrict{\model{\M}(\vec{i})}{E} \supseqeq \vec{e}] = \sum_{\vec{a} \in \gamma(\vec{e})} \Pr[\model{\M}(\vec{i}) \supseqeq \vec{e}]$.

\appsubsection{Some Helpful Propositions}

We need some propositions about our model to prove the soundness of unwinding later in Appendix~\ref{app:unwinding-soundness}.

Let $H^*\cons\gamma(\vec{e}')$ stand for $\set{\vec{a} \in A^*}{\exists \vec{h} \in H^*, \exists \vec{a}'' \in \gamma(\vec{e}'), \vec{a} = \vec{h}\cons\vec{a}''}$.

We use $\gamma'(\vec{e})$ do denote those action sequences of $\gamma(\vec{e})$ that do not start with a hidden output from $H$: $\gamma'(\vec{e}) = \set{\vec{a} \in \gamma(\vec{e})}{\vec{a} = [\,] \lor \exists a \in A-H, \exists \vec{a}' \in A^*, \vec{a} = a\cons\vec{a}'}$ where $A-H$ is the set difference.

\begin{proposition}\label{prp:hidden-gamma}
For all $\vec{e} \in E^*$, if $\vec{e} \neq [\,]$, then $H^*\cons\gamma'(\vec{e}) = \gamma(\vec{e})$.
\end{proposition}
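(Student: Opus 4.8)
The plan is to establish the set equality $H^*\cons\gamma'(\vec{e}) = \gamma(\vec{e})$ by double inclusion, resting on two simple observations. First, $H$ is disjoint from $E$: since $A = I \cup O$ with $I = D \cup Q$ and $O = R \cup H$ while $E = Q \cup R$, we have $A \setminus E = D \cup H$, so in particular $H \cap E = \emptyset$. Consequently, applying the clause $\restrict{a\cons\vec{a}}{E} = \restrict{\vec{a}}{E}$ (valid for $a \notin E$) once for each letter of a finite sequence $\vec{h} \in H^*$ gives $\restrict{\vec{h}\cons\vec{a}''}{E} = \restrict{\vec{a}''}{E}$ for every $\vec{a}''$. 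Second, since $\vec{e} \neq [\,]$, neither $\gamma(\vec{e})$ nor $\gamma'(\vec{e})$ contains the empty sequence (as $\restrict{[\,]}{E} = [\,] \neq \vec{e}$), so every element of either set is nonempty, its last letter is well defined, and, by the definition of $\gamma$, equals $\last(\vec{e})$.

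For the inclusion $H^*\cons\gamma'(\vec{e}) \subseteq \gamma(\vec{e})$, I would take an arbitrary $\vec{a} = \vec{h}\cons\vec{a}''$ with $\vec{h} \in H^*$ and $\vec{a}'' \in \gamma'(\vec{e}) \subseteq \gamma(\vec{e})$; then $\restrict{\vec{a}}{E} = \restrict{\vec{a}''}{E} = \vec{e}$ by the first observation, and $\last(\vec{a}) = \last(\vec{a}'') = \last(\vec{e})$ because $\vec{a}''$ is nonempty, so $\vec{a} \in \gamma(\vec{e})$. For the reverse inclusion $\gamma(\vec{e}) \subseteq H^*\cons\gamma'(\vec{e})$, given $\vec{a} \in \gamma(\vec{e})$ I would write $\vec{a} = \vec{h}\cons\vec{a}''$ where $\vec{h}$ is the longest prefix of $\vec{a}$ lying entirely in $H^*$ and $\vec{a}''$ is the remainder. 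Then $\vec{a}'' \neq [\,]$, for otherwise $\vec{a} = \vec{h} \in H^*$ would force $\restrict{\vec{a}}{E} = [\,] \neq \vec{e}$; and by maximality of $\vec{h}$, $\vec{a}''$ begins with some letter in $A \setminus H$. Since $\restrict{\vec{a}''}{E} = \restrict{\vec{a}}{E} = \vec{e}$ and $\last(\vec{a}'') = \last(\vec{a}) = \last(\vec{e})$, we get $\vec{a}'' \in \gamma(\vec{e})$, and as it does not start with a hidden output, $\vec{a}'' \in \gamma'(\vec{e})$; hence $\vec{a} = \vec{h}\cons\vec{a}'' \in H^*\cons\gamma'(\vec{e})$.

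I do not anticipate a real obstacle; the argument is bookkeeping about projections and prefixes. The only points needing care are making the hypothesis $\vec{e} \neq [\,]$ do its work --- it is used both to guarantee that the relevant sequences are nonempty (so the $\last$ constraints transfer across a prepended hidden prefix) and to rule out the degenerate split $\vec{a} = \vec{h}$ in the reverse inclusion --- and observing that the hypothesis is genuinely necessary, since for $\vec{e} = [\,]$ one has $\gamma([\,]) = \{[\,]\}$ whereas $H^*\cons\gamma'([\,]) = H^*$, which differs as soon as $H \neq \emptyset$.
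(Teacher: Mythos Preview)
Your proof is correct and follows essentially the same double-inclusion argument as the paper's own proof: both directions hinge on $H \cap E = \emptyset$ so that an $H^*$-prefix is invisible to $\restrict{\cdot}{E}$, and on $\vec{e}\neq[\,]$ to ensure the relevant sequences are nonempty (so $\last$ behaves well and the all-hidden case is excluded). Your use of the maximal $H$-prefix for the reverse inclusion is exactly the paper's decomposition $\vec{a}=\vec{h}\cons a\cons\vec{a}'$ with $a\in A\setminus H$, just phrased slightly more carefully.
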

\begin{proof} 
To show that $H^*\cons\gamma'(\vec{e}) \subseteq \gamma(\vec{e})$, note that for all $\vec{a} \in H^*\cons\gamma'(\vec{e})$, there exists $\vec{h} \in H^*$ and $\vec{a}' \in \gamma'(\vec{e}) \subseteq \gamma(\vec{e})$ such that $\vec{a} = \vec{h}\cons\vec{a}'$.  Furthermore, $\restrict{\vec{h}\cons\vec{a}'}{E} = \restrict{\vec{a}'}{E} = \vec{e}'$ since $H \cap E = \emptyset$.  Since $\vec{e} \neq [\,]$, $\vec{a} \neq [\,]$ and $\mathsf{last}(\vec{h}\cons\vec{a}') = \mathsf{last}(\vec{a}') = \last(\vec{e})$.  Thus, $\vec{h}\cons\vec{a}' \in \gamma(\vec{e})$.

To show that $\gamma(\vec{e}') \subseteq H^*\cons\gamma'(\vec{e})$, for any $\vec{a} \in \gamma(\vec{e}')$, either $\vec{a} \in H^*$ or there exists $\vec{h} \in H^*$, $a \in A-H$, and $\vec{a}' \in A^*$ such that $\vec{a} = \vec{h}\cons a\cons\vec{a}'$.  The first case cannot arise since it would imply that $\vec{e} = [\,]$ since $\vec{e} = \restrict{\vec{a}}{E} = [\,]$.  For the second case, since $\vec{e} = \restrict{\vec{h}\cons a\cons\vec{a}'}{E} = \restrict{a\cons\vec{a}'}{E}$ and $\last(\vec{e}) = \mathsf{last}(\vec{h}\cons a\cons\vec{a}') = \mathsf{last}(a\cons\vec{a}')$.  Thus, $a\cons\vec{a}' \in \gamma(\vec{e})$.  Thus, $\vec{h}\cons a\cons\vec{a}' \in H^*\cons\gamma'(\vec{e})$.
\end{proof}

\begin{proposition}\label{thm:prefix-sem}
\begin{align*}
\Pr[\,\model{\langle L, s\rangle}(i\cons\vec{i}) {\supseqeq} i\cons\vec{a}\,] &= \sum_{s' \in S} \mu(s')\Pr[\,\model{\langle L, s'\rangle}(\vec{i}){\supseqeq}\vec{a}\,]\\
&\mbox{}\hspace{13ex}\text{if } s \trans^i \mu \text{ and } i \in I\\
\Pr[\,\model{\langle L, s\rangle}(\vec{i}){\supseqeq}o\cons\vec{a}\,] &= \sum_{s' \in S} \mu(s')\Pr[\,\model{\langle L, s'\rangle}(\vec{i}){\supseqeq}\vec{a}\,]\\
&\mbox{}\hspace{12ex}\text{if } s \trans^o \mu  \text{ and } o \in O \\
\Pr[\,\model{\langle L, s\rangle}(\vec{i}){\supseqeq}[\,]\,] &= 1\\
\Pr[\,\model{\langle L, s\rangle}(\vec{i}){\supseqeq}\vec{a}\,] &= 0\mbox{}\hspace{15ex}\text{otherwise}
\end{align*}
and $0 \leq \Pr[\,\model{\langle L, s\rangle}(\vec{i}){\supseqeq}\vec{a}\,] \leq 1$.
\end{proposition}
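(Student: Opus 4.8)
The plan is to unfold the definition $\Pr[\,\model{\M}(\vec{i}) \supseqeq \vec{a}\,] = \sum_{s' \in S} \M(\vec{i})(\vec{a},s')$ and then feed in the recursive definition of $\langle L, s\rangle(\vec{i})(\vec{a},s')$ one clause at a time, reordering the resulting double sums over the countable set $S$. Every quantity $\langle L, s\rangle(\vec{i})(\vec{a},s')$ is nonnegative (Proposition~\ref{prp:well-def-model}), so unconditional rearrangement of nonnegative series (Tonelli) licenses each interchange of summation; the four displayed identities are first established as identities in $[0,\infty]$, and finiteness---i.e.\ the $\le 1$ bound---is obtained afterwards by a separate induction.

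For the first equation, assume $s \trans^i \mu$ with $i\in I$. By the definition of $\Pr[\,\model{\cdot}(\cdot)\supseqeq\cdot\,]$ and the first clause of the definition of $\langle L,s\rangle(\vec{i})(\vec{a},s')$, we get $\Pr[\,\model{\langle L, s\rangle}(i\cons\vec{i})\supseqeq i\cons\vec{a}\,]=\sum_{s'\in S}\sum_{s''\in S}\mu(s'')\,\langle L,s''\rangle(\vec{i})(\vec{a},s')$; interchanging the two sums, pulling $\mu(s'')$ outside, and collecting the inner sum back into a probability yields $\sum_{s''\in S}\mu(s'')\,\Pr[\,\model{\langle L,s''\rangle}(\vec{i})\supseqeq\vec{a}\,]$, which is the asserted identity (after renaming $s''$). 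The output case is word-for-word the same using the second clause. For the empty trace, the third clause gives $\langle L,s\rangle(\vec{i})([\,],s)=1$ and the fourth clause gives $\langle L,s\rangle(\vec{i})([\,],s')=0$ for $s'\neq s$, so $\sum_{s'\in S}\langle L,s\rangle(\vec{i})([\,],s')=1$. Finally, the ``otherwise'' hypothesis of the statement is exactly the negation of the pattern conditions of the first three clauses of the definition, so there the fourth clause applies for every $s'$ and the sum is $0$.

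For the bounds, $0\le\Pr[\,\model{\langle L, s\rangle}(\vec{i})\supseqeq\vec{a}\,]$ holds because it is a sum of nonnegative terms. I would obtain the upper bound by induction on the length of $\vec{a}$ using the four equations just established: the empty-trace case equals $1$; in the input or output case, when the stated precondition holds the value equals $\sum_{s'\in S}\mu(s')\,\Pr[\,\model{\langle L,s'\rangle}(\cdot)\supseqeq\vec{a}'\,]\le\sum_{s'\in S}\mu(s')=1$ by the induction hypothesis together with $\mu\in\Disc(S)$; and otherwise the value is $0$.

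The proof is essentially bookkeeping, so there is no real obstacle. The only two points deserving care are the interchange of the two sums over the (countable) state space, valid precisely because all summands are nonnegative, and a careful verification that the four-way case split in the statement coincides with that of the definition of $\langle L,s\rangle(\vec{i})(\vec{a},s')$, so that the two ``otherwise'' branches agree---in particular that the statement's ``otherwise'' branch correctly absorbs the degenerate subcases in which $\vec{a}$ begins with an output $o$ on which $s$ has no transition, or $\vec{a}$ begins with an input that is not the head of $\vec{i}$ or on which $s$ has no transition, since in each of these the definition's fourth clause returns $0$.
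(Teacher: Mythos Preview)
Your proposal is correct and follows essentially the same approach as the paper: unfold the definition of $\Pr[\,\model{\cdot}(\cdot)\supseqeq\cdot\,]$, apply the recursive clause of $\langle L,s\rangle(\vec{i})(\vec{a},s')$, swap the two sums, and then prove the $\le 1$ bound by a separate induction on $\vec{a}$. Your upper-bound step ($\sum_{s'}\mu(s')\cdot(\le 1)\le\sum_{s'}\mu(s')=1$) is in fact slightly cleaner than the paper's, which phrases the same estimate via an $\argmax$ over $S$; and your explicit invocation of Tonelli for the sum interchange is a point the paper leaves implicit.
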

\begin{proof}
For the first equation:
\begin{align*}
\Pr[\,\model{\langle L, s\rangle}(i\cons\vec{i}) {\supseqeq} i\cons\vec{a}\,] 
&= \sum_{s'' \in S} \langle L, s\rangle(i\cons\vec{i})(i\cons\vec{a},s'')\\
&= \sum_{s'' \in S} \sum_{s' \in S} \mu(s') \langle L, s\rangle(\vec{i})(\vec{a},s'')\\
&= \sum_{s' \in S} \mu(s') \sum_{s' \in S} \langle L, s\rangle(\vec{i})(\vec{a},s'')\\
&= \sum_{s' \in S} \mu(s') \Pr[\,\langle L, s\rangle(\vec{i}) {\supseqeq} \vec{a}\,]
\end{align*}

For the second equation:
\begin{align*}
\Pr[\,\model{\langle L, s\rangle}(\vec{i}) {\supseqeq} o\cons\vec{a}\,] 
&= \sum_{s'' \in S} \langle L, s\rangle(\vec{i})(o\cons\vec{a},s'')\\
&= \sum_{s'' \in S} \sum_{s' \in S} \mu(s') \langle L, s\rangle(\vec{i})(\vec{a},s'')\\
&= \sum_{s' \in S} \mu(s') \sum_{s' \in S} \langle L, s\rangle(\vec{i})(\vec{a},s'')\\
&= \sum_{s' \in S} \mu(s') \Pr[\,\langle L, s\rangle(\vec{i}) {\supseqeq} \vec{a}\,]
\end{align*}

For the third equation:
$\Pr[\,\model{\langle L, s\rangle}(\vec{i}){\supseqeq}[\,]\,] = \sum_{s' \in S} \langle L, s\rangle(\vec{i})([\,], s') = 1$ since $\langle L, s\rangle(\vec{i})([\,], s) = 1$ and $\langle L, s\rangle(\vec{i})([\,], s') = 0$ for all $s' \neq s$.

For the forth equation:
$\Pr[\,\model{\langle L, s\rangle}(\vec{i}){\supseqeq}\vec{a}\,] = \sum_{s' \in S} \langle L, s\rangle(\vec{i})([\,], s') = 0$ since $\langle L, s\rangle(\vec{i})([\,], s') = 0$ for all $s'$.

To show that $0 \leq \Pr[\,\model{\langle L, s\rangle}(\vec{i}){\supseqeq} \vec{a}\,] \leq 1$, 
we use proof by induction over the structure of $\vec{a}$.

Case: $\vec{a} = [\,]$.  $\model{\langle L, s\rangle}(\vec{i})(\vec{a})$ is $1$.

Case: $\vec{a} = i\cons\vec{a}'$.  If there does not exist $\vec{i}'$ such that $\vec{i} = i\cons\vec{i}'$ and $s \trans^i \mu$, then $\model{\langle L, s\rangle}(\vec{i})(\vec{a}) = 0$.  If there does exist such a $\vec{i}'$, then $\model{\langle L, s\rangle}(\vec{i})(\vec{a}) = \sum_{s' \in S} \mu(s')\model{\langle L, s'\rangle}(\vec{i}')(\vec{a}')$.  By the inductive hypothesis, $\model{\langle L, s'\rangle}(\vec{i}')(\vec{a}')$ is well defined and between $0$ and $1$ for all $s'$.  Since $\mu$ is a distribution over states and the events of being in a state are mutually exclusive, $\sum_{s' \in S} \mu(s') = 1$.  Let $s_{\mathsf{max}} = \argmax_{s' \in S} \model{\langle L, s'\rangle}(\vec{i}')(\vec{a}')$.  
\begin{align*}
\model{\langle L, s\rangle}(\vec{i})(\vec{a}) &= \sum_{s' \in S} \mu(s')\model{\langle L, s'\rangle}(\vec{i}')(\vec{a}') \leq \sum_{s' \in S} \mu(s')\model{\langle L, s_{\mathsf{max}}\rangle}(\vec{i}')(\vec{a}')\\ 
&= \model{\langle L, s_{\mathsf{max}}\rangle}(\vec{i}') (\vec{a}')\\
&\leq 1
\end{align*}

Case: $\vec{a} = o\cons\vec{a}'$.  If there does not exist $\mu$ such that $s \trans^o \mu$, then $\model{\langle L, s\rangle}(\vec{i})(\vec{a}) = 0$.  If there does, then
$\model{\langle L, s\rangle}(\vec{i})(\vec{a}) = \sum_{s' \in S} \mu(s')\model{\langle L, s'\rangle}(\vec{i})(\vec{a}')$ and we can use the inductive hypothesis as above.
\end{proof}

\begin{proposition}\label{thm:nu-help}
For all $H$-disabled states $s$, $a$ in $D \cup Q \cup R$, $\vec{e}$ in $E^*$, and $\vec{i}$ in $I^*$,
if $\vec{e} \neq [\,]$, $s \trans^a \mu$, and $s \wtrans^a \nu$, then
\[ \sum_{\vec{a} \in \gamma(\vec{e})} \sum_{s' \in S} \sum_{s'' \in S} \mu(s'') * \langle L, s''\rangle(\vec{i})(\vec{a}, s') =  \sum_{x \in S_\bot} \nu(x) \Pr[\,\restrict{\model{\langle L, x\rangle}(\vec{i})}{E} {\supseqeq} \vec{e}\,] \]
\end{proposition}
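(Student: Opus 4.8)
The plan is to expand the left‑hand side into a sum over the states $s$ can reach in one step under $a$, push the hidden actions through using Propositions~\ref{prp:hidden-gamma} and~\ref{prp:mult-apart}, and then recognise the result as the right‑hand side via the definition of the extended transition $s\wtrans^a\nu$. The hypotheses that $s$ is $H$-disabled and $a\in D\cup Q\cup R$ are used only to guarantee that $s\trans^a\mu$ and $s\wtrans^a\nu$ exist and that $\nu$ is supported on $H$-disabled states together with possibly $\bot$, matching the index set on the right. Every quantity occurring is nonnegative, so all the (several, infinite) sums may be reordered freely by Tonelli's theorem; I would flag this once at the start.

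First I would rewrite the left‑hand side. Since $\vec{e}\neq[\,]$, Proposition~\ref{prp:hidden-gamma} gives $\gamma(\vec{e})=H^*\cons\gamma'(\vec{e})$, and this factorisation is unique: each $\vec{a}\in\gamma(\vec{e})$ splits uniquely as $\vec{a}=\vec{h}\cons\vec{a}'$ with $\vec{h}\in H^*$ its maximal hidden‑output prefix and $\vec{a}'\in\gamma'(\vec{e})$ (a longer hidden prefix of $\vec{a}'$ would contradict $\vec{a}'\in\gamma'(\vec{e})$, using $\vec{a}'\neq[\,]$). Hence the sum over $\vec{a}\in\gamma(\vec{e})$ may be reindexed as a sum over pairs $(\vec{h},\vec{a}')\in H^*\times\gamma'(\vec{e})$. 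Applying Proposition~\ref{prp:mult-apart} to $\langle L,s''\rangle(\vec{i})(\vec{h}\cons\vec{a}',s')$ and summing over $s'\in S$ turns $\sum_{s'\in S}\langle L,s''\rangle(\vec{i})(\vec{h}\cons\vec{a}',s')$ into $\sum_{s'''\in S}\langle L,s''\rangle([\,])(\vec{h},s''')\,\Pr[\model{\langle L,s'''\rangle}(\vec{i}){\supseqeq}\vec{a}']$. After reordering, the left‑hand side equals
\[ \sum_{s''\in S}\mu(s'')\sum_{s'''\in S}\Bigl(\sum_{\vec{h}\in H^*}\langle L,s''\rangle([\,])(\vec{h},s''')\Bigr)\sum_{\vec{a}'\in\gamma'(\vec{e})}\Pr[\model{\langle L,s'''\rangle}(\vec{i}){\supseqeq}\vec{a}']. \]

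Next I would collapse the inner $s'''$-sum. If $s'''$ is $H$-enabled then by output determinism it transitions only under its hidden action, so for any $\vec{a}'\in\gamma'(\vec{e})$ — nonempty and beginning with an action outside $H$ — the definition of $\model{\cdot}$ forces $\Pr[\model{\langle L,s'''\rangle}(\vec{i}){\supseqeq}\vec{a}']=0$; thus only $H$-disabled $s'''$ contribute. Symmetrically, if $s'''$ is $H$-disabled it has no transition on any hidden output, so $\Pr[\model{\langle L,s'''\rangle}(\vec{i}){\supseqeq}\vec{a}'']=0$ for every $\vec{a}''\in\gamma(\vec{e})\setminus\gamma'(\vec{e})$ (each such word begins with a hidden output), giving $\sum_{\vec{a}'\in\gamma'(\vec{e})}\Pr[\model{\langle L,s'''\rangle}(\vec{i}){\supseqeq}\vec{a}'] = \sum_{\vec{a}'\in\gamma(\vec{e})}\Pr[\model{\langle L,s'''\rangle}(\vec{i}){\supseqeq}\vec{a}'] = \Pr[\restrict{\model{\langle L,s'''\rangle}(\vec{i})}{E}{\supseqeq}\vec{e}]$ by the characterisation of $\Pr[\restrict{\cdot}{E}{\supseqeq}\vec{e}]$ as a sum over $\gamma(\vec{e})$. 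Finally, for $H$-disabled $s'''$ the coefficient $\sum_{s''\in S}\mu(s'')\sum_{\vec{h}\in H^*}\langle L,s''\rangle([\,])(\vec{h},s''')$ is by definition exactly $\nu(s''')$, and since $\vec{e}\neq[\,]$ we have $\Pr[\restrict{\model{\langle L,\bot\rangle}(\vec{i})}{E}{\supseqeq}\vec{e}]=0$, so reinstating the $x=\bot$ term changes nothing. Collecting these identities yields $\sum_{x\in S_\bot}\nu(x)\,\Pr[\restrict{\model{\langle L,x\rangle}(\vec{i})}{E}{\supseqeq}\vec{e}]$, which is the right‑hand side.

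I do not expect a genuine conceptual obstacle here; the work lies entirely in the bookkeeping — certifying that the $\vec{h}\cons\vec{a}'$ reindexing is a bijection, keeping the $S$‑versus‑$S_\bot$ and $H$-enabled‑versus‑$H$-disabled case splits straight, and legitimising each interchange of infinite sums — all routine given nonnegativity.
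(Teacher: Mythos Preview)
Your proposal is correct and follows essentially the same route as the paper's proof: both use Proposition~\ref{prp:hidden-gamma} to factor $\gamma(\vec{e})$ as $H^*\cons\gamma'(\vec{e})$, reindex over $H^*\times\gamma'(\vec{e})$, apply Proposition~\ref{prp:mult-apart}, restrict to $H$-disabled $s'''$, recognise the definition of $\nu$, and reinstate the harmless $\bot$ term. Your treatment is in fact slightly more careful than the paper's in justifying the bijective reindexing and in explaining why the inner sum over $\gamma'(\vec{e})$ may be replaced by one over $\gamma(\vec{e})$ for $H$-disabled $s'''$.
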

\begin{proof}
\begin{align}
\sum_{\vec{a} \in \gamma(\vec{e})} &\sum_{s' \in S} \sum_{s'' \in S} \mu(s'') * \langle L, s''\rangle(\vec{i})(\vec{a}, s')\\
&= \sum_{s' \in S} \sum_{s'' \in S} \mu(s'') \sum_{\vec{a} \in \gamma(\vec{e})} \langle L, s''\rangle(\vec{i})(\vec{a}, s')\\
&= \sum_{s' \in S} \sum_{s'' \in S} \mu(s'') \sum_{\vec{a} \in H^*\cons\gamma'(\vec{e})} \langle L, s''\rangle(\vec{i})(\vec{a}, s') \label{ln:nu-help-cons-apart}\\
&= \sum_{s' \in S} \sum_{s'' \in S} \mu(s'') \sum_{\vec{h} \in H^*} \sum_{\vec{a} \in \gamma'(\vec{e})}  \langle L, s''\rangle(\vec{i})(\vec{h}\cons\vec{a}, s')\label{ln:nu-help-cons-to-cross}\\
&= \sum_{s' \in S} \sum_{s'' \in S} \mu(s'') \sum_{\vec{h} \in H^*} \sum_{\vec{a} \in \gamma(\vec{e})}  \sum_{s''' \in S}  \langle L, s''\rangle([\,])(\vec{h},s''') * \langle L, s'''\rangle(\vec{i})(\vec{a}, s') \label{ln:nu-help-mult-apart}\\
&= \sum_{s' \in S} \sum_{s'' \in S} \mu(s'') \sum_{\vec{h} \in H^*} \sum_{\vec{a} \in \gamma(\vec{e})}  \sum_{s''' \in S'}  \langle L, s''\rangle([\,])(\vec{h},s''') * \langle L, s'''\rangle(\vec{i})(\vec{a}, s') \label{ln:nu-help-drop-states}\\
&= \sum_{s''' \in S'} \left(\sum_{s'' \in S} \mu(s'') \sum_{\vec{h} \in H^*} \langle L, s''\rangle([\,])(\vec{h},s''') \right) \sum_{\vec{a} \in \gamma(\vec{e})} \sum_{s' \in S} \langle L, s'''\rangle(\vec{i})(\vec{a}, s')\\
&= \sum_{s''' \in S'} \nu(s''') \sum_{\vec{a} \in \gamma(\vec{e})} \Pr[\,\model{\langle L, s'''\rangle}(\vec{i}) {\supseqeq} \vec{a}\,]\label{ln:nu-help-nu}\\
&= \sum_{s''' \in S'} \nu(s''') \Pr[\,\restrict{\model{\langle L, s'''\rangle}(\vec{i})}{E} {\supseqeq} \vec{e}\,]\\
&= \sum_{x \in S_\bot} \nu(x) \Pr[\,\restrict{\model{\langle L, x\rangle}(\vec{i})}{E} {\supseqeq} \vec{e}\,]\label{ln:nu-help-bot-zero}
\end{align}
where $S'$ is the subset of states $S$ that are $H$-disabled.
Line~\ref{ln:nu-help-cons-apart} follows from Proposition~\ref{prp:hidden-gamma}.
Line~\ref{ln:nu-help-cons-to-cross} follows since there is a one-to-one correspondence between elements of $H^*\cons\gamma'(\vec{e})$ and $H^*\cross\gamma'(\vec{e})$ given as $\vec{a} \in H^*\cons\gamma'(\vec{e})$ corresponding to $\langle \vec{h}, \vec{a}\rangle$ where $\vec{h}$ is the largest sequence of $H^*$ such that $\vec{a} = \vec{h}\cons\vec{a}'$ for some $\vec{a}'$.
Line~\ref{ln:nu-help-mult-apart} follows from Proposition~\ref{prp:mult-apart}.
Line~\ref{ln:nu-help-drop-states} follows since $\vec{a} \in \gamma'(\vec{e})$ starting with an action not in $H$ implies that $\langle L, s'''\rangle(\vec{i})(\vec{a}, s') = 0$ for any state that is $H$-enabled.
Line~\ref{ln:nu-help-bot-zero} follows from $\Pr[\,\restrict{\model{\langle L, \bot\rangle}(\vec{i})}{E} {\supseqeq} e\cons\vec{e}\,] = 0$ since $e\cons\vec{e} \neq [\,]$ and $\nu(s''') = 0$ for any $H$-enabled state $s'''$.
\end{proof}

Informally speaking the following proposition shows how we can account for transitions on hidden actions in calculating the probability of observing a particular behavior from a given state. The first part of the proposition states that the probability of observing the sequence $\vec{e}$ starting from the state $s$ given the input sequence $d\cons\vec{i}'$ can be calculated by considering those states that are reachable from $s$ by performing the action $d$ followed by a sequence of hidden actions. For each such reachable state we take the probability of being in that state and multiply it with the probability of observing the sequence $\vec{e}$ from that state given the input sequence $\vec{i}'$. The other parts can be explained analogously.

\begin{proposition}\label{thm:nu}
For all \textsc{plts}, $s \in S$, $d \in D$, $q \in Q$, $r \in R$, $\vec{i},\vec{i}' \in I^*$, and $\vec{e},\vec{e}' \in E^*$,
\begin{align*}
\Pr[\,&\restrict{\model{\langle L, s\rangle}(d\cons\vec{i}')}{E} {\supseqeq} \vec{e}\,]\\
&= \sum_{x \in S_\bot} \nu(x) \Pr[\,\restrict{\model{\langle L, x\rangle}(\vec{i}')}{E} {\supseqeq} \vec{e}\,] &&\hspace{0ex}\text{where $s \wtrans^d \nu$}\\
\Pr[\,&\restrict{\model{\langle L, s\rangle}(q\cons\vec{i}')}{E} {\supseqeq} q\cons\vec{e}'\,] \\
&= \sum_{x \in S_\bot} \nu(x) \Pr[\,\restrict{\model{\langle L, x\rangle}(\vec{i}')}{E} {\supseqeq} \vec{e}'\,] &&\hspace{0ex}\text{where $s \wtrans^q \nu$}\\
\Pr[\,&\restrict{\model{\langle L, s\rangle}(\vec{i})}{E} {\supseqeq} r\cons\vec{e}'\,]\\
&= \sum_{x \in S_\bot} \nu(x) \Pr[\,\restrict{\model{\langle L, x\rangle}(\vec{i})}{E} {\supseqeq} \vec{e}'\,] &&\hspace{0ex}\text{where $s \wtrans^r \nu$}\\
\end{align*}
\end{proposition}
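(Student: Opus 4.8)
The plan is to prove the three equations by a common recipe: \emph{peel off the first action} of the trace with Proposition~\ref{thm:prefix-sem}, and then \emph{absorb the subsequent hidden transitions into $\nu$} with Proposition~\ref{thm:nu-help}. The degenerate case in which the remaining observable sequence ($\vec{e}$ in the first equation, $\vec{e}'$ in the other two) equals $[\,]$ is handled separately, since Proposition~\ref{thm:nu-help} assumes a nonempty sequence.

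First I would record a small observation used throughout: in all three equations the state $s$ is $H$-disabled. Indeed, $s \wtrans^d \nu$ and $s \wtrans^q \nu$ require $s$ to accept an input, hence $s$ is input-accepting (an $H$-enabled state produces a hidden output and accepts no input); and $s \wtrans^r \nu$ with $r \in R$ forces $s$, by output determinism, to be output-producing with unique output $r \notin H$. In every case $s \wtrans^a \nu$ unpacks to a one-step distribution $\mu$ with $s \trans^a \mu$, and Proposition~\ref{thm:nu-help} is applicable.

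For the first equation, if $\vec{e} = [\,]$ the left side is $1$ while the right side is $\sum_{x \in S_\bot}\nu(x) = 1$ by Proposition~\ref{prp:nu-dist} together with $\Pr[\restrict{\model{\langle L, x\rangle}(\vec{i}')}{E}\supseqeq[\,]] = 1$. If $\vec{e}\neq[\,]$, I would expand the left side as $\sum_{\vec{a}\in\gamma(\vec{e})}\Pr[\model{\langle L,s\rangle}(d\cons\vec{i}')\supseqeq\vec{a}]$; since $s$ is input-accepting, every trace from $s$ under $d\cons\vec{i}'$ begins with the action $d$, so only $\vec{a} = d\cons\vec{a}'$ contribute, and as $d\notin E$ the sequences $\{d\cons\vec{a}' : \vec{a}'\in\gamma(\vec{e})\}$ are precisely the $d$-prefixed members of $\gamma(\vec{e})$. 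Peeling $d$ with Proposition~\ref{thm:prefix-sem} turns the sum into $\sum_{\vec{a}'\in\gamma(\vec{e})}\sum_{s''\in S}\mu(s'')\sum_{s'\in S}\langle L,s''\rangle(\vec{i}')(\vec{a}',s')$, which after reordering is exactly the left side of Proposition~\ref{thm:nu-help} with input sequence $\vec{i}'$; that proposition yields the claimed right side. The second equation is the same with $q$ in place of $d$ — one checks that $\vec{a} = q\cons\vec{a}'$ and $\restrict{\vec{a}}{E} = q\cons\vec{e}'$ force $\vec{a}'\in\gamma(\vec{e}')$ — and its $\vec{e}' = [\,]$ case reduces, via Proposition~\ref{thm:prefix-sem}, to $\sum_{s''}\mu(s'')\cdot 1 = 1 = \sum_{x\in S_\bot}\nu(x)$. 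The third equation follows the same template, except that $r$ is an output: I would peel it off with the output clause of Proposition~\ref{thm:prefix-sem} (which leaves the available inputs $\vec{i}$ untouched) and then close with Proposition~\ref{thm:nu-help}, the $\vec{e}'=[\,]$ case being handled as before.

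The proof is essentially bookkeeping once Proposition~\ref{thm:nu-help} is in hand; the only genuinely delicate points are (i) confirming that $s$ is $H$-disabled in each case so that Proposition~\ref{thm:nu-help} applies, (ii) the combinatorial re-indexing over $\gamma(\cdot)$ — both showing that non-conforming trace prefixes contribute zero and identifying the correct bijection for the surviving ones — and (iii) the empty-tail edge cases, which fall outside the hypotheses of Proposition~\ref{thm:nu-help} and must be verified directly. I expect (iii) to be the most error-prone step.
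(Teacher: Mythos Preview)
Your proposal is correct and follows essentially the same approach as the paper's own proof: handle the empty-tail case directly via Proposition~\ref{prp:nu-dist}, then for the nonempty case peel off the leading action (the paper unfolds the definition of $\langle L,s\rangle(\cdot)(\cdot,\cdot)$ where you cite Proposition~\ref{thm:prefix-sem}, but these are the same step), re-index over $\gamma(\cdot)$, and invoke Proposition~\ref{thm:nu-help}. Your explicit verification that $s$ is $H$-disabled in each case is a useful addition that the paper leaves implicit.
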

\begin{proof}
For the first equality of the proposition:
Note that if $\vec{e} = [\,]$, then 
\[ \Pr[\,\restrict{\model{\langle L, s\rangle}(d\cons\vec{i}')}{E} {\supseqeq} \vec{e}\,] = 1 = \sum_{x \in S_\bot} \nu(x) \Pr[\,\restrict{\model{\langle L, x\rangle}(\vec{i}')}{E} {\supseqeq} \vec{e}\,] \]
Otherwise, since $s \wtrans^d \nu$, we know there exists $\mu$ such that $s \trans^d \mu$.  It follows that
\begin{align}
\Pr[\,\restrict{\model{\langle L, s\rangle}(d\cons\vec{i}')}{E} {\supseqeq} \vec{e}\,] 
&= \sum_{\vec{a} \in \gamma(\vec{e})} \Pr[\,\model{\langle L, s\rangle}(d\cons\vec{i}') {\supseqeq} \vec{a}\,]\\
&= \sum_{\vec{a}' \in \gamma(\vec{e})} \Pr[\,\model{\langle L, s\rangle}(d\cons\vec{i}') {\supseqeq} d\cons\vec{a}'\,] \label{ln:nu-d-form}\\
&= \sum_{\vec{a}' \in \gamma(\vec{e})} \sum_{s' \in S} \langle L, s\rangle(d\cons\vec{i}')(d\cons\vec{a}', s')\\
&= \sum_{\vec{a}' \in \gamma(\vec{e})} \sum_{s' \in S} \sum_{s'' \in S} \mu(s'') * \langle L, s''\rangle(\vec{i}')(\vec{a}', s')\\
&= \sum_{x \in S_\bot} \nu(x) \Pr[\,\restrict{\model{\langle L, x\rangle}(\vec{i}')}{E} {\supseqeq} \vec{e}\,]\label{ln:nu-d-nu-help}
\end{align}
Line~\ref{ln:nu-d-form} follows since $s \trans^d \mu$ implies that $s$ does not transition under any outputs and $\vec{e} \neq [\,]$ implies that $\vec{a} \in \gamma(\vec{e})$ cannot be $[\,]$.  Thus, we know that the first action of $\vec{a}$ must be of the form $d\cons\vec{a}'$ for $\Pr[\,\model{\langle L, s\rangle}(d\cons\vec{i}') {\supseqeq} \vec{a}\,]$ to be non-zero.  Since $\restrict{d\cons\vec{a}'}{E} = \vec{e}$ and $d \notin E$, $\restrict{\vec{a}'}{E} = \vec{e}$.  Furthermore, $\mathsf{last}(\vec{a}') = \mathsf{last}(\vec{a}) = \mathsf{last}(\vec{e})$.  Thus, $\vec{a}' \in \gamma(\vec{e})$.
Line~\ref{ln:nu-d-nu-help} follows from Proposition~\ref{thm:nu-help}.

For the second equality of the proposition:
Note that if $\vec{e}' = [\,]$, then 
\[ \Pr[\,\restrict{\model{\langle L, s\rangle}(q\cons\vec{i}')}{E} {\supseqeq} q\cons\vec{e}'\,] = 1 = \sum_{x \in S_\bot} \nu(x) \Pr[\,\restrict{\model{\langle L, x\rangle}(\vec{i}')}{E} {\supseqeq} \vec{e}'\,] \]
Otherwise, since $s \wtrans^q \nu$, we know there exists $\mu$ such that $s \trans^q \mu$.  It follows that
\begin{align}
\Pr[\,\restrict{\model{\langle L, s\rangle}(q\cons\vec{i}')}{E} {\supseqeq} q\cons\vec{e}'\,] 
&= \sum_{\vec{a} \in \gamma(q\cons\vec{e}')} \Pr[\,\model{\langle L, s\rangle}(q\cons\vec{i}') {\supseqeq} \vec{a}\,]\\
&= \sum_{\vec{a}' \in \gamma(\vec{e}')} \Pr[\,\model{\langle L, s\rangle}(q\cons\vec{i}') {\supseqeq} q\cons\vec{a}'\,] \label{ln:nu-q-form}\\
&= \sum_{\vec{a}' \in \gamma(\vec{e}')} \sum_{s' \in S} \langle L, s\rangle(q\cons\vec{i}')(q\cons\vec{a}', s')\\
&= \sum_{\vec{a}' \in \gamma(\vec{e}')} \sum_{s' \in S} \sum_{s'' \in S} \mu(s'') * \langle L, s''\rangle(\vec{i}')(\vec{a}', s')\\
&= \sum_{x \in S_\bot} \nu(x) \Pr[\,\restrict{\model{\langle L, x\rangle}(\vec{i}')}{E} {\supseqeq} \vec{e}'\,]\label{ln:nu-q-nu-help}
\end{align}
Line~\ref{ln:nu-q-form} follows since $s \trans^q \mu$ implies that $s$ does not transition under any outputs and $\vec{e}' \neq [\,]$ implies that $\vec{a}' \in \gamma(\vec{e}')$ cannot be $[\,]$.  Thus, we know that the first action of $\vec{a}$ must be of the form $q\cons\vec{a}'$ for $\Pr[\,\model{\langle L, s\rangle}(q\cons\vec{i}') {\supseqeq} \vec{a}\,]$ to be non-zero.  Since $\restrict{q\cons\vec{a}'}{E} = q\cons\vec{e}'$, $\restrict{\vec{a}'}{E} = \vec{e}'$.  Furthermore, $\mathsf{last}(\vec{a}') = \mathsf{last}(\vec{a}) = \mathsf{last}(\vec{e}')$.  Thus, $\vec{a}' \in \gamma(\vec{e}')$.
Line~\ref{ln:nu-q-nu-help} follows from Proposition~\ref{thm:nu-help}.

For the third equality of the proposition:
Note that if $\vec{e}' = [\,]$, then 
\[ \Pr[\,\restrict{\model{\langle L, s\rangle}(\vec{i})}{E} {\supseqeq} r\cons\vec{e}'\,] = 1 = \sum_{x \in S_\bot} \nu(x) \Pr[\,\restrict{\model{\langle L, x\rangle}(\vec{i})}{E} {\supseqeq} \vec{e}'\,] \]
Otherwise, since $s \wtrans^r \nu$, we know there exists $\mu$ such that $s \trans^r \mu$.  It follows that
\begin{align}
\Pr[\,\restrict{\model{\langle L, s\rangle}(\vec{i})}{E} {\supseqeq} r\cons\vec{e}'\,] 
&= \sum_{\vec{a} \in \gamma(r\cons\vec{e})} \Pr[\,\model{\langle L, s\rangle}(\vec{i}) {\supseqeq} \vec{a}\,]\\
&= \sum_{\vec{a}' \in \gamma(\vec{e}')} \Pr[\,\model{\langle L, s\rangle}(\vec{i}) {\supseqeq} r\cons\vec{a}'\,] \label{ln:nu-r-form}\\
&= \sum_{\vec{a} \in \gamma(\vec{e}')} \sum_{s' \in S} \langle L, s\rangle(\vec{i})(r\cons\vec{a}', s')\\
&= \sum_{\vec{a} \in \gamma(\vec{e}')} \sum_{s' \in S} \sum_{s'' \in S} \mu(s'') * \langle L, s''\rangle(\vec{i})(\vec{a}', s')\\
&= \sum_{x \in S_\bot} \nu(x) \Pr[\,\restrict{\model{\langle L, x\rangle}(\vec{i})}{E} {\supseqeq} \vec{e}'\,]\label{ln:nu-r-nu-help}
\end{align}
Line~\ref{ln:nu-r-form} follows since $s \trans^r \mu$ implies that $s$ does not transition under any action other than $r$ and $\vec{e}' \neq [\,]$ implies that $\vec{a}' \in \gamma(\vec{e}')$ cannot be $[\,]$.  Thus, we know that the first action of $\vec{a}$ must be of the form $r\cons\vec{a}'$ for $\Pr[\,\model{\langle L, s\rangle}(\vec{i}) {\supseqeq} \vec{a}\,]$ to be non-zero.  Since $\restrict{r\cons\vec{a}'}{E} = r\cons\vec{e}'$, $\restrict{\vec{a}'}{E} = \vec{e}'$.  Furthermore, $\mathsf{last}(\vec{a}') = \mathsf{last}(\vec{a}) = \mathsf{last}(\vec{e}')$.  Thus, $\vec{a}' \in \gamma(\vec{e}')$.
Line~\ref{ln:nu-r-nu-help} follows from Proposition~\ref{thm:nu-help}.
\end{proof}

\appsection{Basic Properties of Differential Noninterference}
\label{app:mut-diff-proofs}

\paragraph{Sequence Differencing}
Given the input sequences $\vec{i}_1$ and $\vec{i}_2$, $\diff{\vec{i}_1}{\vec{i}_2}$  denotes the number of data points on which they differ: the minimum total number of data point insertions into $\vec{i}_1$ and $\vec{i}_2$ it takes to make them equal. Formally,
\begin{itemize}
\item $\diff{\vec{i}_1}{\vec{i}_2} = 0$ iff $\vec{i}_1 = \vec{i}_2$. 
\item For $1 \leq n$, $\diff{\vec{i}_1}{\vec{i}_2} = n$ iff there exists $d \in D$, $\vec{i}, \vec{i}'_1, \vec{i}'_2 \in I^*$, such that both of the following properties hold: 
    \begin{itemize}
    \item either $\vec{i}_1 = \vec{i}\cons d\cons\vec{i}'_1$ and $\vec{i}_2 = \vec{i}\cons\vec{i}'_2$, or $\vec{i}_1 = \vec{i}\cons\vec{i}'_1$ and $\vec{i}_2 = \vec{i}\cons d\cons\vec{i}'_2$; and
    \item $\diff{\vec{i}'_1}{\vec{i}'_2} = n-1$.
\end{itemize}
\end{itemize}

For $\diff{\vec{i}_1}{\vec{i}_2} = n$ to hold for any $n$, $\vec{i}_1$ and $\vec{i}_2$ must agree on every query from $Q$: they may only differ by $n$ data points from $D$.
Since differential privacy is defined using data sets differing on one element, in most theorems we are interested in the case where $\diff{\vec{i}_1}{\vec{i}_2} = 1$, which means that there exists $d \in D$, and $\vec{i},\vec{i}' \in I^*$ such that either $\vec{i}_1 = \vec{i}\cons d\cons\vec{i}'$ and $\vec{i}_2 = \vec{i}\cons\vec{i}'$, or $\vec{i}_2 = \vec{i}\cons d\cons\vec{i}'$ and $\vec{i}_1 = \vec{i}\cons\vec{i}'$.

For example, let $d_1$ and $d_2$ range over elements in $D$, and $q_1$ and $q_2$ range over elements in $Q$. 
\begin{itemize}
\item $\Delta([d_1,q_1,d_2], [d_1,q_1]) = 1$ \ (add $d_2$ to the end of the second sequence to get the first).
\item $\Delta([q_1,d_2,q_2], [d_1,q_1,d_2,q_2]) = 1$ \ (add $d_1$ to the front of the first to get the second).
\item $\Delta([d_1,q_1,d_2,q_2], [d_1,q_1,q_2]) = 1$ \ (add $d_2$ between $q_1$ and $q_2$ of the second to get the first).
\item $\Delta([d_1,d_2,q_1,q_2], [d_1,d_2,q_2,q_1])$ is undefined \ (the two sequences do not agree on queries).
\end{itemize}
Note that in the first example, the two sequences have a difference of one under the above definition but do not have a Hamming distance since they are of different lengths.

While the choice of using all possible subsets of the set of trace prefixes instead of a single prefix makes the power of differential noninterference more apparent, it does not actually impose a stronger requirement as shown by the next lemma.  This result simplifies reasoning about differential noninterference and is useful for proving subsequent results in this paper. 

\begin{proposition}\label{prp:strong-by-trace}
$\m$ has $\epsilon$-differential noninterference if and only if for all input sequences $\vec{i}_1$ and $\vec{i}_2$ in $I$ such that $\diff{\vec{i}_1}{\vec{i}_2} \leq 1$ and $\vec{e}$ in $E^*$,
\[ \Pr[\restrict{\m(\vec{i}_1)}{E} \supseqeq \vec{e}] \leq \e{\epsilon} * \Pr[\restrict{\m(\vec{i}_2)}{E} \supseqeq \vec{e}]\]
\end{proposition}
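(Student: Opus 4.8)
The plan is to prove the two directions separately; the forward implication is immediate, and all the content is in the converse.

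For the \emph{only if} direction, suppose $\m$ has $\epsilon$-differential noninterference in the sense of Definition~\ref{def:strong-diff-priv}. Given $\vec{i}_1,\vec{i}_2$ with $\diff{\vec{i}_1}{\vec{i}_2}\le 1$ and $\vec{e}\in E^*$, instantiate that definition with the singleton set $S=\{\vec{e}\}$. Since for every trace $\vec{e}''$ we have $\vec{e}''\supseqeq\{\vec{e}\}$ exactly when $\vec{e}''\supseqeq\vec{e}$, the inequality supplied by the definition for $S=\{\vec{e}\}$ is literally the desired inequality.

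For the \emph{if} direction, fix $\vec{i}_1,\vec{i}_2$ with $\diff{\vec{i}_1}{\vec{i}_2}\le 1$ and an arbitrary $S\subseteq E^*$; we must bound $\Pr[\restrict{\m(\vec{i}_1)}{E}\supseqeq S]$. The obstacle is that the cones $\{\vec{e}'':\vec{e}''\supseqeq\vec{e}'\}$ (traces extending $\vec{e}'$ as a prefix) for $\vec{e}'\in S$ overlap whenever one element of $S$ is a prefix of another, so we cannot simply sum the per-trace inequality over $S$. The fix is to pass to the set $S'$ of $\supseqeq$-minimal elements of $S$, i.e.\ those $\vec{e}'\in S$ having no proper prefix in $S$. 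Two facts make this harmless: (i) $S'$ is \emph{prefix-free}, since any two elements of $S'$ that were both prefixes of a common string would be prefix-comparable; and (ii) $\vec{e}''\supseqeq S$ iff $\vec{e}''\supseqeq S'$, because any $\vec{e}'\in S$ that is a prefix of $\vec{e}''$ has a shortest prefix lying in $S$, and that shortest prefix is $\supseqeq$-minimal in $S$ (hence in $S'$) and is still a prefix of $\vec{e}''$; the reverse direction is trivial as $S'\subseteq S$. Consequently $\Pr[\restrict{\m(\vec{i})}{E}\supseqeq S]=\Pr[\restrict{\m(\vec{i})}{E}\supseqeq S']$ for either input sequence.

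Now, because $S'$ is prefix-free, no trace can extend two distinct elements of $S'$, so the event $\{\vec{t}:\vec{t}\supseqeq S'\}$ is the disjoint union of the cones $\{\vec{t}:\vec{t}\supseqeq\vec{e}'\}$ over $\vec{e}'\in S'$; since $E$ is countable, $S'\subseteq E^*$ is countable, and countable additivity of the trace measure gives $\Pr[\restrict{\m(\vec{i})}{E}\supseqeq S']=\sum_{\vec{e}'\in S'}\Pr[\restrict{\m(\vec{i})}{E}\supseqeq\vec{e}']$. Applying the hypothesis term by term to this countable, nonnegative sum yields
\begin{align*}
\Pr[\restrict{\m(\vec{i}_1)}{E}\supseqeq S] &= \sum_{\vec{e}'\in S'}\Pr[\restrict{\m(\vec{i}_1)}{E}\supseqeq\vec{e}'] \le \e{\epsilon}\sum_{\vec{e}'\in S'}\Pr[\restrict{\m(\vec{i}_2)}{E}\supseqeq\vec{e}']\\
&= \e{\epsilon}\,\Pr[\restrict{\m(\vec{i}_2)}{E}\supseqeq S],
\end{align*}
which is exactly the required inequality. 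The degenerate cases are covered by the same argument: $S=\emptyset$ gives $S'=\emptyset$ and both probabilities are $0$, while $[\,]\in S$ gives $S'=\{[\,]\}$ and both probabilities are $1$. The only nonroutine step is the reduction to the antichain $S'$ together with the appeal to countable additivity for the resulting disjoint cones; everything else is bookkeeping.
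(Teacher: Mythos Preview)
Your proof is correct and follows essentially the same approach as the paper: both pass to the antichain $S'$ of $\supseqeq$-minimal elements of $S$, show that $\Pr[\cdot\supseqeq S]=\Pr[\cdot\supseqeq S']$, observe that the cones over $S'$ are disjoint, and then sum the per-trace inequality using countability. Your argument for replacing $S$ by $S'$ (taking the shortest prefix in $S$) is a direct well-ordering argument where the paper phrases the same step as an induction on length, but the content is identical.
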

\begin{proof}
The only if direction follows directly from the definition by setting $S = \{\vec{e}\}$.

For the if direction, arbitrarily fix $\vec{i}_1$ and $\vec{i}_2$ such that $\diff{\vec{i}_1}{\vec{i}_2} \leq 1$ and $S \subseteq E^*$.  By assumption, for all $\vec{e}$ in $E^*$,
\[ \Pr[\restrict{\m(\vec{i}_1)}{E} \supseqeq \vec{e}] \leq \e{\epsilon} * \Pr[\restrict{\m(\vec{i}_2)}{E} \supseqeq \vec{e}]\]

Let $S'$ be $S$ with all the elements that are a longer version of another element of $S$ removed.  That is, $S' = \set{\vec{e}' \in S}{\nexists \vec{e} \in S \st \vec{e}' \supseq \vec{e}}$ where  $\vec{e}' \supseq \vec{e}$ means that $\vec{e}'$ is a strict prefix of $\vec{e}$.  Proof by induction over the length of $\vec{e}$ shows that for all $\vec{e}$ in $S$, there exists $\vec{e}'$ in $S'$ such that $\vec{e} \supseqeq \vec{e}'$.  Thus, if there exists $\vec{e}$ in $S$ such that $\restrict{\m(\vec{i}_1)}{E} \supseqeq \vec{e}$, then there exists $\vec{e}'$ in $S'$ such that $\restrict{\m(\vec{i}_1)}{E} \supseqeq \vec{e}'$.  Thus, for all $\vec{i}$, $\Pr[\restrict{\m(\vec{i})}{E} \supseqeq S] = \Pr[\restrict{\m(\vec{i})}{E} \supseqeq S']$.

For two $\vec{e}'_1$ and $\vec{e}'_2$ in $S'$ such that $\vec{e}'_1 \neq \vec{e}'_2$, $\restrict{\m(\vec{i})}{E}$ can only have one of them as a prefix since neither is a prefix of the other.  Thus, since $S$ is countable, this implies that  
\[ \Pr[\restrict{\m(\vec{i})}{E} \supseqeq S] = \sum_{\vec{e}' \in S'} \Pr[\restrict{\m(\vec{i})}{E} \supseqeq \vec{e}'] \]

Thus,
\begin{align*}
\Pr[\restrict{\m(\vec{i}_1)}{E} \supseqeq S] 
&= \Pr[\restrict{\m(\vec{i}_1)}{E} \supseqeq S']\\ 
&= \sum_{\vec{e}' \in S'} \Pr[\restrict{\m(\vec{i}_1)}{E} \supseqeq \vec{e}']\\
&\leq \sum_{\vec{e}' \in S'} \e{\epsilon} * \Pr[\restrict{\m(\vec{i}_2)}{E} \supseqeq \vec{e}']\\
&= \e{\epsilon} \sum_{\vec{e}' \in S'}  \Pr[\restrict{\m(\vec{i}_2)}{E} \supseqeq \vec{e}']\\
&= \e{\epsilon} \Pr[\restrict{\m(\vec{i}_2)}{E} \supseqeq S']\\
&= \e{\epsilon} \Pr[\restrict{\m(\vec{i}_2)}{E} \supseqeq S]
\end{align*}
\end{proof}

The next theorem is analogous to previous results about differential privacy for functions: it proves that the privacy leakage bound for a system whose inputs differ on at most $n$ data points is $n * \epsilon$ where $\epsilon$ is the leakage bound for the system if its inputs differ on one data point (see e.g., corollary of~\cite{mt07mechanism}).
\begin{proposition}
\label{prp:strong-n-composition}
If a system $\m$ has $\epsilon$-differential noninterference, then for all input sequences $\vec{i}_1$ and $\vec{i}_2$ such that $\diff{\vec{i}_1}{\vec{i}_2} \leq n$ and for all $S \subseteq E^*$,
\[ \Pr[\restrict{\model{\M}(\vec{i}_1)}{E} \supseqeq S] \leq \e{n * \epsilon} * \Pr[\restrict{\model{\M}(\vec{i}_2)}{E} \supseqeq S]\]
\end{proposition}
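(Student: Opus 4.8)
The plan is to prove the statement by induction on $n$, mimicking the classical composition argument for differentially private functions~\cite{mt07mechanism}: a ``difference of $n$ data points'' is peeled apart into one data point plus a ``difference of $n-1$'' by interposing an intermediate input sequence, and the two bounds are then chained. Concretely, I would establish, by induction on $n \geq 0$, that for all $\vec{i}_1, \vec{i}_2 \in I^*$ with $\diff{\vec{i}_1}{\vec{i}_2} \leq n$ and all $S \subseteq E^*$,
\[ \Pr[\restrict{\model{\M}(\vec{i}_1)}{E} \supseqeq S] \leq \e{n\epsilon}\,\Pr[\restrict{\model{\M}(\vec{i}_2)}{E} \supseqeq S]. \]
The base case $n = 0$ forces $\vec{i}_1 = \vec{i}_2$, so the inequality holds with equality; note $\e{n\epsilon} \geq 1$ for every $n$ since $\epsilon \geq 0$, so the case $\vec{i}_1 = \vec{i}_2$ is always fine. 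The original Proposition is then just the instance obtained by not distinguishing $\diff{\vec{i}_1}{\vec{i}_2} \leq n$ from $\diff{\vec{i}_1}{\vec{i}_2} = n$.

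For the inductive step, assume the claim for $n-1$ and take $\vec{i}_1, \vec{i}_2$ with $\diff{\vec{i}_1}{\vec{i}_2} \leq n$. If $\vec{i}_1 = \vec{i}_2$ we are done as above, so assume $\diff{\vec{i}_1}{\vec{i}_2} \geq 1$. By the recursive definition of $\diff{\cdot}{\cdot}$ from Appendix~\ref{app:mut-diff-proofs}, there are $\vec{i}, \vec{i}'_1, \vec{i}'_2 \in I^*$ and $d \in D$ with, say, $\vec{i}_1 = \vec{i}\cons d\cons\vec{i}'_1$, $\vec{i}_2 = \vec{i}\cons\vec{i}'_2$, and $\diff{\vec{i}'_1}{\vec{i}'_2} = \diff{\vec{i}_1}{\vec{i}_2} - 1 \leq n-1$ (the mirror case where $\vec{i}_2$ carries the extra data point is handled identically). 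I would then interpose $\vec{i}_m = \vec{i}\cons\vec{i}'_1$. Since $\vec{i}_1$ is obtained from $\vec{i}_m$ by inserting the single data point $d$, the sequences $\vec{i}_1$ and $\vec{i}_m$ differ by one data point; and prepending the common prefix $\vec{i}$ to both $\vec{i}'_1$ and $\vec{i}'_2$ does not increase their difference, so $\diff{\vec{i}_m}{\vec{i}_2} \leq \diff{\vec{i}'_1}{\vec{i}'_2} \leq n-1$. Applying Definition~\ref{def:strong-diff-priv} (the hypothesis that $\M$ has $\epsilon$-differential noninterference) to the pair $\vec{i}_1, \vec{i}_m$ and the inductive hypothesis to the pair $\vec{i}_m, \vec{i}_2$, and using that the probabilities and $\e{\epsilon}$ are non-negative, yields
\[ \Pr[\restrict{\model{\M}(\vec{i}_1)}{E} \supseqeq S] \leq \e{\epsilon}\,\Pr[\restrict{\model{\M}(\vec{i}_m)}{E} \supseqeq S] \leq \e{\epsilon}\,\e{(n-1)\epsilon}\,\Pr[\restrict{\model{\M}(\vec{i}_2)}{E} \supseqeq S], \]
and $\e{\epsilon}\,\e{(n-1)\epsilon} = \e{n\epsilon}$ closes the step. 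Note that because Definition~\ref{def:strong-diff-priv} is already stated over arbitrary $S \subseteq E^*$, no appeal to Proposition~\ref{prp:strong-by-trace} is needed; the argument runs uniformly for sets.

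The only ingredient beyond the induction itself is the monotonicity fact $\diff{\vec{i}\cons\vec{u}}{\vec{i}\cons\vec{v}} \leq \diff{\vec{u}}{\vec{v}}$, which I would prove by a routine induction on the length of $\vec{i}$ straight from the recursive definition of $\diff{\cdot}{\cdot}$: any witnessing list of data-point insertions equalizing $\vec{u}$ and $\vec{v}$ also equalizes $\vec{i}\cons\vec{u}$ and $\vec{i}\cons\vec{v}$ once the insertion positions are shifted past $\vec{i}$, so no extra insertions are required. I expect this small lemma --- rather than the main induction, which is pure bookkeeping --- to be the one place demanding care, and in fact it is essentially trivial here because the decomposition guaranteed by the definition of $\diff{\vec{i}_1}{\vec{i}_2}$ already exhibits the shared prefix $\vec{i}$ that we factor out. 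So the ``hard part'' is mostly a matter of stating the sequence-differencing bookkeeping cleanly.
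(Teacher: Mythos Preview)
Your proposal is correct and follows essentially the same approach as the paper's own proof: induction on $n$, decomposing via the recursive definition of $\diff{\cdot}{\cdot}$, interposing the intermediate sequence $\vec{i}\cons\vec{i}'_1$, and chaining the $\epsilon$-bound with the $(n{-}1)\epsilon$-bound. If anything, you are slightly more careful than the paper, which asserts $\diff{\vec{i}_3}{\vec{i}_2} = n$ without explicitly invoking the prefix-monotonicity fact you isolate, and which also treats only one of the two symmetric cases.
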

\begin{proof}
Proof by induction over $n$.

Base Case: $n = 0$.  In this case, $\vec{i}_1 = \vec{i}_2$ and, thus, $\Pr[\restrict{\m(\vec{i}_1)}{E} \supseqeq S] = \Pr[\restrict{\m(\vec{i}_2)}{E} \supseqeq S]$ as needed with $\e{0} = 1$.

Inductive Case: Assume for all $n' \leq n$; prove for $n+1$.  Since $\diff{\vec{i}_1}{\vec{i}_2} = n+1$, 
there must exist $\vec{i}, \vec{i}'_1, \vec{i}'_2 \in I^*$ and $d_1\in D$ such that $\vec{i}_1 = \vec{i}\cons d_1\cons\vec{i}'_1$, $\vec{i}_2 = \vec{i} \cons\vec{i}'_2$, and $\diff{\vec{i}'_1}{\vec{i}'_2} = n$.
Let $\vec{i}_3 = \vec{i}\cons i_2\cons \vec{i}'_1$.  $\diff{\vec{i}_1}{\vec{i}_3} = 1$ and $\diff{\vec{i}_3}{\vec{i}_2} = n$.  Thus, by the inductive hypothesis,
 \[ \Pr[\restrict{\m(\vec{i}_1)}{E} \supseqeq S] \leq \e{1 * \epsilon} * \Pr[\restrict{\m(\vec{i}_3)}{E} \supseqeq S]\]
and
\[ \Pr[\restrict{\m(\vec{i}_3)}{E} \supseqeq S] \leq \e{n * \epsilon} * \Pr[\restrict{\m(\vec{i}_2)}{E} \supseqeq S]\]
Thus,
\begin{align*}
\Pr[\restrict{\m(\vec{i}_1)}{E} \supseqeq S] &\leq \e{1 * \epsilon} * \left( \e{n * \epsilon} * \Pr[\restrict{\m(\vec{i}_2)}{E} \supseqeq S] \right)\\
&= \e{n+1 * \epsilon} * \Pr[\restrict{\m(\vec{i}_2)}{E} \supseqeq S]
\end{align*}
as needed.
\end{proof}

\appsection{Compositional Reasoning}
\label{app:hidden-transition-replacement}

To prove Theorem~\ref{thm:hidden-transition-replacement}, we use a definition and a proposition that helps us to track when the transition under $h^\dagger$ is being simulated by many transitions of $M_2$.  

Let $L_1 = \langle S_1, Q_1 \uplus D_1, R_1 \uplus H_1, \transone\rangle$ and $L_2 = \langle S_2, \emptyset, H_2, \transtwo\rangle$ and $M_2 = \langle L_2, s^0_2\rangle$.  Let $A_3 = Q_1 \uplus D_1 \uplus R_1 \uplus H_1 \uplus H_2 \uplus \{h^{\ddagger}\}$.  Let $h^{\dagger}$ be a distinguished internal action in $H_1$.  For simplicity, we assume that $h^\dagger$ only labels the one transition of $L_1$ that is implemented by $M_2$.  Let $h^\ddagger$ be a distinguished internal action not in $H_1$ or $H_2$.  

Let $\Psi(\vec{a})$ be a set of action sequences formed by replacing each action $h^{\dagger}$ in $\vec{a}$ with the internal action $h^{\ddagger}$ followed by any sequence $\vec{h}$ from $H_2^{+}$ and then $h^\ddagger$ again.
Formally, 
\begin{align*}
\Psi(h^\dagger\cons\vec{a}) &= h^\ddagger\cons H_2^{+} \cons h^\ddagger\cons \Psi(\vec{a})\\
\Psi(a\cons\vec{a}) &= a\cons \Psi(\vec{a}) &&\text{where $a \neq h^\dagger$}
\end{align*}
where $\cons$ is raised to work over sets in the standard way: for $X \subseteq A^*$ and $Y \subseteq A^*$, $X \cons Y = \set{\vec{a} \in A^*}{\exists \vec{a}_1 \in X, \exists \vec{a}_2 \in Y \st \vec{a} = \vec{a}_1 \cons\vec{a}_2}$ and $a \cons X = \{a\}\cons X$.

\begin{proposition} \label{prp:near-replacement} 
Let $M_1 = \langle L_1, s_0 \rangle$ and let $M_3= M_1[s^\dagger,M_2,\iota] = \langle L_3,s_0 \rangle$ where $M_2$ implements the transition of $h^\dagger$ under $\iota$. For all $\vec{a}$, $\vec{i}$, for all $s \in S$,
\[ \sum_{s' \in S_1} \langle L_1, s \rangle (\vec{i})(\vec{a},s') = \sum_{\vec{a}' \in \Psi(a)}  \sum_{s' \in S_1} \langle L_3, s \rangle (\vec{i})(\vec{a}',s') \]
\end{proposition}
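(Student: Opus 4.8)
The plan is to prove Proposition~\ref{prp:near-replacement} by induction on the structure of the action sequence $\vec{a}$, with the key subroutine being a separate ``one-step'' lemma that handles exactly the replacement of a single $h^\dagger$-transition by the whole subroutine $M_2$. The left-hand side $\sum_{s' \in S_1} \langle L_1, s\rangle(\vec{i})(\vec{a},s')$ is the probability that $M_1$, started in $s$ with inputs $\vec{i}$, performs the action trace $\vec{a}$; since $\Psi$ only inserts hidden actions (and $H_2 \cup \{h^\ddagger\}$ is disjoint from all the observable actions of $M_1$), every $\vec{a}' \in \Psi(\vec{a})$ has the same observable projection and the same $I$- and $H_1$-actions as $\vec{a}$, so the right-hand side is the total probability in $M_3$ of performing any trace that ``looks like'' $\vec{a}$ up to the $M_2$-internal expansion of each $h^\dagger$-step. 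I would first record the base case: when $\vec{a} = [\,]$, both sides equal $1$ if $s \in S_1$ (using $\langle L, s\rangle(\vec{i})([\,],s) = 1$) and we are done.

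First I would dispose of the easy inductive cases. If $\vec{a} = a\cons\vec{a}_0$ with $a \neq h^\dagger$ and $a \in I$: if $\vec{i} = a\cons\vec{i}_0$ and $s \transone^a \mu$ then (since $s \neq s^\dagger$ is not forced here but the transition of $s$ on $a$ is unchanged in $M_3$ whenever $a \neq h^\ddagger$, and $s$ can enable $a$ only if $s$ is $H_1$-disabled or $s=s^\dagger$ with $a=h^\dagger$, which is excluded) we have $s \transthree^a \mu$ as well; expand both sides using the defining recursion of $\langle L, \cdot\rangle(\cdot)(\cdot,\cdot)$, push the sum over next states through, use $\Psi(a\cons\vec{a}_0) = a\cons\Psi(\vec{a}_0)$, and apply the induction hypothesis at each next state. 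The case $a \in H_1 \setminus \{h^\dagger\}$ is identical but without consuming an input. In all of these, the only subtlety is that $M_3$ has some transitions that $M_1$ does not ($s^\dagger \transthree^{h^\ddagger} \mathsf{Dirac}(s_2^0)$ and $\iota(s_1)\transthree^{h^\ddagger}\mathsf{Dirac}(s_1)$), but those are transitions under $h^\ddagger$, which never appears as the leading action of a trace in $\Psi(a\cons\vec{a}_0)$ when $a \neq h^\dagger$, so they contribute nothing and the recursions line up.

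The heart of the argument is the case $\vec{a} = h^\dagger\cons\vec{a}_0$. Here the unique state enabling $h^\dagger$ is $s^\dagger$, so the left side is $0$ unless $s = s^\dagger$, in which case it equals $\sum_{s'\in\supp(\mu^\dagger)} \mu^\dagger(s')\sum_{s''\in S_1}\langle L_1,s'\rangle(\vec{i})(\vec{a}_0,s'')$. On the right, $\Psi(h^\dagger\cons\vec{a}_0) = h^\ddagger\cons H_2^+\cons h^\ddagger\cons\Psi(\vec{a}_0)$, and in $M_3$ the only transition of $s^\dagger$ is $s^\dagger\transthree^{h^\ddagger}\mathsf{Dirac}(s_2^0)$, so the right side is $\sum_{\vec{h}\in H_2^+}\sum_{\vec{a}_0'\in\Psi(\vec{a}_0)}\sum_{s''\in S_1}\langle L_3,s_2^0\rangle(\vec{i})(\vec{h}\cons h^\ddagger\cons\vec{a}_0',s'')$; since $\vec{h}$ uses only $H_2$-actions, the states reached after $\vec{h}$ lie in $S_2$, and the only way to leave $S_2$ is via a $\iota(s_1)\transthree^{h^\ddagger}\mathsf{Dirac}(s_1)$ step, so I would split this sum at the first $h^\ddagger$: it becomes $\sum_{s_1\in\supp(\mu^\dagger)}\bigl(\sum_{\vec{h}\in H_2^+}\langle L_3,s_2^0\rangle([\,])(\vec{h},\iota(s_1))\bigr)\cdot\sum_{\vec{a}_0'\in\Psi(\vec{a}_0)}\sum_{s''\in S_1}\langle L_3,s_1\rangle(\vec{i})(\vec{a}_0',s'')$, using a ``multiply-apart'' step like Proposition~\ref{prp:mult-apart} together with the fact that inputs $\vec{i}$ are not consumed while inside $M_2$ (its input set is empty). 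The inner parenthesized quantity is exactly $\mu^\dagger(s_1)$ by the hypothesis that $M_2$ implements the transition of $s^\dagger$ under $\iota$, and since $M_3$ restricted to states in $S_1\setminus\{s^\dagger\}$ plus $s_1$ agrees with $M_1$ on the remaining trace, the induction hypothesis applied at $s_1$ turns $\sum_{\vec{a}_0'\in\Psi(\vec{a}_0)}\sum_{s''}\langle L_3,s_1\rangle(\vec{i})(\vec{a}_0',s'')$ into $\sum_{s''}\langle L_1,s_1\rangle(\vec{i})(\vec{a}_0,s'')$, yielding exactly the left side.

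The main obstacle I anticipate is making the bookkeeping in the $h^\dagger$ case fully rigorous: one must carefully justify factoring the $M_3$-trace probability at the entry-$h^\ddagger$, the $M_2$-interior, and the exit-$h^\ddagger$, which requires (i) knowing that no input is consumed inside $M_2$ (so $\vec{i}$ is untouched), (ii) knowing that the only transitions among $S_2$-states and back to $S_1$ are the ones added in the definition of $M_1[s^\dagger,M_2,\iota]$ (in particular that the $\iota$-image states are terminal in $M_2$, so the only outgoing transition there is the $h^\ddagger$ loop back into $S_1$), and (iii) matching the resulting geometric sum $\sum_{\vec{h}\in H_2^+}\langle L_3,s_2^0\rangle([\,])(\vec{h},\iota(s_1))$ against the implementation hypothesis $\mu^\dagger(s_1) = \sum_{\vec{h}\in H_2^+}M_2([\,])(\vec{h},\iota(s_1))$, noting that the transitions of $M_2$ are preserved verbatim in $M_3$. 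Everything else is a routine unfolding of the recursive definition of $\langle L,\cdot\rangle(\cdot)(\cdot,\cdot)$ together with rearrangement of (absolutely convergent, nonnegative) sums, so I would state the factorization as an auxiliary proposition analogous to Proposition~\ref{prp:mult-apart} and then the main induction goes through cleanly.
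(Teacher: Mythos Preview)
Your plan is correct and follows essentially the same route as the paper's proof: induction on the structure of $\vec{a}$, with the non-$h^\dagger$ leading-action cases handled by observing that the relevant transitions coincide in $L_1$ and $L_3$, and the $h^\dagger$ case handled by unfolding the $h^\ddagger$-entry, factoring the $M_2$-interior via a Proposition~\ref{prp:mult-apart}-style step, collapsing $\sum_{\vec h\in H_2^+}\langle L_3,s_2^0\rangle([\,])(\vec h,\iota(s_1))$ to $\mu^\dagger(s_1)$ using the implementation hypothesis, and then invoking the inductive hypothesis at each $s_1\in\supp(\mu^\dagger)$. Two small remarks: when $a\neq h^\dagger$ and $s\transone^a\mu$, output determinism actually \emph{does} force $s\neq s^\dagger$ (since $s^\dagger$ enables only $h^\dagger$), which is exactly what makes the $\transthree$ clause apply; and your final sentence about ``$M_3$ restricted to $S_1\setminus\{s^\dagger\}$ plus $s_1$ agrees with $M_1$'' is not quite the right justification---you simply apply the inductive hypothesis at $s_1\in S_1$, which already handles any further occurrences of $h^\dagger$ in $\vec{a}_0$.
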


\begin{proof}

We use induction over the structure of $\vec{a}$. 

Case: $\vec{a} = [\,]$.  Since $\Psi([\,]) = \{[\,]\}$, $ \langle L_1, s \rangle(\vec{i})([\,],s') = 
\langle L_3, s \rangle(\vec{i})([\,],s') = 1$. 

Case: $\vec{a} = i\cons\vec{a}'$. 
\begin{itemize}
\item Subcase: there does not exist $\vec{i}'$ such that $\vec{i} = i\cons\vec{i}'$ and $s \transone^i \mu$.
In this subcase, $\langle L_1, s \rangle(\vec{i})(\vec{a},s') = 0$. By definition of $\Psi$ we have that $\forall \vec{a}'' \in \Psi(\vec{a}) =  \Psi(i \cons \vec{a}')$ and $\vec{a}''$ is of the form $i \cons \vec{a}'''$ for some $\vec{a}'''$. Since, by definition of $\transthree$, $M_3$ has an input transition from a state only if it has an input transition from that same state in $M_1$ there does not exist $s \transthree^i \mu$. It follows that for all $\forall \vec{a}'' \in \Psi(\vec{a})$, $\langle L_3, s \rangle(\vec{i})(\vec{a}'',s') = 0$, as needed.

\item Subcase: there does exist $\vec{i}'$ such that $\vec{i} = i\cons\vec{i}'$ and $s \transone^i \mu$.
In this subcase,
\begin{align*}
\langle L_1, s \rangle(\vec{i})(\vec{a},s') &= \sum_{s'' \in S_1} \mu_1(s'') \langle L_1, s''\rangle(\vec{i}')(\vec{a}',s') &&\text{where } s \transone^i \mu_1
\end{align*}
and  
\begin{align*}
\langle L_3, s \rangle(\vec{i})(\vec{a},s') &= \sum_{s'' \in S_1 \uplus S_2} \mu_2(s'') \langle L_3, s''\rangle(\vec{i}')(\vec{a}',s') &&\text{where } s \transthree^i \mu_2.
\end{align*}

Since each $\vec{a}'' \in \Psi(\vec{a})$ is of the form $i \cons \vec{a}'''$ for some $\vec{a}'''$, we need to show that 

\[\sum_{s' \in S_1} \sum_{s'' \in S_1} \mu_1(s'') \langle L_1, s''\rangle(\vec{i}')(\vec{a}',s') = \sum_{i \cons \vec{a}''' \in \Psi(\vec{a})}  \sum_{s' \in S_1} \sum_{s'' \in S_1 \uplus S_2} \mu_2(s'') \langle L_3, s''\rangle(\vec{i}')(\vec{a}''',s').\]
We reason as follows:
\begin{align}
\sum_{i \cons \vec{a}''' \in \Psi(\vec{a})}  \sum_{s' \in S_1} \sum_{s'' \in S_1 \uplus S_2} \mu_2(s'') & \langle L_3, s''\rangle(\vec{i}')(\vec{a}''',s') \\
& = \sum_{s'' \in S_1 \uplus S_2} \mu_2(s'') \sum_{i \cons \vec{a}''' \in \Psi(\vec{a})} \sum_{s' \in S_1} \langle L_3, s''\rangle(\vec{i}')(\vec{a}''',s') \label{ln:distribute}\\
& = \sum_{s'' \in S_1} \mu_2(s'') \sum_{i\cons \vec{a}''' \in \Psi(\vec{a})}  \sum_{s' \in S_1
} \langle L_3, s''\rangle(\vec{i}')(\vec{a}''',s') \label{ln:elims2}\\
& = \sum_{s'' \in S_1} \mu_2(s'') \sum_{\vec{a}''' \in \Psi(\vec{a}')}  \sum_{s' \in S_1} \langle L_3, s''\rangle(\vec{i}')(\vec{a}''',s') \label{ln:formofseq}\\
& = \sum_{s'' \in S_1} \mu_1(s'') \sum_{\vec{a}''' \in \Psi(\vec{a}')}  \sum_{s' \in S_1} \langle L_3, s''\rangle(\vec{i}')(\vec{a}''',s') \label{ln:eqdist}\\
& = \sum_{s' \in S_1} \sum_{s'' \in S_1} \mu_1(s'') \langle L_1, s''\rangle(\vec{i}')(\vec{a}',s') \label{ln:goal}
\end{align}

Line~\ref{ln:distribute} follows from reordering summations and using distributivity of multiplication over summation. Line~\ref{ln:elims2} follows from the fact that  any $s'' \in \supp(\mu_2)$ can not be in $S_2$ since any $s'' \in \supp(\mu_2)$ is reachable via an input action. Line~\ref{ln:formofseq}
follows from, the fact that each $\vec{a}'' \in \Psi(\vec{a})$ is of the form $i\cons \vec{a}'''$ and $\vec{a}''' \in \Psi(\vec{a}')$. Line~\ref{ln:eqdist} follows from definition of $\transthree$.
We conclude in Line~\ref{ln:goal} using the inductive hypothesis $\sum_{s' \in S_1} \langle L_1, s''\rangle(\vec{i}')(\vec{a}',s') = \sum_{\vec{a}''' \in \Psi(\vec{a}')}  \sum_{s' \in S_1} \langle L_3, s''\rangle(\vec{i}')(\vec{a}''',s')$.
\end{itemize}

Case: $\vec{a} = o\cons\vec{a}'$.
\begin{itemize}

\item Subcase: $o \neq h^{\dagger}$. 
\begin{itemize}
\item Subsubcase: there does not exist $\vec{a}'$ such that $\vec{a} = o\cons\vec{a}'$ and $s \transone^a \mu$. In this subcase, $\langle L_1, s \rangle(\vec{i})(\vec{a},s') = 0$. By case definition we know $a = o \cons a'$ where $o \in R_1 \uplus H_1$ and $o \neq h^{\dagger}$. Then, $\forall \vec{a}'' \in \Psi(\vec{a}) =  \Psi(o \cons \vec{a}')$, $\vec{a}''$ is of the form $o \cons \vec{a}'''$ for some $\vec{a}'''$.
Since, by definition of $\transthree$, $M_3$ has an output transition on an action from $R_1 \uplus H_1 \setminus \{h^{\dagger}\}$ only if it has the same transition $M_1$. This gives 
$\langle L_3, s \rangle(\vec{i})(\vec{a}'',s') = 0$ for all $\forall \vec{a}'' \in \Psi(\vec{a})$, as needed. 
\item Subsubcase: there does exist $\vec{a}'$ such that $\vec{a} = o \cons\vec{a}'$ and $s \transone^o \mu$. In this subcase, the proof follows a line of reasoning analogous to the case where 
$\vec{a} = i\cons\vec{a}'$. We show that

\[\sum_{s' \in S_1} \sum_{s'' \in S_1} \mu_1(s'') \langle L_1, s''\rangle(\vec{i})(\vec{a}',s') = \sum_{o \cons \vec{a}''' \in \Psi(\vec{a})}  \sum_{s' \in S_1} \sum_{s'' \in S_1 \uplus S_2} \mu_2(s'') \langle L_3, s''\rangle(\vec{i})(\vec{a}''',s').\] 

In the step where we argue that $s'' \notin S_2$, we use the fact that all states in $S_2$ can only result from a transition on $h^{\ddagger}$ or an action from $H_2$, and that $o \in R_1 \uplus H_1 \setminus \{h^{\dagger}\}$, which is disjoint from $\{h^{\ddagger}\} \uplus H_2$. 
\end{itemize}

\item Subcase: $o = h^{\dagger}$. 
\begin{itemize}
\item Subsubcase: there does not exist $\mu$ such that $s \transone^{h^{\dagger}} \mu$,  then $\langle L_1, s\rangle(\vec{i})(\vec{a},s') = 0$ and $s \neq s^{\dagger}$. In this case all $\vec{a}'' \in \Psi(\vec{a})$ start with $h^{\ddagger} \vec{h} h^{\ddagger}$ for some vector $\vec{h}$ of actions from ${H_2}^{+}$. Since $s \neq s^{\dagger}$, according to the definition of $\transthree$ the only way for $s$ to have a transition on $h^{\ddagger}$ is if $s = \iota(s_1)$ for some $s_1 \in \supp(\mu^{\dagger})$,  where $s^{\dagger} \transone^{h^{\dagger}} \mu^{\dagger}$ and that $h^{\ddagger}$ transition is $\iota(s_1) \transthree^{h^{\ddagger}} \mathsf{Dirac}(s_1)$. By definition of $\transthree$, $s_1$ can only transition on actions present in $M_1$, which means it cannot transition on any actions from $H_2$. This makes it impossible for $h^{\ddagger}$ to be followed by a sequence $\vec{h}$ from actions $H_2$, and gives $\langle L_3, s \rangle(\vec{i})(\vec{a}'',s') = 0$ for all $\vec{a}'' \in \Psi(\vec{a})$, as needed.

\item Subsubcase: there does exist $\mu$ such that $s \transone^{h^{\dagger}} \mu$, then by the assumption that $s^{\dagger}$ is the unique state enabling $h^{\dagger}$ we know that $s = s^{\dagger}$ and by transition-determinism $s^{\dagger} \trans^{h^{\dagger}} \mu^{\dagger}$. 
We need to show that

\[\sum_{s' \in S_1} \sum_{s'' \in S_1} \mu^{\dagger}(s'') \langle L_1, s''\rangle(\vec{i})(\vec{a}',s') = \sum_{\vec{a}'' \in \Psi(\vec{a})}  \sum_{s' \in S_1} \langle L_3, s^{\dagger} \rangle(\vec{i})(\vec{a}'',s').\]

We reason as follows:
\begin{align}
\sum_{\vec{a}'' \in \Psi(\vec{a})}  & \sum_{s' \in S_1} \langle L_3, s^{\dagger} \rangle(\vec{i})(\vec{a}'',s')  \label{ln:first}\\
& =  \sum_{h^{\ddagger} \cons \vec{h} \cons h^{\ddagger}\cons \vec{a}''' \in \Psi(\vec{a})}  \sum_{s' \in S_1 } \langle L_3, s^{\dagger} \rangle(\vec{i})(h^{\ddagger} \cons \vec{h} \cons h^{\ddagger}\cons \vec{a}''',s') \\
&  = \sum_{h^{\ddagger}\cons \vec{h} \cons h^{\ddagger} \in \vec{a'''} \in \Psi(\vec{a})}  \sum_{s' \in S_1 } \langle L_3, s^0_2 \rangle(\vec{i})(\vec{h}\cons h^\ddagger\cons\vec{a}''',s') \\
&  = \sum_{h^{\ddagger}\cons \vec{h} \cons h^{\ddagger} \vec{a}''' \in \Psi(\vec{a})}  \sum_{s' \in S_1} \sum_{s''' \in S_1 \uplus S_2} \langle L_3, s^0_2 \rangle([\,])(\vec{h},s''')*\langle L_3, s'''\rangle(\vec{i})(h^{\ddagger}\cons \vec{a}''',s') \label{ln:split}\\
& = \sum_{h^{\ddagger}\cons \vec{h} \cons h^{\ddagger} \vec{a}''' \in \Psi(\vec{a})}  \sum_{s' \in S_1} \sum_{s''' \in \iota(S^{\dagger})} \langle L_3, s^0_2 \rangle([\,])(\vec{h},s''')*\langle L_3, s'''\rangle(\vec{i})(h^{\ddagger}\cons \vec{a}''',s')  \label{ln:inject}\\
&  = \sum_{h^{\ddagger}\cons \vec{h} \cons h^{\ddagger} \vec{a}''' \in \Psi(\vec{a})}  \sum_{s' \in S_1} \sum_{s''' \in \iota(S^{\dagger})} \langle L_3, s^0_2 \rangle([\,])(\vec{h},s''')* \langle L_3, s^4 \rangle(\vec{i})(\vec{a}''',s') \\
& \; \; \; \; \; \text{where} \; s''' \transthree^{h^{\ddagger}} \mathsf{Dirac}(s^4) \label{ln:dirac}\\
& = \sum_{s''' \in \iota(S^{\dagger})} (\sum_{\vec{h} \in (H_2)^+} \langle L_3, s^{\dagger} \rangle([\,]) (\vec{h},s''')) * \sum_{s' \in S_1} \sum_{\vec{a}''' \in \Psi(\vec{a}')} \langle L_3, s^4\rangle(\vec{i})(\vec{a}''',s') \\
& \; \; \; \; \; \text{where} \;  s''' \transthree^{h^{\ddagger}} \mathsf{Dirac}(s^4) \label{ln:almost} \\ 
&  = \sum_{s''' \in \iota(S^{\dagger})} \mu^{\dagger}(s''')  \sum_{s' \in S_1} \sum_{\vec{a}''' \in \Psi(\vec{a}')} \langle L_3, s^4 \rangle(\vec{i})(\vec{a}''',s') \label{ln:done}
\end{align}

For Lines~\ref{ln:first} to~\ref{ln:split} we observe that 
by definition of $\Psi$ each $\vec{a}'' \in \Psi(\vec{a})$ is of the form $h^{\ddagger}\cons\vec{h}\cons h^{\ddagger}\cons\vec{a}''' $ where $\vec{h}\neq [\,]$,
by definition of $\transthree$, $s^{\dagger} \transthree^{h^{\ddagger}} \mathsf{Dirac}(s^0_2)$ and use Proposition~\ref{prp:mult-apart}. For Line~\ref{ln:inject} let $\iota(S^{\dagger}) = \{s \in L_2 \, | \, s = \iota(s') \, \text{for some} \, s' \in \supp(\mu^{\dagger}) \}$. The equation follows 
since $\langle L_3, s^0_2 \rangle([\,])(\vec{h},s''') =0$ for all $s''' \in S_1$, 
by the assumption that for all $s \in \supp(\mu^\dagger)$, $\mu^\dagger(s) = \sum_{\vec{h} \in (H_2)^+} M_2([\,])(\vec{h},\iota(s))$ and that $\mu$ is a probability distribution. By definition of $\transthree$ we know that $\iota(s_1) \transthree^{h^{\ddagger}} \mathsf{Dirac}(s_1)$ for all $s_1 \in \supp(\mu^\dagger)$ and Line~\ref{ln:dirac} follows. 
Note that by definition of $\transthree$ and $\iota(S^{\dagger})$ where since $\iota$ is an injection we know that
for each $s''' \in \iota(S^{\dagger})$ there is a unique state $s^4 \in \supp(\mu^{\dagger})$ such that $s''' = \iota(s^4)$. 
By using distributivity of multiplication over addition and the fact that $\vec{a'''} \in \Psi(\vec{a}')$ we get Line~\ref{ln:almost}. Line~\ref{ln:done} follows from the assumption that for all 
$s \in \supp(\mu^\dagger)$, 
\[ \mu^\dagger(s) = \sum_{\vec{h} \in (H')^+} M'([\,])(\vec{h},\iota(s)). \]
To conclude this case we recall that for each $s''' \in \iota(S^{\dagger})$ there is a unique state $s^4 \in \supp(\mu^{\dagger})$ such that $s''' = \iota(s^4)$ and use the inductive hypothesis.
\end{itemize}
\end{itemize}
\end{proof}

\paragraph{Proof of Theorem~\ref{thm:hidden-transition-replacement}}
Let $M_1 = \langle L_1, s_0 \rangle$ and let $M_3= M_1[s^\dagger,M_2,\iota] = \langle L_3,s_0 \rangle$. We show that for all $\vec{i}$ in $I^*$, and $\vec{e}$ in $E^*$, $s$ in $S_1$, $\Pr[\, \restrict{\langle L_1,s \rangle(\vec{i})}{E} \supseqeq \vec{e}\,] = \Pr[\, \restrict{\langle L_3, s\rangle(\vec{i})}{E} \supseqeq \vec{e}\,]$.
By expanding the definitions of $\Pr[\, \restrict{\langle L_1,s \rangle(\vec{i})}{E} \supseqeq \vec{e}\,]$ and $\Pr[\, \restrict{\langle L_3,s\rangle(\vec{i})}{E} \supseqeq \vec{e}\,]$ we get
\begin{align*}
\Pr[\, \restrict{\langle L_1,s \rangle (\vec{i})}{E} \supseqeq \vec{e}\,] & = \sum_{\vec{a} \in \gamma_1(\vec{e})} \Pr[\, \langle L_1,s\rangle (\vec{i}) \supseqeq \vec{e}] \\
&= \sum_{\vec{a} \in \gamma_1(\vec{e})} \sum_{s' \in S_1} \langle L_1, s \rangle (\vec{i})(\vec{a},s')
\end{align*}
and
\begin{align*}
\Pr[\, \restrict{\langle L_3,s \rangle (\vec{i})}{E} \supseqeq \vec{e}\,] & = \sum_{\vec{a} \in \gamma_3(\vec{e})} \Pr[\,  \langle L_3,s\rangle (\vec{i}) \supseqeq \vec{e}] \\
&=  \sum_{\vec{a} \in \gamma_3(\vec{e})} \sum_{s' \in S_1 \uplus S_2} \langle L_3,s \rangle (\vec{i})(\vec{a},s')
\end{align*} 
where $\gamma_1$ and $\gamma_3$ are sets of sequences of actions, respectively, of $M_1$ and $M_3$ defined as follows: $\gamma_1(\vec{e}) = \set{\vec{a} \in A_1^*}{\restrict{\vec{a}}{E_1} = \vec{e} \land \last(\vec{a}) = \last(\vec{e})}$ with the special case that $\gamma_1([\,]) = \{[\,]\}$, and $\gamma_3(\vec{e}) = \set{\vec{a} \in A_3^*}{\restrict{\vec{a}}{E_3} = \vec{e} \land \last(\vec{a}) = \last(\vec{e})}$ with the special case that $\gamma_3([\,]) = \{[\,]\}$ (as justified at the end of Appendix~\ref{app:autom-details}).  Note that we use $A_i$ for the set of all actions of $M_i$ and $E_i$ for the set of observable actions of $M_i$. 

Now we show that for all $s \in S_1$, $\vec{i}$, $\vec{e}$,  
\begin{align}
\sum_{\vec{a} \in \gamma_{1}(\vec{e})} \sum_{s' \in S_1} \langle L_1, s \rangle (\vec{i})(\vec{a},s') 
&= \sum_{\vec{a} \in \gamma_{1}(\vec{e})} \sum_{\vec{a}' \in \Psi(a)}  \sum_{s' \in S_1} \langle L_3,s \rangle (\vec{i})(\vec{a'},s') \label{ln:use-near-replacement}\\
&= \sum_{\vec{a} \in \gamma_{1}(\vec{e})} \sum_{\vec{a}' \in \Psi(a)}  \sum_{s' \in S_1 \uplus S_2} \langle L_3,s \rangle (\vec{i})(\vec{a}',s') \label{ln:substitute} \\ 
& = \sum_{\vec{a} \in \Phi(\vec{e})} \sum_{s' \in S_1 \uplus S_2}  \langle L_3,s \rangle (\vec{i})(\vec{a},s') \label{ln:exclusive-two} \\
& = \sum_{\vec{a} \in \gamma_3 
(\vec{e})} \sum_{s' \in S_1 \uplus S_2} \langle L_3,s \rangle (\vec{i})(\vec{a},s') \label{ln:similarity-two}
\end{align}
where $\Phi(\vec{e}) =\bigcup_{\vec{a} \in \gamma_1(\vec{e})} \Psi(\vec{a})$, and $\Psi$ is as defined at the top of this section.

Line~\ref{ln:use-near-replacement} follows from Proposition~\ref{prp:near-replacement}.

For Line~\ref{ln:substitute}, we argue as follows: Since $M_2$ has no external actions and all 
transitions of $M_3$ on external actions end in a state in $S_1 \setminus S_2$,
for those states $s' \in S_2$, $s'$ is reachable via a hidden action only.  Thus, for any $\vec{i}$, any $\vec{a}  \in \gamma_{3}(\vec{e})$, $\langle L_3, s \rangle (\vec{i})(\vec{a},s') = 0$ since 
$\vec{a}$ ends in an observable action from $E$ by definition of $\gamma_3$.

Line~\ref{ln:exclusive-two} follows from the definition of $\Phi$ and the fact that for any pair of sequences $\vec{a_1}, \vec{a_2}$ such that $\vec{a_1} \neq \vec{a_2}$, $\Psi(\vec{a_1}) \cap \Psi(\vec{a_2}) = \emptyset$.

For Line~\ref{ln:similarity-two} we observe that any sequence $\vec{a} \in \gamma_3(\vec{e}) \setminus \Phi(\vec{e})$ must have an occurrence of the action $h^{\ddagger}$ that is neither immediately preceded by a subsequence of the form $h^{\ddagger}\cons \vec{h}$ or immediately followed by a subsequence of the form $\vec{h}\cons h^{\ddagger}$. Then, by definition of   $\transthree$, $\langle L_3,s \rangle (\vec{i})(\vec{a},s')=0$ for all sequences $\vec{a} \in \gamma_3(\vec{e}) \setminus \Phi(\vec{e})$, giving the needed equation.

\appsection{Proof of Soundness of Unwinding}
\label{app:unwinding-soundness}

\appsubsection{A Helpful Proposition}

\begin{proposition}\label{prp:nuw-help}
If $\beta$ is a bijection from $\supp(\nu_1)$ to $\supp(\nu_2)$ and for all $x'_1\in \supp(\nu_1)$, $|\ln \nu_1(x'_1) - \ln \nu_2(\beta(x'_1))| \leq \delta$, then
\begin{multline*} 
\sum_{x'_1 \in \supp(\nu_1)} \nu_1(x'_1) \e{\epsilon'-\delta} \Pr[\, \restrict{\model{\langle L, \beta(x'_1) \rangle}(\vec{i})}{E} {\supseqeq} \vec{e}'\,]\\ \leq \e{\epsilon'} \sum_{x'_2 \in S_\bot} \nu_2(x'_2)  \Pr[\, \restrict{\model{\langle L, x'_2 \rangle}(\vec{i})}{E} {\supseqeq} \vec{e}'\,]
\end{multline*}
\end{proposition}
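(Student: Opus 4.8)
The plan is a direct termwise estimate followed by a change of summation index. First I would unpack the hypothesis on $\beta$. For each $x'_1 \in \supp(\nu_1)$ we have $\nu_1(x'_1) > 0$ and, since $\beta(x'_1) \in \supp(\nu_2)$, also $\nu_2(\beta(x'_1)) > 0$, so both logarithms are finite and the bound $|\ln \nu_1(x'_1) - \ln \nu_2(\beta(x'_1))| \leq \delta$ gives $\ln \nu_1(x'_1) \leq \ln \nu_2(\beta(x'_1)) + \delta$, hence $\nu_1(x'_1) \leq \e{\delta}\,\nu_2(\beta(x'_1))$. Multiplying by $\e{\epsilon'-\delta} \geq 0$ yields $\nu_1(x'_1)\,\e{\epsilon'-\delta} \leq \e{\epsilon'}\,\nu_2(\beta(x'_1))$ for every $x'_1 \in \supp(\nu_1)$.

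Next I would multiply both sides of this inequality by the nonnegative quantity $\Pr[\, \restrict{\model{\langle L, \beta(x'_1) \rangle}(\vec{i})}{E} {\supseqeq} \vec{e}'\,]$ (nonnegativity and finiteness of these probabilities is Proposition~\ref{thm:prefix-sem}), obtaining for each $x'_1 \in \supp(\nu_1)$
\[ \nu_1(x'_1)\,\e{\epsilon'-\delta}\,\Pr[\, \restrict{\model{\langle L, \beta(x'_1) \rangle}(\vec{i})}{E} {\supseqeq} \vec{e}'\,] \;\leq\; \e{\epsilon'}\,\nu_2(\beta(x'_1))\,\Pr[\, \restrict{\model{\langle L, \beta(x'_1) \rangle}(\vec{i})}{E} {\supseqeq} \vec{e}'\,]. \]
Summing over $x'_1 \in \supp(\nu_1)$ (the summands are nonnegative, so there is no convergence subtlety and the inequality is preserved) bounds the left-hand side of the claim by $\e{\epsilon'}\sum_{x'_1 \in \supp(\nu_1)} \nu_2(\beta(x'_1))\,\Pr[\, \restrict{\model{\langle L, \beta(x'_1) \rangle}(\vec{i})}{E} {\supseqeq} \vec{e}'\,]$.

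Finally I would rewrite this last sum. Since $\beta$ is a bijection from $\supp(\nu_1)$ onto $\supp(\nu_2)$, substituting $x'_2 = \beta(x'_1)$ gives $\sum_{x'_1 \in \supp(\nu_1)} \nu_2(\beta(x'_1))\,\Pr[\, \restrict{\model{\langle L, \beta(x'_1) \rangle}(\vec{i})}{E} {\supseqeq} \vec{e}'\,] = \sum_{x'_2 \in \supp(\nu_2)} \nu_2(x'_2)\,\Pr[\, \restrict{\model{\langle L, x'_2 \rangle}(\vec{i})}{E} {\supseqeq} \vec{e}'\,]$, and enlarging the index set from $\supp(\nu_2)$ to $S_\bot$ adds only terms with $\nu_2(x'_2) = 0$ (here I use that $\restrict{\model{\langle L, \bot\rangle}(\vec{i})}{E}$ is well defined so the $\bot$ term, if present, is also just $\nu_2(\bot)\cdot\Pr[\cdots]$). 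This produces exactly the right-hand side of the claim. As for difficulty: there is essentially no hard step — the only things to be careful about are that membership in the support guarantees the logarithms are finite, that all quantities being summed are nonnegative so the termwise inequality lifts to the sums, and that $\beta$ being a genuine bijection makes the re-indexing exact rather than merely an inequality.
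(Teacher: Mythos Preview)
Your proposal is correct and follows essentially the same approach as the paper: derive the termwise bound $\nu_1(x'_1)\,\e{\epsilon'-\delta} \leq \e{\epsilon'}\,\nu_2(\beta(x'_1))$ from the logarithmic hypothesis, multiply by the nonnegative probability, sum, reindex via the bijection, and extend the sum to $S_\bot$. The paper obtains the termwise bound by writing $\nu_1(x'_1)/\nu_2(\beta(x'_1)) = \e{\ln \nu_1(x'_1) - \ln \nu_2(\beta(x'_1))}$ rather than directly exponentiating as you do, but this is a cosmetic difference.
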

\begin{proof}
For all $x'_1$ in $\supp(\nu_1)$,
\begin{align}
\nu_1(x'_1) \e{\epsilon'-\delta}
&= \frac{\nu_1(x'_1)}{\nu_2(\beta(x'_1))}\nu_2(\beta(x'_1)) \e{\epsilon'-\delta}\\
&= \e{\ln(\nu_1(x'_1)) - \ln(\nu_2(\beta(x'_1)))} \nu_2(\beta(x'_1)) \e{\epsilon'-\delta}\label{ln:nuw-help-eln}\\
&= \e{\epsilon' - \delta + \ln(\nu_1(x'_1)) - \ln(\nu_2(\beta(x'_1)))} \nu_2(\beta(x'_1))\\
&\leq \e{\epsilon'} \nu_2(\beta(x'_1))\label{ln:nuw-help-eps}
\end{align}
Line~\ref{ln:nuw-help-eln} follows from the fact that for every $x'_1 \in \supp(\nu_1)$, 
\[ \nu_1(x'_1)/\nu_2(\beta(x'_1)) = \e{\ln(\nu_1(x'_1))}/\e{\ln(\nu_2(\beta(x'_1)))} \]
Line~\ref{ln:nuw-help-eps} follows since for every $x'_1 \in \supp(\nu_1)$, $|\ln \nu_1(x'_1) - \ln \nu_2(\beta(x'_1))| \leq \delta$ implies that $\ln \nu_1(x'_1) - \ln \nu_2(\beta(x'_1)) \leq \delta$.

Thus,
\begin{align}
\sum_{x'_1 \in \supp(\nu_1)} &\nu_1(x'_1) \e{\epsilon'-\delta} \Pr[\, \restrict{\model{\langle L, \beta(x'_1) \rangle}(\vec{i})}{E} {\supseqeq} \vec{e}'\,]\\
&\leq \sum_{x'_1 \in \supp(\nu_1)} \e{\epsilon'} \nu_2(\beta(x'_1)) \Pr[\, \restrict{\model{\langle L, \beta(x'_1) \rangle}(\vec{i})}{E} {\supseqeq} \vec{e}'\,]\\
&= \sum_{x'_2 \in \supp(\nu_2)}  \e{\epsilon'} \nu_2(x'_2)  \Pr[\, \restrict{\model{\langle L, x'_2 \rangle}(\vec{i})}{E} {\supseqeq} \vec{e}'\,]\label{ln:nuw-help-biject}\\
&= \e{\epsilon'} \sum_{x'_2 \in S_\bot} \nu_2(x'_2)  \Pr[\, \restrict{\model{\langle L, x'_2 \rangle}(\vec{i})}{E} {\supseqeq} \vec{e}'\,]
\end{align}
Line~\ref{ln:nuw-help-biject} follows from the fact that $\beta$ is a bijection from $\supp(\nu_1)$ to $\supp(\nu_2)$.
\end{proof}

\appsubsection{Proof of Lemma~\ref{thm:near-unwinding}}
 
Below we prove that 
$\Pr[\,\restrict{\model{\langle L, x_1\rangle}(\vec{i})}{E} {\supseqeq} \vec{e}\,]  \leq \e{\epsilon} \Pr[\,\restrict{\model{\langle L, x_2\rangle}(\vec{i})}{E} {\supseqeq} \vec{e}\,]$.
Proving the reverse that $\Pr[\,\restrict{\model{\langle L, x_2\rangle}(\vec{i})}{E} {\supseqeq} \vec{e}\,]  \leq \e{\epsilon} \Pr[\,\restrict{\model{\langle L, x_1\rangle}(\vec{i})}{E} {\supseqeq} \vec{e}\,]$ is much the same reversing the roles of $x_1$ and $x_2$ and using $\beta^{-1}$ in the place of $\beta$.

Proof by induction over the structures of $\vec{e}$ and $\vec{i}$.

Case: $\vec{e} = [\,]$.  In this case,
\[ \Pr[\, \restrict{\model{\langle L, x_1\rangle}(\vec{i})}{E} \supseqeq [\,]\,] = 1 \leq \e{\epsilon}* 1 = \e{\epsilon} \Pr[\, \restrict{\model{\langle L, x_2\rangle}(\vec{i})}{E} \supseqeq [\,]\,] \]

Case: $x_1$ has no outgoing transitions and $\vec{e} \neq [\,]$.  
In this case, $\Pr[\,\restrict{\model{\langle L, x_1\rangle}(\vec{i})}{E} {\supseqeq} \vec{e}\,] = 0 \leq \e{\epsilon}  \Pr[\,\restrict{\model{\langle L, x_2\rangle}(\vec{i})}{E} {\supseqeq} \vec{e}\,]$.

Henceforth, we only consider $x_1$ with at least one out going transition.  Since $x_1$ is related to $x_2$, we know it must also have at least one out going transition.  Thus, neither $x_1$ nor $x_2$ can be $\bot$.  Thus, we use $s_1$ for $x_1$ and $s_2$ for $x_2$ for the reminder of the proof.

Case: $\vec{e} = q\cons\vec{e}'$ and $\vec{i} = [\,]$ for some $q \in Q$.  
In this case, 
\[ \Pr[\,\restrict{\model{\langle L, s_1\rangle}(\vec{i})}{E} {\supseqeq} q\cons\vec{e}'\,] 
= \sum_{\vec{a} \in \gamma(q\cons\vec{e}')} \Pr[\,\model{\langle L, s_1\rangle}(\vec{i}) {\supseqeq} \vec{a}\,] 
= \sum_{\vec{a} \in \gamma(q\cons\vec{e}')}  \sum_{s'_1 \in S} \langle L, s_1\rangle(\vec{i})(\vec{a},s'_1) \]
Since $\vec{e} \neq [\,]$, $[\,]$ is not in $\gamma(\vec{e})$.  Furthermore, all $\vec{a}$ in $\gamma(\vec{e})$ must have $q$ come before any other action of $E$.  In particular, $\vec{a}$ must have either the form $q\cons\vec{a}'$, $d\cons\vec{a}'$, or $h\cons\vec{a}'$ for some $\vec{a} \in A^*$, $d \in D$, and $h \in H$.  Since $s_1$ is $H$-disabled by being in the unwinding relation, we know that for no $h \in H$ and $\mu$ does $s_1 \trans^h \mu$.  These factors combine to mean that $\langle L, s_1\rangle([\,])(\vec{a},s'_1) = 0$ for all $s'_1 \in S$ and $\vec{a} \in \gamma(\vec{e})$.  Thus, $\Pr[\,\restrict{\model{\langle L, s_1\rangle}([\,])}{E} {\supseqeq} r\cons\vec{e}'\,] = 0$.
The same reasoning concludes that $\Pr[\,\restrict{\model{\langle L, s_2\rangle}([\,])}{E} {\supseqeq} q\cons\vec{e}'\,] = 0$ making $\Pr[\,\restrict{\model{\langle L, s_1\rangle}([\,])}{E} {\supseqeq} q\cons\vec{e}'\,] \leq \e{\epsilon} \Pr[\,\restrict{\model{\langle L, s_2\rangle}([\,])}{E} {\supseqeq} q\cons\vec{e}'\,]$ since $0 \leq \e{\epsilon} 0$

Case: $\vec{e} = r\cons\vec{e}'$ and $\vec{i} = [\,]$ for some $r \in R$.  
We consider the following subcases:
\begin{itemize}

\item Subcase: $s_1 \trans^r \mu$ for some $\mu$.
Since $s_1$ and $s_2$ are related, there exists $\mu_2$ such that $s_2 \trans^{r} \mu_2$.  
This implies that there exists $\nu_1$ and $\nu_2$ such that $s_1 \wtrans^r \nu_1$ and $s_2 \wtrans^r \nu_2$.  Since $s_1 \mathrel{\mathcal{R}^{\epsilon}} s_2$, there exists $\delta$ in $[0,\epsilon]$ 
such that $\nu_1 \mathrel{\mathcal{L}(\mathcal{R}^{\epsilon-\delta},\delta)} \nu_2$.  This implies there exists a bijection $\beta$ from $\supp(\nu_1)$ to $\supp(\nu_2)$ such that for all $x_1 \in \supp(\nu_1)$, $x_1 \mathcal{R}^{\epsilon-\delta} \beta(x_2)$ and $|\ln \nu_1(x_1) - \ln \nu_2(\beta(x_1))| \leq \delta$.  
Thus, we may apply the inductive hypothesis to $\vec{i}$ and $\vec{e}'$ to get for all $x_1$ in $\supp(\nu_1)$, $\Pr[\,\restrict{\model{\langle L, x'_1\rangle}(\vec{i})}{E} {\supseqeq} \vec{e}'\,] \leq \e{\epsilon-\delta} \Pr[\,\restrict{\model{\langle L, \beta(x'_1)\rangle}(\vec{i})}{E} {\supseqeq} \vec{e}'\,]$.
Thus,
\begin{align}
\Pr[\,&\restrict{\model{\langle L, s_1\rangle}(\vec{i})}{E} {\supseqeq} r\cons\vec{e}'\,] \\
&= \sum_{x'_1 \in S_\bot} \nu_1(x'_1) \Pr[\,\restrict{\model{\langle L, x'_1\rangle}(\vec{i})}{E} {\supseqeq} \vec{e}'\,] \label{ln:nuw-r-nu-one}\\
&= \sum_{x'_1 \in \supp(\nu_1)} \nu_1(x'_1) \Pr[\,\restrict{\model{\langle L, x'_1\rangle}(\vec{i})}{E} {\supseqeq} \vec{e}'\,] \label{ln:nuw-r-nu-drop}\\
&\leq \sum_{x'_1 \in \supp(\nu_1)} \nu_1(x'_1) \e{\epsilon-\delta} \Pr[\, \restrict{\model{\langle L, \beta(x'_1) \rangle}(\vec{i})}{E} {\supseqeq} \vec{e}'\,] \label{ln:nuw-r-ih}\\
&\leq \e{\epsilon} \sum_{x'_2 \in S_\bot} \nu_2(x'_2)  \Pr[\, \restrict{\model{\langle L, x'_2 \rangle}(\vec{i})}{E} {\supseqeq} \vec{e}'\,]\label{ln:nuw-r-biject}\\
&= \e{\epsilon} \Pr[\,\restrict{\model{\langle L, s_2\rangle}(\vec{i})}{E} {\supseqeq} r\cons\vec{e}'\,] \label{ln:nuw-r-nu-two}
\end{align}
Lines~\ref{ln:nuw-r-nu-one} and~\ref{ln:nuw-r-nu-two} follow from Proposition~\ref{thm:nu}.
Line~\ref{ln:nuw-r-ih} follows from the inductive hypothesis 
Line~\ref{ln:nuw-r-biject} follows Proposition~\ref{prp:nuw-help}.

\item Subcase: $s_1 \trans^{r'} \mu$ for some output $r' \neq r$.  
Since $s_1$ and $s_2$ are related, there exists $\mu_2$ such that $s_2 \trans^{r'} \mu_2$.  
Furthermore, for no other action $a \neq r'$ does does $s_1 \trans^a \mu'$ or $s_2 \trans^a \mu'$ for any $\mu'$.
Recall that $\Pr[\,\restrict{\model{\langle L, s_1\rangle}(\vec{i})}{E} {\supseqeq} r\cons\vec{e}'\,] 
= \sum_{\vec{a} \in \gamma(r\cons\vec{e}')} \Pr[\,\model{\langle L, s_1\rangle}(\vec{i}) {\supseqeq} \vec{a}\,] 
= \sum_{\vec{a} \in \gamma(r\cons\vec{e}')}  \sum_{s'_1 \in S} \langle L, s_1\rangle(\vec{i})(\vec{a},s'_1)$.  
For all $\vec{a} \in \gamma(r\cons\vec{e}')$, its first element from $E$ must be $r$ and, thus, it cannot start with $r'$.  However, $s$ can only transition under $r'$ and $\vec{a} \neq [\,]$, meaning there must be a transition for $\vec{a}$ to be produced.  Thus, for all such $\vec{a}$ and $s'_1$, $\langle L, s_1\rangle(\vec{i})(\vec{a},s'_1) = 0$ and $\Pr[\,\restrict{\model{\langle L, s_1\rangle}(\vec{i})}{E} {\supseqeq} r\cons\vec{e}'\,] = 0$.  Similar reasoning concludes that $\Pr[\,\restrict{\model{\langle L, s_2\rangle}(\vec{i})}{E} {\supseqeq} r\cons\vec{e}'\,] = 0$.  Thus,  
 $\Pr[\,\restrict{\model{\langle L, s_1\rangle}(\vec{i})}{E} {\supseqeq} r\cons\vec{e}'\,] = 0 \leq \e{\epsilon} * 0 = \e{\epsilon} \Pr[\,\restrict{\model{\langle L, s_2\rangle}(\vec{i})}{E} {\supseqeq} r\cons\vec{e}'\,]$ as needed.

\item Subcase: $s_1$ is an input accepting state and $\vec{i} = [\,]$.
Recall that 
\[ \Pr[\,\restrict{\model{\langle L, s_1\rangle}(\vec{i})}{E} {\supseqeq} r\cons\vec{e}'\,] = \sum_{\vec{a} \in \gamma(r\cons\vec{e}')}  \sum_{s'_1 \in S} \langle L, s_1\rangle(\vec{i})(\vec{a},s'_1) \]
  Since $\vec{a}$ cannot be $[\,]$, $\vec{i} = [\,]$, and $s_1$ is an input accepting state, this means that $\langle L, s_1\rangle(\vec{i})(\vec{a},s'_1) = 0$ for all such $\vec{a}$ and $s'_1$.  Thus, $\Pr[\,\restrict{\model{\langle L, s_1\rangle}(\vec{i})}{E} {\supseqeq} r\cons\vec{e}'\,] = 0$.

Since $s_1$ is input accepting and related to $s_2$, $s_2$ must also be input accepting.  Thus, by similar reasoning $\Pr[\,\restrict{\model{\langle L, s_1\rangle}(\vec{i})}{E} {\supseqeq} r\cons\vec{e}'\,] = 0$ and the results holds as above.

\item Subcase: $s_1$ is an input accepting state and $\vec{i} = q\cons\vec{i}$ for some $q \in Q$.
Since $\vec{e} = r\cons\vec{e}'$, no $\vec{a} \in \gamma(\vec{e})$ can have $q$ come before $r$.  Thus, much as above $\Pr[\,\restrict{\model{\langle L, s_1\rangle}(\vec{i})}{E} {\supseqeq} r\cons\vec{e}'\,] = 0 = \Pr[\,\restrict{\model{\langle L, s_1\rangle}(\vec{i})}{E} {\supseqeq} r\cons\vec{e}'\,]$.

\item Subcase: $s_1$ is an input accepting state and $\vec{i} = d\cons\vec{i}'$ for some $d \in D$.
Since $s_1$ is input accepting and related to $s_2$, $s_2$ must also be input accepting.  Thus, there exist $\nu_1$ and $\nu_2$ such that $s_1 \wtrans^d \nu_1$ and $s_2 \wtrans^d \nu_2$.  
Since $s_1 \mathrel{\mathcal{R}^{\epsilon}} s_2$, there exists $\delta$ in $[0,\epsilon]$ 
such that $\nu_1 \mathrel{\mathcal{L}(\mathcal{R}^{\epsilon-\delta},\delta)} \nu_2$.  This implies there exists a bijection $\beta$ from $\supp(\nu_1)$ to $\supp(\nu_2)$ such that for all $x_1 \in \supp(\nu_1)$, $x_1 \mathcal{R}^{\epsilon-\delta} \beta(x_2)$ and $|\ln \nu_1(x_1) - \ln \nu_2(\beta(x_1))| \leq \delta$.  
Thus, we may apply the inductive hypothesis to $\vec{i}'$ and $r\cons\vec{e}'$ to get for all $x_1$ in $\supp(\nu_1)$, $\Pr[\,\restrict{\model{\langle L, x'_1\rangle}(\vec{i}')}{E} {\supseqeq} r\cons\vec{e}'\,] \leq \e{\epsilon-\delta} \Pr[\,\restrict{\model{\langle L, \beta(x'_1)\rangle}(\vec{i}')}{E} {\supseqeq} r\cons\vec{e}'\,]$.
Thus,
\begin{align}
\Pr[\,&\restrict{\model{\langle L, s_1\rangle}(d\cons\vec{i}')}{E} {\supseqeq} r\cons\vec{e}'\,]\\
&= \sum_{x'_1 \in S_\bot} \nu_1(x'_1) \Pr[\,\restrict{\model{\langle L, x'_1\rangle}(\vec{i}')}{E} {\supseqeq} r\cons\vec{e}'\,] \label{ln:nuw-rd-nu-one}\\
&= \sum_{x'_1 \in \supp(\nu_1)} \nu_1(x'_1) \Pr[\,\restrict{\model{\langle L, x'_1\rangle}(\vec{i}')}{E} {\supseqeq} r\cons\vec{e}'\,] \label{ln:nuw-rd-nu-drop}\\
&\leq \sum_{x'_1 \in \supp(\nu_1)} \nu_1(x'_1) \e{\epsilon-\delta} \Pr[\, \restrict{\model{\langle L, \beta(x'_1) \rangle}(\vec{i}')}{E} {\supseqeq} r\cons\vec{e}'\,] \label{ln:nuw-rd-ih}\\
&\leq \e{\epsilon} \sum_{x'_2 \in S_\bot} \nu_2(x'_2)  \Pr[\, \restrict{\model{\langle L, x'_2 \rangle}(\vec{i}')}{E} {\supseqeq} r\cons\vec{e}'\,]\label{ln:nuw-rd-biject}\\
&= \e{\epsilon} \Pr[\,\restrict{\model{\langle L, s_2\rangle}(d\cons\vec{i})}{E} {\supseqeq} r\cons\vec{e}'\,] \label{ln:nuw-rd-nu-two}
\end{align}
Lines~\ref{ln:nuw-rd-nu-one} and~\ref{ln:nuw-rd-nu-two} follow from Proposition~\ref{thm:nu}.
Line~\ref{ln:nuw-rd-ih} follows from the inductive hypothesis 
Line~\ref{ln:nuw-rd-biject} follows Proposition~\ref{prp:nuw-help}.

\end{itemize}

Case: $\vec{e} = q\cons\vec{e}'$ and $\vec{i} = i\cons\vec{i}'$ for some $i$ in $I$ and $\vec{i}'$ in $I^*$.  We consider the following subcases:
\begin{itemize}

\item Subcase: $s_1$ is not an input accepting state: there exists no $\mu_1$ such that $s_1 \trans^q \mu_1$. 
Since $s_1$ and $s_2$ are related, there also cannot exist a $\mu_2$ such that $s_2 \trans^q \mu_2$.  
Since $s_1$ does have a transition and is $H$-disabled, there must exist some response $r$ such that $s_1 \trans^r \mu'_1$ and $s_2 \trans^r \mu'_2$ for some $\mu'_1$ and $\mu'_2$.  Furthermore, $s_1$ and $s_2$ transitions under no other actions.
Recall that 
\begin{align*}
\Pr[\,\restrict{\model{\langle L, s_1\rangle}(i\cons\vec{i}')}{E} {\supseqeq} q\cons\vec{e}'\,] 
&= \sum_{\vec{a} \in \gamma(q\cons\vec{e}')} \Pr[\,\model{\langle L, s_1\rangle}(i\cons\vec{i}') {\supseqeq} \vec{a}\,] \\
&= \sum_{\vec{a} \in \gamma(q\cons\vec{e}')}  \sum_{s'_1 \in S} \langle L, s_1\rangle(i\cons\vec{i}')(\vec{a},s'_1)
\end{align*}  
For all $\vec{a} \in \gamma(q\cons\vec{e}')$, its first element from $E$ must be $q$ and, thus, it cannot start with $r$.  However, $s$ can only transition under $r$ and $\vec{a} \neq [\,]$, meaning there must be a transition for $\vec{a}$ to be produced.  Thus, for all such $\vec{a}$ and $s'_1$, $\langle L, s_1\rangle(\vec{i})(\vec{a},s'_1) = 0$ and $\Pr[\,\restrict{\model{\langle L, s_1\rangle}(i\cons\vec{i}')}{E} {\supseqeq} q\cons\vec{e}'\,] = 0$.  Similar reasoning concludes that $\Pr[\,\restrict{\model{\langle L, s_2\rangle}(i\cons\vec{i}')}{E} {\supseqeq} q\cons\vec{e}'\,] = 0$.  Thus,  
\[ \Pr[\,\restrict{\model{\langle L, s_1\rangle}(i\cons\vec{i}')}{E} {\supseqeq} q\cons\vec{e}'\,] \leq \e{\epsilon} * 0 = \e{\epsilon} \Pr[\,\restrict{\model{\langle L, s_2\rangle}(q\cons\vec{i}')}{E} {\supseqeq} q\cons\vec{e}'\,] \]
 as needed.

\item Subcase: $s_1$ is an input accepting state and $i = q$ for some $\mu_1$.
Since $s_1$ is input accepting, $s_1 \trans^q \mu_1$ for some $\mu_1$.
Since $s_1$ and $s_2$ are related, there exists $\mu_2$ such that $s_2 \trans^{q} \mu_2$.  
This implies that there exists $\nu_1$ and $\nu_2$ such that $s_1 \wtrans^q \nu_1$ and $s_2 \wtrans^q \nu_2$.  Since $s_1 \mathrel{\mathcal{R}^{\epsilon}} s_2$, there exists $\delta$ in $[0,\epsilon]$ 
such that $\nu_1 \mathrel{\mathcal{L}(\mathcal{R}^{\epsilon-\delta},\delta)} \nu_2$.  This implies there exists a bijection $\beta$ from $\supp(\nu_1)$ to $\supp(\nu_2)$ such that for all $x_1 \in \supp(\nu_1)$, $x_1 \mathcal{R}^{\epsilon-\delta} \beta(x_2)$ and $|\ln \nu_1(x_1) - \ln \nu_2(\beta(x_1))| \leq \delta$.  
Thus, we may apply the inductive hypothesis to $\vec{i}'$ and $\vec{e}'$ to get for all $x_1$ in $\supp(\nu_1)$, $\Pr[\,\restrict{\model{\langle L, x'_1\rangle}(\vec{i}')}{E} {\supseqeq} \vec{e}'\,] \leq \e{\epsilon-\delta} \Pr[\,\restrict{\model{\langle L, \beta(x'_1)\rangle}(\vec{i}')}{E} {\supseqeq} \vec{e}'\,]$.
Thus,
\begin{align}
\Pr[\,&\restrict{\model{\langle L, s_1\rangle}(q\cons\vec{i}')}{E} {\supseqeq} q\cons\vec{e}'\,]\\
&= \sum_{x'_1 \in S_\bot} \nu_1(x'_1) \Pr[\,\restrict{\model{\langle L, x'_1\rangle}(\vec{i}')}{E} {\supseqeq} \vec{e}'\,] \label{ln:nuw-q-nu-one}\\
&= \sum_{x'_1 \in \supp(\nu_1)} \nu_1(x'_1) \Pr[\,\restrict{\model{\langle L, x'_1\rangle}(\vec{i}')}{E} {\supseqeq} \vec{e}'\,] \label{ln:nuw-q-nu-drop}\\
&\leq \sum_{x'_1 \in \supp(\nu_1)} \nu_1(x'_1) \e{\epsilon-\delta} \Pr[\, \restrict{\model{\langle L, \beta(x'_1) \rangle}(\vec{i}')}{E} {\supseqeq} \vec{e}'\,] \label{ln:nuw-q-ih}\\
&\leq \e{\epsilon} \sum_{x'_2 \in S_\bot} \nu_2(x'_2)  \Pr[\, \restrict{\model{\langle L, x'_2 \rangle}(\vec{i}')}{E} {\supseqeq} \vec{e}'\,]\label{ln:nuw-q-biject}\\
&= \e{\epsilon} \Pr[\,\restrict{\model{\langle L, s_2\rangle}(q\cons\vec{i}')}{E} {\supseqeq} q\cons\vec{e}'\,] \label{ln:nuw-q-nu-two}
\end{align}
Lines~\ref{ln:nuw-q-nu-one} and~\ref{ln:nuw-q-nu-two} follow from Proposition~\ref{thm:nu}.
Line~\ref{ln:nuw-q-ih} follows from the inductive hypothesis 
Line~\ref{ln:nuw-q-biject} follows Proposition~\ref{prp:nuw-help}.

\item Subcase: $s_1$ is input accepting, $i \neq q$, and $i \in Q$. 
Recall that 
\begin{align*}
\Pr[\,\restrict{\model{\langle L, s_1\rangle}(i\cons\vec{i}')}{E} {\supseqeq} q\cons\vec{e}'\,] 
&= \sum_{\vec{a} \in \gamma(q\cons\vec{e}')} \Pr[\,\model{\langle L, s_1\rangle}(i\cons\vec{i}') {\supseqeq} \vec{a}\,]\\
&= \sum_{\vec{a} \in \gamma(q\cons\vec{e}')}  \sum_{s'_1 \in S} \langle L, s_1\rangle(i\cons\vec{i}')(\vec{a},s'_1)
\end{align*}
For all $\vec{a} \in \gamma(q\cons\vec{e}')$, its first element from $E$ must be $q$ and, thus, it cannot start with $i$.  Thus, for all such $\vec{a}$ and $s'_1$, $\langle L, s_1\rangle(i\cons\vec{i}')(\vec{a},s'_1) = 0$ and $\Pr[\,\restrict{\model{\langle L, s_1\rangle}(i\cons\vec{i}')}{E} {\supseqeq} q\cons\vec{e}'\,] = 0$.  Similar reasoning allows us to conclude that $\Pr[\,\restrict{\model{\langle L, s_2\rangle}(i\cons\vec{i}')}{E} {\supseqeq} q\cons\vec{e}'\,] = 0$.  Thus,  
 $\Pr[\,\restrict{\model{\langle L, s_1\rangle}(i\cons\vec{i}')}{E} {\supseqeq} q\cons\vec{e}'\,] = 0 \leq \e{\epsilon} * 0 = \e{\epsilon} \Pr[\,\restrict{\model{\langle L, s_2\rangle}(q\cons\vec{i}')}{E} {\supseqeq} q\cons\vec{e}'\,]$ as needed.

\item Subcase: $s_1$ is input accepting, $i \neq q$, and $i \in D$. 
We use $d$ to denote $i$.
Since $s_1$ is input accepting and related to $s_2$, $s_2$ must also be input accepting.  Thus, there exist $\nu_1$ and $\nu_2$ such that $s_1 \wtrans^d \nu_1$ and $s_2 \wtrans^d \nu_2$.  
Since $s_1 \mathrel{\mathcal{R}^{\epsilon}} s_2$, there exists $\delta$ in $[0,\epsilon]$ 
such that $\nu_1 \mathrel{\mathcal{L}(\mathcal{R}^{\epsilon-\delta},\delta)} \nu_2$.  This implies there exists a bijection $\beta$ from $\supp(\nu_1)$ to $\supp(\nu_2)$ such that for all $x_1 \in \supp(\nu_1)$, $x_1 \mathcal{R}^{\epsilon-\delta} \beta(x_2)$ and $|\ln \nu_1(x_1) - \ln \nu_2(\beta(x_1))| \leq \delta$.  
Thus, we may apply the inductive hypothesis to $\vec{i}'$ and $q\cons\vec{e}'$ to get for all $x_1$ in $\supp(\nu_1)$, $\Pr[\,\restrict{\model{\langle L, x'_1\rangle}(\vec{i}')}{E} {\supseqeq} q\cons\vec{e}'\,] \leq \e{\epsilon-\delta} \Pr[\,\restrict{\model{\langle L, \beta(x'_1)\rangle}(\vec{i}')}{E} {\supseqeq} q\cons\vec{e}'\,]$.
Thus,
\begin{align}
\Pr[\,&\restrict{\model{\langle L, s_1\rangle}(d\cons\vec{i}')}{E} {\supseqeq} q\cons\vec{e}'\,]\\
&= \sum_{x'_1 \in S_\bot} \nu_1(x'_1) \Pr[\,\restrict{\model{\langle L, x'_1\rangle}(\vec{i}')}{E} {\supseqeq} q\cons\vec{e}'\,] \label{ln:nuw-qd-nu-one}\\
&= \sum_{x'_1 \in \supp(\nu_1)} \nu_1(x'_1) \Pr[\,\restrict{\model{\langle L, x'_1\rangle}(\vec{i}')}{E} {\supseqeq} q\cons\vec{e}'\,] \label{ln:nuw-qd-nu-drop}\\
&\leq \sum_{x'_1 \in \supp(\nu_1)} \nu_1(x'_1) \e{\epsilon-\delta} \Pr[\, \restrict{\model{\langle L, \beta(x'_1) \rangle}(\vec{i}')}{E} {\supseqeq} q\cons\vec{e}'\,] \label{ln:nuw-qd-ih}\\
&\leq \e{\epsilon} \sum_{x'_2 \in S_\bot} \nu_2(x'_2)  \Pr[\, \restrict{\model{\langle L, x'_2 \rangle}(\vec{i}')}{E} {\supseqeq} q\cons\vec{e}'\,]\label{ln:nuw-qd-biject}\\
&= \e{\epsilon} \Pr[\,\restrict{\model{\langle L, s_2\rangle}(d\cons\vec{i})}{E} {\supseqeq} q\cons\vec{e}'\,] \label{ln:nuw-qd-nu-two}
\end{align}
Lines~\ref{ln:nuw-qd-nu-one} and~\ref{ln:nuw-qd-nu-two} follow from Proposition~\ref{thm:nu}.
Line~\ref{ln:nuw-qd-ih} follows from the inductive hypothesis 
Line~\ref{ln:nuw-qd-biject} follows Proposition~\ref{prp:nuw-help}.

\end{itemize}

\appsubsection{Proof of Theorem~\ref{thm:unwinding}}

We use Lemma~\ref{prp:strong-by-trace} and strengthen the hypothesis to show that for all reachable states $s$ and $\vec{e}$,
\begin{align*}
\Pr[\,\restrict{\model{\langle L, s\rangle}(\vec{i}_1)}{E} {\supseqeq} \vec{e}\,] &= \sum_{\vec{a} \in \gamma(\vec{e})} \Pr[\,\model{\langle L, s\rangle}(\vec{i}_1) {\supseqeq} \vec{a}\,]\\
&\leq \e{\epsilon} \sum_{\vec{a} \in \gamma(\vec{e})} \Pr[\,\model{\langle L, s\rangle}(\vec{i}_2) {\supseqeq} \vec{a}\,] \\
&= \e{\epsilon} \Pr[\,\restrict{\model{\langle L, s\rangle}(\vec{i}_2)}{E} {\supseqeq} \vec{e}\,]
\end{align*}
Arbitrarily fix $\vec{i}_1$ and $\vec{i}_2$ such that $\diff{\vec{i}_1}{\vec{i}_2} = 1$.  
We use induction over the structures of $\vec{i}_1$, $\vec{i}_2$, and $\vec{e}$.

Case: $\vec{e} = [\,]$.  In this case, $\gamma(\vec{e}) = \{[\,]\}$ and $\Pr[\,\restrict{\model{\langle L, s\rangle}(\vec{i}_1)}{E} {\supseqeq} [\,]\,] = 1 \leq \e{\epsilon} * 1 = \e{\epsilon} \Pr[\,\restrict{\model{\langle L, s\rangle}(\vec{i}_2)}{E} {\supseqeq} [\,]\,]$ irrespective of $\vec{i}_1$ and $\vec{i}_2$.  

Only in the case where $\vec{e} = [\,]$, can $[\,]$ be in $\gamma(\vec{e})$.  Thus, we assume that $\vec{e} \neq [\,]$ in the reminder of this proof.

Case: $\vec{i}_1 = [\,]$ and $\vec{i}_2 = [\,]$.  $\Pr[\,\model{\langle L, s\rangle}([\,]) {\supseqeq} \vec{a}\,] = \Pr[\,\model{\langle L, s\rangle}([\,]) {\supseqeq} \vec{a}\,]$ for all $\vec{a} \in \gamma(\vec{e})$ for any $\vec{e}$.

Case: $\vec{i}_1 = d\cons\vec{i}'_1$ and $\vec{i}_2 = d\cons\vec{i}'_2$.  We consider three mutually exclusive subcases:
\begin{itemize}
\item Subcase: $s \trans^d \mu$.
For all $\vec{a}$ such that for no $\vec{a}'$, $\vec{a} = d\cons\vec{a}'$, $\Pr[\,\model{\langle L, s\rangle}(\vec{i}_1) {\supseqeq} \vec{a}\,] = 0 = \Pr[\,\model{\langle L, s\rangle}(\vec{i}_2) {\supseqeq} \vec{a}\,]$.  Since such $\vec{a}$ add nothing to the summations, we may ignore them and limit our attention to $ \vec{a} = d\cons\vec{a}'$ in $\gamma(\vec{e})$.  
Note that all such $\vec{a}'$ are in $\gamma(\vec{e})$ iff $d\cons\vec{a}'$ is in $\gamma(\vec{e})$.

All the states in $\supp(\mu)$ are reachable.  Thus, for each state $s'$ in $\supp(\mu)$, we may apply the inductive hypothesis on $\vec{i}'_1$, $\vec{i}'_2$, and $\vec{e}$ to get that 
\[ \sum_{\vec{a}' \in \gamma(\vec{e})} \Pr[\,\model{\langle L, s'\rangle}(\vec{i}'_1) {\supseqeq} \vec{a}'\,] \leq \e{\epsilon} \sum_{\vec{a}' \in \gamma(\vec{e})} \Pr[\,\model{\langle L, s'\rangle}(\vec{i}'_2) {\supseqeq} \vec{a}'\,]\]
Considering the sum over all such $\vec{a}$, we get
\begin{align}
\sum_{\vec{a} \in \gamma(\vec{e})} \Pr[\,&\model{\langle L, s\rangle}(\vec{i}_1) {\supseqeq} \vec{a}\,]\\
&= \sum_{d\cons\vec{a}' \in \gamma(\vec{e})} \Pr[\,\model{\langle L, s\rangle}(\vec{i}_1) {\supseqeq} d\cons\vec{a}'\,]\\
&= \sum_{\vec{a}' \in \gamma(\vec{e})} \sum_{s' \in S} \mu(s') \Pr[\, \model{\langle L, s'\rangle}(\vec{i}'_1) {\supseqeq} \vec{a}'\,] \label{ln:uw-d-prefix-sem-one}\\
&= \sum_{s' \in S} \mu(s') \sum_{\vec{a}' \in \gamma(\vec{e})} \Pr[\,\model{\langle L, s'\rangle}(\vec{i}'_1) {\supseqeq} \vec{a}'\,]\\
&= \sum_{s' \in \supp(\mu)} \mu(s') \sum_{\vec{a}' \in \gamma(\vec{e})} \Pr[\,\model{\langle L, s'\rangle}(\vec{i}'_1) {\supseqeq} \vec{a}'\,]\\
&\leq \sum_{s' \in \supp(\mu)} \mu(s') \e{\epsilon} \sum_{\vec{a}' \in \gamma(\vec{e})} \Pr[\,\model{\langle L, s'\rangle}(\vec{i}'_2) {\supseqeq} \vec{a}'\,]\label{ln:uw-d-ih}\\
&= \e{\epsilon} \sum_{\vec{a}' \in \gamma(\vec{e})}  \sum_{s' \in S} \mu(s') \Pr[\,\model{\langle L, s'\rangle}(\vec{i}'_2) {\supseqeq} \vec{a}'\,]\\
&= \e{\epsilon} \sum_{d\cons\vec{a}' \in \gamma(\vec{e})} \Pr[\,\model{\langle L, s\rangle}(\vec{i}_2) {\supseqeq} d\cons\vec{a}'\,]\label{ln:uw-d-prefix-sem-two}\\
&= \e{\epsilon} \sum_{\vec{a} \in \gamma(\vec{e})} \Pr[\,\model{\langle L, s\rangle}(\vec{i}_2) {\supseqeq} \vec{a}\,]
\end{align}
where $d\cons\vec{a}'$ in the expression $d\cons\vec{a}' \in \gamma(\vec{e})$ ranges over only those elements of $\gamma(\vec{e})$ of the form $d\cons\vec{a}'$.  That is, $\sum_{d\cons\vec{a}' \in \gamma(\vec{e})}$ is shorthand for \[ \sum_{d\cons\vec{a}' \in \set{\vec{a}'' \in \gamma(\vec{e})}{\exists \vec{a}' \in A^* \st d\cons\vec{a}' = \vec{a}''}}\]  
Note that the last line follows from the fact that  $\model{\langle L, s\rangle}(\vec{i}_2)(\vec{a}) = 0$ for all $\vec{a}$ not of the form $d \cons \vec{a}'$.
Lines~\ref{ln:uw-d-prefix-sem-one} and~\ref{ln:uw-d-prefix-sem-two} follow from Proposition~\ref{thm:prefix-sem}.
Line~\ref{ln:uw-d-ih} follows from the inductive hypothesis.

\item Subcase: $s \trans^r \mu$ for some $r \in R$.
For all $\vec{a}$ such that for no $\vec{a}'$, $\vec{a} = r\cons\vec{a}'$, $\Pr[\,\model{\langle L, s\rangle}(\vec{i}_1) {\supseqeq} \vec{a}\,] = 0 = \Pr[\,\model{\langle L, s\rangle}(\vec{i}_2) {\supseqeq} \vec{a}\,]$.  Since such $\vec{a}$ add nothing to the summations, we may ignore them and limit our attention to $r\cons\vec{a}'$ in $\gamma(\vec{e})$.  Unless $\vec{e} = r\cons\vec{e}'$ for some $\vec{e}'$, no such $r\cons\vec{a}'$ will be in $\gamma(\vec{e})$ and both summations will be zero.  Thus, we limit our attention to the case where $\vec{e} = r\cons\vec{e}'$ for some $\vec{e}'$.  In this case, we may use the inductive hypothesis on $\vec{i}_1$, $\vec{i}_2$, and $\vec{e}'$ to get that for all $s' \in \supp(\mu)$, $\sum_{\vec{a}' \in \gamma(\vec{e}')} \Pr[\,\model{\langle L, s'\rangle}(\vec{i}_1) {\supseqeq} \vec{a}'\,] \leq \e{\epsilon} \sum_{\vec{a}' \in \gamma(\vec{e}')} \Pr[\,\model{\langle L, s'\rangle}(\vec{i}_2) {\supseqeq} \vec{a}'\,]$.  Thus,
\begin{align}
\sum_{\vec{a} \in \gamma(\vec{e})} \Pr[\,&\model{\langle L, s\rangle}(\vec{i}_1) {\supseqeq} \vec{a}\,]\\
&= \sum_{r\cons\vec{a}' \in \gamma(r\cons\vec{e}')} \Pr[\,\model{\langle L, s\rangle}(\vec{i}_1) {\supseqeq} r\cons\vec{a}'\,]\\
&= \sum_{\vec{a}' \in \gamma(\vec{e}')} \sum_{s' \in S} \mu(s') \Pr[\,\model{\langle L, s'\rangle}(\vec{i}_1) {\supseqeq} \vec{a}'\,]\label{ln:uw-r-prefix-sem-one}\\
&= \sum_{s' \in \supp(\mu)} \mu(s') \sum_{\vec{a}' \in \gamma(\vec{e}')} \Pr[\,\model{\langle L, s'\rangle}(\vec{i}_1) {\supseqeq} \vec{a}'\,]\\
&\leq \sum_{s' \in \supp(\mu)} \mu(s') \e{\epsilon} \sum_{\vec{a}' \in \gamma(\vec{e}')} \Pr[\,\model{\langle L, s'\rangle}(\vec{i}_2) {\supseqeq} \vec{a}'\,]\label{ln:uw-r-ih}\\
&= \e{\epsilon} \sum_{\vec{a}' \in \gamma(\vec{e}')}  \sum_{s' \in S} \mu(s') \Pr[\,\model{\langle L, s'\rangle}(\vec{i}_2) {\supseqeq} \vec{a}'\,]\\
&= \e{\epsilon} \sum_{r\cons\vec{a}' \in \gamma(r\cons\vec{e}')} \Pr[\,\model{\langle L, s\rangle}(\vec{i}_2) {\supseqeq} r\cons\vec{a}'\,]\label{ln:uw-r-prefix-sem-two}\\
&= \e{\epsilon} \sum_{\vec{a} \in \gamma(\vec{e})} \Pr[\,\model{\langle L, s\rangle}(\vec{i}_2) {\supseqeq} \vec{a}\,]
\end{align}
Lines~\ref{ln:uw-r-prefix-sem-one} and~\ref{ln:uw-r-prefix-sem-two} follow from Proposition~\ref{thm:prefix-sem}.
Line~\ref{ln:uw-r-ih} follows from the inductive hypothesis.

\item Subcase: Otherwise.  
Since $s$ is $H$-disabled, it is not the case that $s \trans^h \mu$ for any $\mu$ or $h \in H$.
Since $\vec{a} \neq [\,]$, $\Pr[\,\model{\langle L, s\rangle}(\vec{i}_1) {\supseqeq} \vec{a}\,] = 0 = \Pr[\,\model{\langle L, s\rangle}(\vec{i}_2) {\supseqeq} \vec{a}\,]$ for all $\vec{a} \in \gamma(\vec{e})$.
\end{itemize}

Case: $\vec{i}_1 = q\cons\vec{i}'_1$ and $\vec{i}_2 = q\cons\vec{i}'_2$.  Much as above just using that $\vec{a}$ is only in $\gamma(\vec{e})$ if $\vec{e} = q\cons\vec{e}'$ for some $\vec{e}'$ and $\vec{a}' \in \gamma(\vec{e}')$.

Case: $\vec{i}_2 = d\cons\vec{i}_1$.  We consider the following subcases:
\begin{itemize}
\item Subcase: $s \trans^d \mu$.  
Since $s \trans^d \mu$, for some $\nu$, $s \wtrans^d \nu$.
Since $s$ is reachable from $s_0$, there exists an $\epsilon$-unwinding relation $\mathcal{R}^{\epsilon}$ that covers $s$ and $d$.  That is, for all $s' \in \supp(\nu)$, for all $s' \in \supp(\nu)$, $s \mathrel{\mathcal{R}^{\epsilon}} s'$ and $\nu(\bot) = 0$.
\begin{align}
\Pr[\,\restrict{\model{\langle L, s\rangle}}{E}(\vec{i}_1) {\supseqeq} \vec{e}\,] 
&\leq \e{\epsilon} \Pr[\,\restrict{\model{\langle L, s_{\mathsf{min}}\rangle}}{E}(\vec{i}_1) {\supseqeq} \vec{e}\,] \label{ln:uw-min}\\
&= \e{\epsilon} \left(\sum_{x \in S_\bot} \nu(x)\right) \Pr[\,\restrict{\model{\langle L, s_{\mathsf{min}}\rangle}(\vec{i}_1)}{E} {\supseqeq} \vec{e}\,]\label{ln:uw-nu-dist}\\
&= \e{\epsilon} \left(\sum_{s' \in S} \nu(s')\right) \Pr[\,\restrict{\model{\langle L, s_{\mathsf{min}}\rangle}(\vec{i}_1)}{E} {\supseqeq} \vec{e}\,]\label{ln:uw-drop-bot}\\
&\leq \e{\epsilon} \sum_{s' \in S} \nu(s') \Pr[\,\restrict{\model{\langle L, s'\rangle}(\vec{i}_1)}{E} {\supseqeq} \vec{e}\,]\\
&= \e{\epsilon} \sum_{x \in S_\bot} \nu(x) \Pr[\,\restrict{\model{\langle L, x\rangle}(\vec{i}_1)}{E} {\supseqeq} \vec{e}\,]\label{ln:uw-add-bot}\\
&= \e{\epsilon} \Pr[\,\restrict{\model{\langle L, s\rangle}(d\cons\vec{i}_1)}{E} {\supseqeq} \vec{e}\,]\label{ln:uw-nu}\\
&= \e{\epsilon} \Pr[\,\restrict{\model{\langle L, s\rangle}(\vec{i}_2)}{E} {\supseqeq} \vec{e}\,]
\end{align}
where $s_{\mathsf{min}}$ is the state $s' \in \supp(\nu)$ that minimizes $\Pr[\,\restrict{\model{\langle L, s'\rangle}(\vec{i}_1)}{E} {\supseqeq} \vec{e}\,]$.
Line~\ref{ln:uw-min} follows from Lemma~\ref{thm:near-unwinding}.
Line~\ref{ln:uw-nu-dist} follows from Proposition~\ref{prp:nu-dist}.
Lines~\ref{ln:uw-drop-bot} and~\ref{ln:uw-add-bot} follow from $\nu(\bot) = 0$.
Line~\ref{ln:uw-nu} follows from Proposition~\ref{thm:nu}.

\item Subcase: $s \trans^r \mu$ for some $r$.
As in the corresponding subcase in the case for $\vec{i}_1 = d\cons\vec{i}'_1$ and $\vec{i}_2 = d\cons\vec{i}'_2$, we may ignore $\vec{a}$ not of the form $\vec{a} = r\cons\vec{a}'$ and $\vec{e}$ not of the form $r\cons\vec{e}'$.  In this case, we may use the inductive hypothesis on $\vec{i}_1$, $\vec{i}_2$, and $\vec{e}'$ as before to get the required result.

\item Subcase: Otherwise.  Since $s$ does not transition under $d$ in this case and the automaton has quasi-input enabling, it does not transition under any input action.  Further, $s$ is $H$-disabled.  Thus, since $\vec{a} \neq [\,]$, $\Pr[\,\model{\langle L, s\rangle}(\vec{i}_1) {\supseqeq} \vec{a}\,] = 0$ for all $\vec{a} \in \gamma(\vec{e})$.
\end{itemize}

Case: $\vec{i}_1 = d\cons\vec{i}_2$.  We consider the following subcases.
\begin{itemize}
\item Subcase: $s \trans^d \mu$.
Since $s \trans^d \mu$, for some $\nu$, $s \wtrans^d \nu$.
Since $s$ is reachable from $s_0$, there exists an $\epsilon$-unwinding relation $\mathcal{R}^\epsilon$ that covers $s$ and $d$.  That is, for all $s' \in \supp(\nu)$, for all $s' \in \supp(\nu)$, $s \mathrel{\mathcal{R}^{\epsilon}} s'$ and $\nu(\bot) = 0$.

Thus,
\begin{align}
\Pr[\,\restrict{\model{\langle L, s\rangle}(\vec{i}_1)}{E} {\supseqeq} \vec{e}\,]
&= \Pr[\,\restrict{\model{\langle L, s\rangle}(d\cons\vec{i}_2)}{E} {\supseqeq} \vec{e}\,]\\ 
&= \sum_{x \in S_\bot} \nu(x) \Pr[\,\restrict{\model{\langle L, x\rangle}(\vec{i}_2)}{E} {\supseqeq} \vec{e}\,]\label{ln:uw-2-nu}\\ 
&= \sum_{x \in \supp(\nu)} \nu(x) \Pr[\,\restrict{\model{\langle L, x\rangle}(\vec{i}_2)}{E} {\supseqeq} \vec{e}\,]\\ 
&\leq \sum_{x \in \supp(\nu)} \nu(x) \e{\epsilon} \Pr[\,\restrict{\model{\langle L, s\rangle}(\vec{i}_2)}{E} {\supseqeq} \vec{e}\,]\label{ln:uw-2-nuw}\\ 
&= \left( \sum_{x \in \supp(\nu)} \nu(x) \right) \e{\epsilon} \Pr[\,\restrict{\model{\langle L, s\rangle}(\vec{i}_2)}{E} {\supseqeq} \vec{e}\,]\\ 
&= \e{\epsilon} \Pr[\,\restrict{\model{\langle L, s\rangle}(\vec{i}_2)}{E} {\supseqeq} \vec{e}\,]\label{ln:uw-2-nu-dist}
\end{align} 
Line~\ref{ln:uw-2-nu} follows from Proposition~\ref{thm:nu}.
Line~\ref{ln:uw-2-nuw} follows from Lemma~\ref{thm:near-unwinding}.
Line~\ref{ln:uw-2-nu-dist} follows from Proposition~\ref{prp:nu-dist}.

\item Subcase: $s \trans^r \mu$ for some $r$.  As above in the other subcases for $s \trans^r \mu$.

\item Subcase: Otherwise.
In the case where $s \trans^d \mu$ for no $\mu$ and $\vec{a} \neq [\,]$, everything is $0$, which is lower than any possible value of $\e{\epsilon} \Pr[\,\restrict{\model{\langle L, s\rangle}(\vec{i}_2)}{E} {\supseqeq} \vec{e}\,]$.
\end{itemize}

\appsection{Proof of Lemma~\ref{lem:near-system-unwinding}: $\exparamM(K)$ has an Unwinding Family}
\label{app:example-unwinding-proofs}

To prove Lemma~\ref{lem:near-system-unwinding}, arbitrarily fix a state $s$ and data point $d$.  
We use proof by induction over $j$ from $0$ to $t$ to show that for each pair of states $s_1$ and $s_2$ such that $s_1 \mathrel{\mathcal{R}_{s,d}^{2j\epsilon}} s_2$, they have the needed properties.  

In both the base or inductive cases, since $s_1 \mathrel{\mathcal{R}_{s,d}^{2j\epsilon}} s_2$, $s_2$ must have the same value for the PC as $s_1$.  Thus, they have the same set of enabled actions.  That is, there exists a $\mu_1$ such that $s_1 \trans^a \mu_1$ iff there exists a $\mu_2$ such that $s_2 \trans^a \mu_2$.  Thus, $s_1 \wtrans^a \nu_1$ iff $s_2 \wtrans^a \nu_2$.  

\paragraph{Base Case: $j=0$}  
For states with a PC of $08$, the properties follows from the related states being equal.  

For states with a PC of $16$, we can prove the needed properties using $\delta = 0$ as we must since $\mathcal{R}_{s,d}^{0}$ is a $0$-unwinding relation.  Since $j = 0$ and $s_1 \mathrel{\mathcal{R}_{s,d}^{2j\epsilon}} s_2$, $s_1$ must have the form \\ $\langle 16, \langle B'_0, \ldots, B'_{t-1}\rangle, \langle n_0, \ldots, n'_{t-1}\rangle, c', y', r', k'\rangle$.  Since $s_1$ is related to another state, it must be in $S_1^t$.  Thus, $s_1$ is reachable in $t$ queries and $c' = c+(t-1)$.  Once \verb|curSlot| is updated by line 17, it will roll over to the value of $c$.  Thus, $s_1 \wtrans^{r'} \mathsf{Dirac}(s'_1)$ where $s'_1 = \langle 08, \langle B''_0, \ldots, B''_{t-1}\rangle, \langle n''_0, \ldots, n''_{t-1}\rangle, c , y', r', k'\rangle)$ where $B''_{c} = \emptybag$, $n''_{c} = 0$, and for all $c'' \neq c$, $B''_{c''} = B'_{c''}$ and $n''_{c''} = n'_{c''}$.
Since the $c$th slot was holding the data point by which $s_1$ and $\mathsf{add}(s_1, c, d)$ differ and $s_1$ differs from $\mathsf{swap}(s_1, c, d, d')$ for each value of $d'$, $\mathsf{add}(s_1, c, d) \wtrans^{r'} \mathsf{Dirac}(s'_1)$ and $\mathsf{swap}(s_1,c,d,d') \wtrans^{r'} \mathsf{Dirac}(s'_1)$ for all $d'$. 
We use $\beta$ that maps $s'_1$ to itself and nothing else to anything.
Furthermore, for the one state $s'_1$ in $\supp(\nu_1)$, $|\ln \nu_1(s'_1) - \ln \nu_2(\beta(s'_1))| = 0 = \delta$.  Thus, $\nu_1 \mathrel{\lift{=,0}} \nu_2$ where equality is trivially a $0$-unwinding relation.

\paragraph{Inductive Case: $j > 0$}
We consider cases depending on what type of action $a$ is to show that there exists $\delta$ in $[0,2j\epsilon]$ such that $\mu_1 \mathrel{\lift{\mathcal{R}_{s,d}^{2j\epsilon-\delta},\delta}} \mu_2$:
\begin{itemize}
\item Subcase: $a \in D$.  In this case, we prove that such a $\delta$ exists using $\delta = 0$.  That is, we prove that $\nu_1 \mathrel{\lift{\mathcal{R}_{s,d}^{2j\epsilon},0}} \nu_2$.
Since $a \in D$, $s_1$ has must have the form 
\[ \langle 08, \langle B'_0, \ldots, B'_{t-1}\rangle, \langle n'_{0}, \ldots n'_{t-1}\rangle, c', y', r', k'\rangle \]
We consider subsubcases:
\begin{itemize}

\item Subsubcase: $c = c'$ and $n_{c'} < v-1$.  In this case, both states $s_1$ and $s_2$ will store the data point $a$.  For that $c'$, $\nu_1 = \mathsf{Dirac}(\langle 08, \vec{B}'', \vec{n}'', c', a, r', k'\rangle)$ where $B''_{c'} = B'_{c'} \bagunion \{a\}$, $n''_{c'} = n'_{c'} + 1$, and for all $c'' \neq c'$, $B''_{c''} = B'_{c''}$ and $n''_{c''} = n'_{c''}$.  Similarly, $\nu_2 =  \mathsf{Dirac}(\langle 08, \langle \vec{B}''', \vec{n}''', c', a, r', k'\rangle)$ where either
\begin{enumerate}
\item $B'''_{c'} = B_{c'} \bagunion \{d\} \bagunion \{a\}$, $n'''_{c'} = n'_{c'}+2$, and for all $c \neq c'' \neq c'$, $B'''_{c''} = B'_{c''}$ and $n'''_{c''} = n'_{c''}$; or

\item $B'''_{c'} = B_{c'} \bagunion \{d\} - \{d'\} \bagunion \{a\}$, $n'''_{c'} = n'_{c'}+2$, and for all $c \neq c'' \neq c'$, $B'''_{c''} = B'_{c''}$ and $n'''_{c''} = n'_{c''}$ for some $d'$.
\end{enumerate}
Thus, for $s'_1 \in \supp(\nu_1)$ and $s'_2 \in \supp(\nu_2)$, $s'_2$ is either  $\mathsf{add}(s'_1, c, d)$ or $\mathsf{swap}(s'_1, c ,d, d')$ for some $d'$.

To show that $\mu_1 \mathrel{\lift{\mathcal{R}_{s,d}^{\epsilon'},0}} \mu_2$, we use the function $\beta$ that maps $s'_1$ to the state $s'_2$ and nothing else.  Since both $\nu_1$ and $\nu_2$ are Dirac distributions, that covers all of their supports and is a bijection.  It follows from  $s'_2$ being either $\mathsf{add}(s'_1, c, d)$ of $\mathsf{swap}(s'_1, c, d, d')$ for some $d'$ that $s'_1 \mathrel{\mathcal{R}_{s,d}^{2j\epsilon}} s'_2$.  Lastly, $|\ln \nu_1(s'_1) - \ln \nu_2(s'_2)| = |\ln 1 - \ln 1| = 0 \leq \delta$

\item Subsubcase: $c \neq c'$ and $n_{c'} < v$.  Mostly, as above.

\item Subsubcase: $n_{c'} = v$.  In this case, both states $s_1$ and $s_2$ will drop the data point $a$ and not store it.
For that $c'$, $\nu_1 = \mathsf{Dirac}(s_1)$ and $\nu_2 =  \mathsf{Dirac}(s_2)$  By assumption, $s_1 \mathrel{\mathcal{R}_{s,d}^{2j\epsilon}} s_2$.  $|\ln \nu_1(s_1) - \ln \nu_2(s_2)| = |\ln 1 - \ln 1| = 0 \leq \delta$

\item Subsubcase: $c = c'$ and $n_{c'} = v-1$.  If $s_2 = \mathsf{swap}(s_1,c,d,d')$ for some $d'$, then this subsubcase is the same as the first one.  Otherwise, the $s_1$ will store the data point, but $s_2 = \mathsf{add}(s_1, c, d)$ will not since it already has $n_{c'} + 1 = v$ data points.  Thus, $\nu_2 =  \mathsf{Dirac}(s_2)$ and $\nu_1 = \mathsf{Dirac}(s'_1)$ where $s'_1 = \langle 08, \langle B''_0, \ldots, B''_{t-1}\rangle, \langle n''_1, \ldots, n''_{t-1}\rangle, c', a, r', k'\rangle)$ where $B''_{c'} = B'_{c'} \bagunion \{a\}$, $n''_{c'} = n'_{c'} + 1$,  and for all $c'' \neq c'$, $B''_{c''} = B'_{c''}$ and $n''_{c''} = n'_{c''}$.  Thus, we have that $s_2 = \mathsf{swap}(s'_1, c, d, a)$.  
Thus, $s'_1 \mathrel{\mathcal{R}_{s,d}^{2j\epsilon}} s_2$.  
We use $\beta$ that maps $s'_1$ to $s_2$ and nothing else.  Since $|\ln \nu_1(s_1) - \ln \nu_2(s_2)| = 0$, $\nu_1  \mathrel{\lift{\mathcal{R}_{s,d,d'}^{2j\epsilon}, 0}} \nu_2$.
\end{itemize}

\item Subcase: $a \in R$.  In this case, we prove that such a $\delta$ exists using $\delta = 0$.  That is, we prove that $\nu_1 \mathrel{\lift{\mathcal{R}_{s,d}^{2j\epsilon},0}} \nu_2$.

Since $a \in R$, $s_1$ must have the form $\langle 16, \langle B'_0, \ldots, B'_{t-1}\rangle, \langle n_0, \ldots, n'_{t-1}\rangle, c', y', r', k'\rangle$.  Thus, $s_1 \wtrans^a \mathsf{Dirac}(s'_1)$ where 
\[ s'_1 = \langle 08, \langle B''_0, \ldots, B''_{t-1}\rangle, \langle n''_0, \ldots, n''_{t-1}\rangle, c'+1 \mod t, y', r', k'\rangle) \]
where $B''_{c+1 \mod t} = \emptybag$, $n''_{c+1 \mod t} = 0$, and for all $c'' \neq c+1 \mod t$, $B''_{c''} = B'_{c''}$ and $n''_{c''} = n'_{c''}$.

If $s_2 = \mathsf{add}(s_1,c,d)$, then $s_2 \wtrans^r \mathsf{Dirac}(s'_2)$ where
\[ s'_2 = \langle 16, \langle B''_0, \ldots, B''_{t-1}\rangle, \langle n''_{0}, \ldots n''_{t-1}\rangle c'+1 \mod t, y', r', k'\rangle)\] 
where $B''_{c+1 \mod t} = \emptybag$, $B''_c = B'_c \bagunion \{d\}$, $n''_{c+1 \mod t} = 0$, and for all $c'' \neq c+1 \mod t$, $B''_{c''} = B'_{c''}$ and $n''_{c''} = n'_{c''}$.
Since $j > 0$, $c+(t-j) \mod t \neq c$.  Thus, the slot by which $s_1$ differs from $s_2$ will remain unchanged, and $s'_1 = \mathsf{add}(s'_2,c,d)$.

By similar reasoning, if $s_2 = \mathsf{swap}(s_1,c,d,d')$ for some $d'$,  $s'_1 = \mathsf{swap}(s'_2,c,d,d')$.  Thus, either way, $s'_1 \mathrel{\mathcal{R}_{s,d}^{2j\epsilon}} s'_2$.
$\beta$ that maps the one state of $\supp(\nu_1)$ to the one state of $\supp(\nu_2)$ shows that $\nu_1 \mathrel{\lift{\mathcal{R}_{s,d}^{2j\epsilon},0}} \nu_2$ since $|\ln \mu_1(s'_1) - \ln \mu_2(\mathsf{add}(s'_1,c,d))| = |\ln 1 - \ln 1| = 0 \leq \delta$.

\item Subcase: $a \in Q$.  In this case, we prove that such a $\delta$ exists using $\delta = 2\epsilon$.  That is, we prove that $\nu_1 \mathrel{\lift{\mathcal{R}_{s,d}^{2j\epsilon-2\epsilon},2\epsilon}} \nu_2$.
In this case, $s_1$ has the form $\langle 08, \langle B'_1, \ldots, B'_t\rangle, c', y', r', k'\rangle$.  $\nu_1$ is such that
\[ \nu_1(\langle 16, \langle B'_0, \ldots, B'_{t-1}\rangle, \langle n'_0, \ldots n'_{t-1}\rangle, c', a, r'', \kappa_a\rangle) = \Pr\left[\kappa_{a}\left(\bigbagunion_{\ell=0}^{t-1} B'_{\ell}\right) = r''\right] \] 
and $\nu_1(s'_1) = 0$ for all other states $s'_1$.
$\nu_2$ is either such that 
\begin{multline*}
\nu_2(\langle 16, \langle B'_0, \ldots, B'_c \bagunion \{d\}, \ldots B'_{t-1}\rangle, \langle n'_0, \ldots n'_c+1, \ldots n'_{t-1}\rangle, c', a, r'', \kappa_a\rangle)\\ = \Pr\left[\kappa_{a}\left(\bigbagunion_{\ell=0}^{t-1} B'_{\ell} \bagunion \{d\}\right) = r''\right]
\end{multline*}
or
\begin{multline*} 
\nu_2(\langle 16, \langle B'_0, \ldots, B'_c \bagunion \{d\}-\{d'\}, \ldots B'_{t-1}\rangle, \langle n'_0, \ldots n'_{t-1}\rangle, c', a, r'', \kappa_a\rangle)\\ = \Pr\left[\kappa_{a}\left(\bigbagunion_{\ell = 0}^{t-1} B'_{\ell} \bagunion \{d\} - \{d'\}\right) = r''\right]
\end{multline*}
for some $d'$ and $\nu_1(s'_2) = 0$ for all other states $s'_2$.
Let $\mathcal{B}$ denote which of $\bigbagunion_{\ell = 0}^{t-1} B'_{\ell} \bagunion  \{d\}$ and $\bigbagunion_{\ell = 0}^{t-1} B'_{\ell} \bagunion \{d\} - \{d'\}$ it is.
Either way $\bigbagunion_{\ell = 0}^{t-1} B'_{\ell}$ and $\mathcal{B}$ differ by at most two elements
Since $\kappa_{a}$ has $\epsilon$-differential privacy, we know that for any $r''$, 
\begin{align*}
\Pr\left[\kappa_{a}\left(\bigbagunion_{\ell=0}^{t-1} B'_{\ell}\right) = r''\right] &\leq \e{2\epsilon} * \Pr\left[\kappa_{a}(\mathcal{B}) = r'' \right]
\end{align*}
Thus,
\begin{multline}
\nu_1(\langle 16, \langle B'_0, \ldots, B'_{t-1}\rangle, c', a, r'', \kappa_a\rangle) 
\\\leq \e{2\epsilon} * \nu_2(\langle 16, \langle B'_0, \ldots, B'_c \bagunion \{d\}, \ldots B'_{t-1}\rangle, c', a, r'', \kappa_a\rangle)\label{eqn:ooa-add}
\end{multline}
and
\begin{multline}
\nu_1(\langle 16, \langle B'_0, \ldots, B'_{t-1}\rangle, c', a, r'', \kappa_a\rangle) 
\\\leq \e{2\epsilon} * \nu_2(\langle 16, \langle B'_0, \ldots, B'_c \bagunion \{d\} - \{d'\}, \ldots B'_{t-1}\rangle, c', a, r'', \kappa_a\rangle)\label{eqn:ooa-swap}
\end{multline}

Similarly, 
\begin{multline}
\nu_2(\langle 16, \langle B'_0, \ldots, B'_c \bagunion \{d\}, \ldots B'_{t-1}\rangle, c', a, r'', \kappa_a\rangle) 
\\\leq \e{2\epsilon} * \nu_1(\langle 16, \langle B'_0, \ldots, B'_{t-1}\rangle, c', a, r'', \kappa_a\rangle)\label{eqn:oao-add}
\end{multline}
and
\begin{multline}
\nu_2(\langle 16, \langle B'_0, \ldots, B'_c \bagunion \{d\} - \{d'\}, \ldots B'_{t-1}\rangle, c', a, r'', \kappa_a\rangle) 
\\\leq \e{2\epsilon} * \nu_1(\langle 16, \langle B'_0, \ldots, B'_{t-1}\rangle, c', a, r'', \kappa_a\rangle)\label{eqn:oao-swap}
\end{multline}

To show that $\nu_1 \mathrel{\lift{\mathcal{R}_{s,d}^{2j\epsilon-2\epsilon},2\epsilon}} \nu_2$, we use a function $\beta$.  In the case where $s_2 = \mathsf{add}(s_1,c,d)$, $\beta$ maps each state $s'_1$ of $\supp(\mu_1)$ to $\mathsf{add}(s'_1,c,d)$.  
To show that $\beta$ is a bijection from $\supp(\mu_1)$ to $\supp(\mu_2)$ note that $\mathsf{add}(\cdot, c, d)$ is a bijection and that Lines~\ref{eqn:ooa-add} and~\ref{eqn:oao-add} imply that $s'_1$ is in $\supp(\mu_1)$ iff $\mathsf{add}(s'_1,c,d)$ is in $\supp(\mu_2)$.  

In the case where $s_2 = \mathsf{swap}(s_1,c,d,d')$, $\beta$ maps each $s'_1$ to $\mathsf{swap}(s'_1,c,d,d')$.
To show that $\beta$ is a bijection from $\supp(\mu_1)$ to $\supp(\mu_2)$ note that $\mathsf{swap}(\cdot, c, d,d')$ is a bijection and that Lines~\ref{eqn:ooa-swap} and~\ref{eqn:oao-swap} imply that $s'_1$ is in $\supp(\mu_1)$ iff $\mathsf{swap}(s'_1,c,d,d')$ is in $\supp(\mu_2)$.  

Since $\mathcal{R}_{s,d}^{2(j-1)\epsilon} = \mathcal{R}_{s,d}^{2j\epsilon-2\epsilon}$, for all $r''$, 
$s'_1 \mathrel{\mathcal{R}_{s,d}^{2j\epsilon-2\epsilon}} \beta(s'_1)$.
Furthermore, for all $s'_1$ in $\supp(\mu_1)$, $|\ln \mu_1(s'_1) - \ln \mu_2(\beta(s'_1))| \leq \epsilon \leq 2\epsilon \leq \delta$ from Lines~\ref{eqn:ooa-add}, \ref{eqn:ooa-swap}, \ref{eqn:oao-add}, and~\ref{eqn:oao-swap}.
\end{itemize}

This completes the proof of the lemma.

Since $\mathcal{R}^{2j*\epsilon}_{s,d}$ covers $s$ and $d$ for all states $s$ and data points $d$ of the automaton $\exparamM$, Lemma~\ref{lem:near-system-unwinding} and Theorem~\ref{thm:unwinding} together prove that the automaton has $(2t*\epsilon)$-differential noninterference.

\section{The $\mathtt{isInLiftedRelation}$ Algorithm}
\label{app:lift-algo}

The reduction used by $\mathtt{isInLiftedRelation}$ is shown in Figure~\ref{fig:isInLifetedRelation}.  
\begin{figure}
\mbox{}$\mathtt{isInLiftedRelation}(S_\bot, \mathsf{R}, \delta, \nu_1, \nu_2)$\\
\mbox{}\s $V_{\mathsf{L}} := \{\}$\\
\mbox{}\s $V_{\mathsf{R}} := \{\}$\\
\mbox{}\s $E := \{\}$\\
\mbox{}\s for all $x_1 \in S_\bot$\\
\mbox{}\s\s if $\nu_1(x_1) > 0$,\\
\mbox{}\s\s\s add $x_1$ to $V_{\mathsf{L}}$\\
\mbox{}\s for all $x_2 \in S_\bot$\\
\mbox{}\s\s if $\nu_2(x_2) > 0$,\\
\mbox{}\s\s\s add $x_2$ to $V_{\mathsf{R}}$\\
\mbox{}\s for all $x_1 \in V_{\mathsf{L}}$\\
\mbox{}\s\s for all $x_2 \in V_{\mathsf{R}}$\\
\mbox{}\s\s\s if($x_1 \mathsf{R} x_2$ and $|\ln \nu_1(x_1) - \ln \nu_2(x_2)| \leq \delta)$\\
\mbox{}\s\s\s\s add edge $\langle x_1, x_2\rangle$ to $E$\\
\mbox{}\s return $\mathtt{HopcroftKarpHasPerfectMatching}(\langle V_{\mathsf{L}}, V_{\mathsf{R}}, E\rangle$)
\caption{Algorithm for checking  $\delta$-approximate lifting of relations.}
\label{fig:isInLifetedRelation}
\end{figure}
First the algorithm constructs the bipartite graph for the reduction and then uses the Hopcroft-Karp algorithm~\cite{hk73n52}.  This algorithm returns if and only if there exists a \emph{perfect matching} $M$ for the graph.  A perfect matching $M$ for a bipartite graph $\langle V_{\mathsf{L}}, V_{\mathsf{R}}, E\rangle$ is a subset of $E$ such that for every vertex $v \in V = V_{\mathsf{L}} \cup V_{\mathsf{R}}$ is incident to exactly one edge in $M$.  

(Since $\supp(\nu_1)$ and $\supp(\nu_2)$ might not be disjoint, but $V_{\mathsf{L}}$ and $V_{\mathsf{R}}$ must be disjoint, we should tag the states $x_1$ and $x_2$ differently before adding them to the sets.  However, for readability, we do not explicitly do this tagging.)

\begin{proposition}\label{prp:lift-algo-correct}
For all sets $S$, relations $\mathsf{R}$ over $S$, non-negative reals $\delta$, and distributions $\nu_1$ and $\nu_2$ over $S$,
$\mathtt{isInLiftedRelation}(S, \mathsf{R}, \delta, \nu_1, \nu_2)$ returns true iff $\nu_1 \mathrel{\lift{\mathsf{R},\delta}} \nu_2$.
\end{proposition}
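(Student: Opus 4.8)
The plan is to establish a tight correspondence between the witnesses demanded by Definition~\ref{def:approx-lifting} and the perfect matchings of the bipartite graph $G = \langle V_{\mathsf{L}}, V_{\mathsf{R}}, E\rangle$ that $\mathtt{isInLiftedRelation}$ builds, and then to appeal to the correctness of the Hopcroft--Karp algorithm~\cite{hk73n52} for deciding whether such a matching exists. First I would read off from the first two loops that the algorithm sets $V_{\mathsf{L}} = \supp(\nu_1)$ and $V_{\mathsf{R}} = \supp(\nu_2)$ (up to the implicit tagging that keeps the two vertex sets disjoint), and from the third loop that $\langle x_1, x_2\rangle \in E$ exactly when $x_1 \mathrel{\mathsf{R}} x_2$ and $|\ln \nu_1(x_1) - \ln \nu_2(x_2)| \le \delta$ --- precisely the two conditions imposed on $\beta$ in Definition~\ref{def:approx-lifting}. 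Note that all logarithms involved are of strictly positive reals, since membership in $V_{\mathsf{L}}$ or $V_{\mathsf{R}}$ forces the corresponding probability to be nonzero.

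For one direction, suppose the algorithm returns true, so $G$ has a perfect matching $M \subseteq E$. Since $M$ is incident to every vertex of $V_{\mathsf{L}} \cup V_{\mathsf{R}}$ exactly once, the map $\beta$ sending each $x_1 \in \supp(\nu_1)$ to the unique $x_2$ with $\langle x_1, x_2\rangle \in M$ is a well-defined bijection from $\supp(\nu_1)$ onto $\supp(\nu_2)$; in particular $|\supp(\nu_1)| = |\supp(\nu_2)|$. Because $\langle x_1, \beta(x_1)\rangle \in E$, the edge-admission test gives $x_1 \mathrel{\mathsf{R}} \beta(x_1)$ and $|\ln \nu_1(x_1) - \ln \nu_2(\beta(x_1))| \le \delta$, so $\beta$ witnesses $\nu_1 \mathrel{\lift{\mathsf{R},\delta}} \nu_2$.

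For the other direction, suppose $\nu_1 \mathrel{\lift{\mathsf{R},\delta}} \nu_2$ with witnessing bijection $\beta : \supp(\nu_1) \to \supp(\nu_2)$, and set $M = \set{\langle x_1, \beta(x_1)\rangle}{x_1 \in \supp(\nu_1)}$. Each such pair satisfies both edge conditions by the defining properties of $\beta$, so $M \subseteq E$; and since $\beta$ is a bijection, every $x_1 \in V_{\mathsf{L}}$ is incident to exactly the edge $\langle x_1, \beta(x_1)\rangle$ of $M$ and every $x_2 \in V_{\mathsf{R}}$ to exactly the edge $\langle \beta^{-1}(x_2), x_2\rangle$, making $M$ a perfect matching. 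By the correctness of $\mathtt{HopcroftKarpHasPerfectMatching}$, the algorithm returns true.

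I expect the argument to be essentially bookkeeping; the one point needing a touch of care is the observation that a perfect matching, by covering both sides of the partition, automatically forces $\supp(\nu_1)$ and $\supp(\nu_2)$ to have equal cardinality and yields a genuine bijection rather than a mere partial injection --- exactly the onto requirement built into Definition~\ref{def:approx-lifting}. A secondary subtlety, flagged in the figure's parenthetical remark, is that $\supp(\nu_1)$ and $\supp(\nu_2)$ need not be disjoint, so identifying $V_{\mathsf{L}}$ and $V_{\mathsf{R}}$ with the support sets is really up to the implicit tagging; I would acknowledge this and otherwise suppress it.
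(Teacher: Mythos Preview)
Your proposal is correct and follows essentially the same approach as the paper: both directions proceed by converting between a perfect matching of the bipartite graph and the witnessing bijection $\beta$ of Definition~\ref{def:approx-lifting}, then invoking the correctness of the Hopcroft--Karp algorithm. Your write-up is, if anything, slightly more careful in flagging the positivity of the logarithm arguments and the tagging issue than the paper's own proof.
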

\begin{proof}
By the correctness of the Hopcroft-Karp algorithm, $\mathtt{HopcroftKarpHasPerfectMatching}$ (and, thus, $\mathtt{isInLiftedRelation}$) will only return true if there exists a perfect matching $M$ for the graph.  

To prove only-if direction, assume that such an $M$ exists.  
Given a perfect matching $M$ of bipartite graph, for every $x_1 \in V_{\mathsf{L}}$ there exists a unique edge $e \in E$ such that there exists a $x_2 \in V_{\mathsf{R}}$ such that $e = \langle x_1, x_2\rangle$.  For each such $x_1$, denote the unique $x_2$ paired with it by this edge as $\beta_M(x_1)$.  $\beta_M$ is a function from $V_{\mathsf{L}}$ to $V_{\mathsf{R}}$ since for every $x_1 \in V_{\mathsf{L}}$, there exists exactly one such edge and, thus, exactly one such $x_2$, which must be in $V_{\mathsf{R}}$ since the graph is bipartite.  Furthermore, $\beta_M$ is a bijection since every $x_2$ in $V_{\mathsf{R}}$ must be incident to exactly one edge in the perfect matching $M$.

Since $V_{\mathsf{L}} = \supp(\nu_1)$ and $V_{\mathsf{R}} = \supp(\nu_2)$, $\beta_M$ is a bijection from $\supp(\nu_1)$ to $\supp(\nu_2)$.
Since $x_1$ and $\beta_M(x_1)$ are connected by an edge, $x_1 \mathrel{\mathsf{R}} \beta_M(x_1)$ and $|\ln \nu_1(x_1) - \ln \nu_2(\beta_M(x_1))| \leq \delta$.
Thus, the bijection $\beta_M$ is such that for all $x_1 \in \supp(\nu_1)$, $x_1 \mathrel{\mathsf{R}} \beta_M(x_1)$ and $|\ln \nu_1(x_1) - \ln \nu_2(\beta_M(x_1))| \leq \delta$.  This implies that $\nu_1 \mathrel{\lift{\mathsf{R},\delta}} \nu_2$.

To prove the if-direction, assume that $\nu_1 \mathrel{\lift{\mathsf{R},\delta}} \nu_2$.  Then there exists a bijection $\beta$ from $\supp(\nu_1)$ to $\supp(\nu_2)$ such that $x_1 \mathrel{\mathsf{R}} \beta(x_1)$ and $|\ln \nu_1(x_1) - \ln \nu_2(\beta(x_1))| \leq \delta$.  Let $M_\beta$ be the set such that $\langle x_1, x_2\rangle \in M_\beta$ iff $\beta(x_1) = x_2$.  $M_\beta$ is a subset of $E$ since $x_1 \in \supp(\nu_1)$, $\beta(x_1) \in \supp(\nu_2)$, $x_1 \mathrel{\mathsf{R}} \beta(x_1)$, and $|\ln \nu_1(x_1) - \ln \nu_2(\beta(x_1))| \leq \delta$ together imply that $\langle x_1, \beta(x_1)\rangle$ is in $E$.  $M_\beta$ is a perfect matching for the graph since $\beta$ is a bijection from $V_{\mathsf{L}} = \supp(\nu_1)$ to $V_{\mathsf{R}} = \supp(\nu_2)$.
\end{proof}

\begin{proposition}\label{prp:lift-algo-runtime}
$\mathtt{isInLiftedRelation}$ runs in $O(|S|^{2.5})$ time.
\end{proposition}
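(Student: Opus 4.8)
The plan is to walk through the algorithm of Figure~\ref{fig:isInLifetedRelation} phase by phase, bound the cost of each, and then observe that the matching computation dominates. First I would adopt the standard unit-cost model that is implicit elsewhere in the paper (e.g.\ in the analysis of $\mathtt{closure}$): arithmetic on the probabilities, evaluation of $\ln$, the comparison against $\delta$, and a membership query ``$x_1 \mathrel{\mathsf{R}} x_2$'' each take constant time. Under this assumption, the two loops that build $V_{\mathsf{L}}$ and $V_{\mathsf{R}}$ each iterate over $S_\bot$ and so run in $O(|S|)$ time; moreover $|V_{\mathsf{L}}| = |\supp(\nu_1)| \le |S|$ and $|V_{\mathsf{R}}| = |\supp(\nu_2)| \le |S|$, since both supports are subsets of $S$.

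Next I would analyze the double loop that constructs the edge set $E$: it performs $|V_{\mathsf{L}}| \cdot |V_{\mathsf{R}}| \le |S|^2$ iterations, each of constant cost, so this phase runs in $O(|S|^2)$ time and consequently $|E| \le |S|^2$. The final step invokes the Hopcroft--Karp algorithm on the bipartite graph $\langle V_{\mathsf{L}}, V_{\mathsf{R}}, E\rangle$; by~\cite{hk73n52} this runs in $O(|E|\sqrt{|V|})$ time, where $|V| = |V_{\mathsf{L}}| + |V_{\mathsf{R}}| \le 2|S|$. Substituting $|E| \le |S|^2$ and $|V| \le 2|S|$ yields $O(|S|^2 \cdot |S|^{1/2}) = O(|S|^{2.5})$ for this step. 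Combining the three phases gives $O(|S|) + O(|S|^2) + O(|S|^{2.5}) = O(|S|^{2.5})$, as claimed, the Hopcroft--Karp step being the bottleneck.

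I do not anticipate any genuine obstacle here; the argument is a routine accounting of costs plus a citation for the matching algorithm. The only point that needs a little care is the cost model: one must be explicit that the probabilities and the arithmetic/logarithm/comparison operations on them are manipulated in unit time, so that the edge-construction loop is honestly $O(|S|^2)$ rather than carrying hidden factors for the bit-length of the probabilities. With that caveat stated, the three bounds above suffice.
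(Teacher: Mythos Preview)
Your proposal is correct and follows essentially the same approach as the paper's proof: bound the graph-construction phase by $O(|S|^2)$, invoke the $O(e\sqrt{v})$ bound for Hopcroft--Karp with $e \le |S|^2$ and $v \le 2|S|$, and conclude $O(|S|^{2.5})$. The paper additionally remarks that the set insertions are constant-time because no duplicates are ever added, while you justify unit-cost operations via the cost model; these are complementary presentational points rather than substantive differences.
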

\begin{proof}
Given that we know that we never will attempt to add a duplicate element to any of the sets $V_{\mathsf{L}}$, $V_{\mathsf{R}}$, nor $E$, all the set operations may be done in constant time.  Thus, constructing the graph for the reduction operates in $O(|S|^2)$ time.  The Hopcroft-Karp perfect matching algorithm operates in $O(\sqrt{v} * e)$ time where $v$ is the number of vertices and $e$, the number of edges.  That is lower than $O(|S|^{2.5})$ since $e \leq v^2$ and $v = |V_{\mathsf{L}}| + |V_{\mathsf{R}}| \leq 2*|S|$.  Thus, the whole algorithm runs in $O(|S|^{2.5})$ time.
\end{proof}

\section{Proofs for the Checking Algorithm}
\label{app:checking-algo-proofs}

\paragraph{Proof of Lemma~\ref{lem:checking-algo-uw-sound}: The Soundness of $\mathtt{isUnwindFam}$}
Here $\rel$ represents the relation family $\mathcal{R}$ such that $\mathcal{R}^{\epsilon}$ is equal to $\rel[\lfloor \epsilon/\delta \rfloor]$ for $\epsilon$ such that $0 \leq \epsilon \leq t*\delta$.  If such a family is an unwinding family for transition system, then it is also one for the transition system with all the hidden states have been converted to the same one.

We prove a stronger fact that implies that $\mathcal{R}$ is an $(t*\delta)$-unwinding family for the converted transition system.  Namely, we show that the algorithm will only return true if for all $\epsilon$ from $[0,t\delta]$, for all $x_1$ and $x_2$ in $S_\bot$ such that $\langle x_1, x_2 \rangle \in \rel[\lfloor \epsilon/\delta\rfloor]$, for all $a$ in $I \cup R$, there exists $\nu_1$ such that $x_1 \wtrans^a \nu_1$ iff there exists $\nu_2$ such that $x_2 \wtrans^a \nu_2$, and when they do exist, either (1) $\nu_1 \mathrel{\lift{\rel[\lfloor \epsilon/\delta\rfloor]-0,0}} \nu_2$ or (2) $\nu_1 \mathrel{\lift{\rel[\lfloor\epsilon/\delta\rfloor-\delta],\delta}} \nu_2$.
Condition (1) is satisfied if for all $\langle x'_1, x'_2\rangle \in \rel[\lfloor \epsilon/\delta\rfloor]$, $\nu_1(x'_1) = \nu(x'_2)$.
Condition (2) is satisfied if for all $\langle x'_1, x'_2\rangle \in \rel[\lfloor \epsilon/\delta\rfloor-\delta]$, $|\ln \nu_1(x'_1) - \ln \nu(x'_2)| \leq \delta$.

The algorithm will only return true if none of the preceding \verb|return| statements return false.  Firstly, it must be the case that $|\rel| = t+1$.
Secondly, the outer most \verb|for| loop must finish executing without any of its \verb|return| statements being reached.  This will only happen if for all values of $i$ from the length of the array.  For each such value, the algorithm examines the relation $\rel[i]$, which is the relation used for all values of $\epsilon$ in $[0,t\delta]$ such that $\lfloor \epsilon/\delta\rfloor$ is equal to $i$.  Thus, by considering each value of $i$, the algorithm examines the intervals $[0,\delta)$, $[\delta, 2\delta)$, and so on up to $[(t-1)\delta, t\delta)$ and finally the point at $t\delta$.  Thus, it examines the whole range $[0,t\delta]$ as required by the above condition.

Each of these examinations consists of looking at every pair of $\langle x_1, x_2\rangle$ in the relation $\rel[i]$, and every action $a$ in $I \cup O$.  For each such action $a$ and pair, the algorithm first returns false if it is not the case that $x_1 \wtrans^a \nu_1$ iff $x_2 \wtrans^a \nu_2$ for some $\nu_1$ and $\nu_2$ since $T[x_1][a]$ is equal to $\nil$ only in the case where $x_1 \wtrans^a \nu_1$ for no $\nu_1$ (and likewise for $x_2$).

If false was not returned, the algorithm checks if it was because $\nu_1$ and $\nu_2$ both exist.  If this is not the case, the examination finishes as nothing more must be shown for this state-action pair.

In the case where $\nu_1$ and $\nu_2$ do exist, we know that $x_1$ and $x_2$ are actual states and $\bot$ since $T[\bot][a] = \nil$ for all $a$.  The examination then continues with the algorithm computing the values of $\nu_1$ and $\nu_2$ such that $x_1 \wtrans^a \nu_1$ and $x_2 \wtrans^a \nu_2$ as described above, which is well defined since $x_1$ and $x_2$ are actual states.  

Next, it checks if $\nu_1 \mathrel{\lift{\rel[i],0}} \nu_2$.  $\mathtt{isInLiftedRelation}(S_\bot, \rel[i], 0, \nu_1, \nu_2)$ will return true iff Condition (1) is satisfied.  If Condition (1) is satisfied, the examination is complete and algorithm does not return false on this execution of the loop's body.  

If Condition (1) is not satisfied, then algorithm next checks to see if Condition (2) holds.  For our restricted set of relation families, Condition (2) cannot hold if $i$ is $0$ and Condition (1) does not hold.  Thus, the next \verb|if| statement.  It uses $\mathtt{isInLiftedRelation}(S_\bot, \rel[i-1], \delta, \nu_1, \nu_2)$ to check if Condition (2) holds.  If any pair is not, the algorithm returns false.  If Condition (2) is satisfied, the examination is complete, and algorithm does not return false and this execution of the loop.

Thus, each execution of the loop will only complete without returning false if either Condition (1) or Condition (2) holds.  As the loop checks all the needed combinations of states and actions, the algorithm will only return true if the stronger fact that implements $\mathcal{R}$ is an unwinding relation is true.  

\paragraph{Proof of Lemma~\ref{lem:checking-algo-uw-time}: The Running Time of $\mathtt{isUnwindFam}$}
The conversion of all hidden actions to the same one runs in $O(|H|*|S|)$.

The outer most loop runs over the whole length of $\rel$.  The next loop is over every pair in $\rel[i]$ where $\rel[i]$ is a binary relation over states.  Thus, there are at most $|S|^2$ pairs in $\rel[i]$.  The next loop is over every action.  Thus, the body of this loop will be executed $O(t*|S|^2*|A|)$ times.

This body consists of four parts.  The first is a simple conditional taking constant time.  The second computes $\nu_1$ and $\nu_2$.  This takes $O(|H|*|S|+|S|^3)$ time.  Since the conversion of all hidden actions to the same one takes $|H|=1$, this is $O(|S|^3)$.  The third is a calls $\mathtt{isInLiftedRelation}$, which takes $O(|S|^{2.5})$ time.  The forth is a conditional and another call to $\mathtt{isInLiftedRelation}$ on $\rel[i-1]$, which takes $O(|S|^{2.5})$ time.   Thus, body is $O(|S|^3)$ time and the whole loop is $O(t*|A|*|S|^4)$.

The algorithm whole algorithm run in $O(|H|*|S| + t*|A|*|S|^4)$, which is $O(t*|A|*|S|^4)$ since $|H| \leq |A|$.

\paragraph{Proof of Theorem~\ref{thm:checking-algo-cover-sound}: The Soundness of $\mathtt{isAllCovered}$}
The algorithm will only return true if none of the preceding \verb|return| statements return false.  That is, the outer most \verb|for| loop must finish executing without any of its \verb|return| statements being reached.  This will only happen if for every reachable state $s$ and every data point $d$, either $T[s][d] = \nil$ or each of the following is true:
\begin{enumerate}
\item $\nu(\bot) = 0$ and $|\rels[s][d]| \neq t=1$;
\item for all states $s'$ such that $s' \in \supp(\nu)$, $\langle s, s'\rangle \in \rels[s][d]$; and
\item $\mathtt{isUnwindFam}(\langle S, I, O, T\rangle, \rels[s][d], \delta, t)$ returns true
\end{enumerate}
where $s \wtrans^d \nu$.  
In the case where $T[s][d] = \nil$, the trivial relation family that consists of only empty relations is a $(t*\delta)$-unwinding family for the automaton. 
In the case where $T[s][d] \neq \nil$, the three conditions above imply $\rels[s][d]$ is a $(t*\delta)$-unwinding family for the automaton by using Lemma~\ref{lem:checking-algo-uw-sound} on the last condition.  Either way, there exists a $(t*\delta)$-unwinding family that covers $s$ and $d$.  Thus, the body of the loop will return false unless there exists such an unwinding family.

As the algorithm checks every reachable $s$ for every $d$, the loop will not terminate without returning false unless the conditions of Theorem~\ref{thm:unwinding} holds.  Thus, the algorithm only returns true if the automaton has $(t*\delta)$-differential noninterference.

\paragraph{Proof of Theorem~\ref{thm:checking-algo-cover-time}: The Running Time of $\mathtt{isAllCovered}$}
Computing the reachable states can be done in time $O(|S|)$.

The outer most loop executes at most $|S|$ times.  The next loop executes at most $|D|$ times.  In the case where $T[s][d] \neq \nil$, the body takes $O(S^3)$ time to compute $\nu$, $O(|S|)$ for the inner loop, and $O(t*|A|*|S|^4)$ time for running the $\mathtt{isUnwindFam}$ algorithm (Lemma~\ref{lem:checking-algo-uw-time}).  Thus, the body takes $O(t*|A|*|S|^4)$ time and the whole algorithm takes $O(t*|D|*|A|*|S|^5)$ time.

\end{document}